\documentclass{article} % For LaTeX2e
\usepackage{iclr2026_conference,times}

\usepackage{amsthm}
\newtheorem{theorem}{Theorem}

% Optional math commands from https://github.com/goodfeli/dlbook_notation.
%%%%% NEW MATH DEFINITIONS %%%%%

\usepackage{amsmath, amsthm,amsfonts,bm}

% Mark sections of captions for referring to divisions of figures

% Highlight a newly defined term

% Figure reference, lower-case.

% Figure reference, capital. For start of sentence

% Section reference, lower-case.

% Section reference, capital.

% Reference to two sections.

% Reference to three sections.

% Reference to an equation, lower-case.
\def\eqref#1{equation~\ref{#1}}
% Reference to an equation, upper case

% A raw reference to an equation---avoid using if possible

% Reference to a chapter, lower-case.

% Reference to an equation, upper case.

% Reference to a range of chapters

% Reference to an algorithm, lower-case.

% Reference to an algorithm, upper case.

% Reference to a part, lower case

% Reference to a part, upper case

\def\1{\bm{1}}

% Random variables

% rm is already a command, just don't name any random variables m

% Random vectors

% Elements of random vectors

% Random matrices

% Elements of random matrices

% Vectors

\def\vr{{\bm{r}}}

\def\vu{{\bm{u}}}

% Elements of vectors

% Matrix

% Tensor
\DeclareMathAlphabet{\mathsfit}{\encodingdefault}{\sfdefault}{m}{sl}
\SetMathAlphabet{\mathsfit}{bold}{\encodingdefault}{\sfdefault}{bx}{n}

% Graph

\def\gG{{\mathcal{G}}}

% Sets

% Don't use a set called E, because this would be the same as our symbol
% for expectation.

\def\sU{{\mathbb{U}}}

% Entries of a matrix

% entries of a tensor
% Same font as tensor, without \bm wrapper

% The true underlying data generating distribution

% The empirical distribution defined by the training set

% The model distribution

% Stochastic autoencoder distributions

 % Laplace distribution

\newcommand{\R}{\mathbb{R}}

% Wolfram Mathworld says $L^2$ is for function spaces and $\ell^2$ is for vectors
% But then they seem to use $L^2$ for vectors throughout the site, and so does
% wikipedia.

 % See usage in notation.tex. Chosen to match Daphne's book.

\newtheorem{definition}{Definition}
\newtheorem{lemma}{Lemma} 
\newtheorem{remark}{Remark}

\newtheorem{corollary}{Corollary}
\usepackage{booktabs}
\usepackage{graphicx}
\usepackage{subcaption}
\usepackage{multirow}
\usepackage{hyperref}
\usepackage{cleveref}
\usepackage{algorithm}
\usepackage{algpseudocode}
\usepackage{tikz}

\usepackage{url}
\usepackage{enumitem}

\usepackage{xcolor}
\definecolor{green}{HTML}{43A047}
\newcommand{\green}[1]{\textcolor{green}{#1}}
\definecolor{blue}{HTML}{00BCD4}
\newcommand{\blue}[1]{\textcolor{blue}{#1}}
\definecolor{pink}{HTML}{E91E63}
\newcommand{\pink}[1]{\textcolor{pink}{#1}}
\definecolor{purple}{HTML}{3F51B5}
\newcommand{\purple}[1]{\textcolor{purple}{#1}}
\definecolor{orange}{HTML}{FF9800}
\newcommand{\orange}[1]{\textcolor{orange}{#1}}

\title{\centering Socially-Aware Recommender Systems Mitigate Opinion Clusterization}

% Authors must not appear in the submitted version. They should be hidden
% as long as the \iclrfinalcopy macro remains commented out below.
% Non-anonymous submissions will be rejected without review.

\author{
\textbf{Lukas Schüepp} \\
ETH Zurich \\
\texttt{lukaschu@ethz.ch}
\And 
\textbf{Carmen Amo Alonso}\\
Stanford University\\
\texttt{camoalon@stanford.edu}
\AND
\textbf{Florian Dörfler}\\
ETH Zurich \\
\texttt{dorfler@ethz.ch}
\And
\textbf{Giulia De Pasquale}\\
TU Eindhoven \\
\texttt{g.de.pasquale@tue.nl}
}

% The \author macro works with any number of authors. There are two commands
% used to separate the names and addresses of multiple authors: \And and \AND.
%
% Using \And between authors leaves it to \LaTeX{} to determine where to break
% the lines. Using \AND forces a linebreak at that point. So, if \LaTeX{}
% puts 3 of 4 authors names on the first line, and the last on the second
% line, try using \AND instead of \And before the third author name.

\iclrfinalcopy % Uncomment for camera-ready version, but NOT for submission.
\begin{document}

\maketitle

\begin{abstract}
Recommender systems shape online interactions by matching users with creators’ content to maximize engagement. Creators, in turn, adapt their content to align with users’ preferences and enhance their popularity. At the same time, users’ preferences evolve under the influence of both suggested content from the recommender system and content shared within their social circles. This feedback loop generates a complex interplay between users, creators, and recommender algorithms, which is the key cause of filter bubbles and opinion polarization. We develop a social network-aware recommender system that explicitly accounts for this user-creators feedback interaction and strategically exploits the topology of the user's own social network to promote diversification. Our approach highlights how accounting for and exploiting user's social network in the recommender system design is crucial to mediate filter bubble effects while balancing content diversity with personalization. Provably, opinion clusterization is positively correlated with the influence of recommended content on user opinions.  %By incorporating social network awareness in the recommender system, we showcase strategies to mitigate polarization both locally, within users’ social circles, and globally, across the overall users' population. %The proposed approach declusterizes opinions of agents for which the recommender system has large influence.
Ultimately, the proposed approach shows the power of socially-aware recommender systems in combating opinion polarization and clusterization phenomena.
\end{abstract}

\section{introduction}
The proliferation of streaming services along with e-commerce platforms has created a need for efficient content Recommender Systems (RS) \cite{Amazon,youtube,Netflix}. These systems match users with personalized selections drawn from a massive amount of digital content to enhance user experience and ultimately maximize engagement on the platform \cite{Recent_dev_in_Rec_Sys,compr_review}. On the other hand, by consistently promoting content that aligns with user preferences, RS narrow the diversity of information to which users are exposed, thereby fostering echo chambers and  opinion polarization \cite{rs_polarization}.
Designing RS with high user satisfaction while  countering negative global effects such as opinion clusterization has proven challenging despite numerous research efforts \cite{diversify_any_recommender,set_oriented_pers_rec}. A key question is how to balance individual user satisfaction with prevention of harmful outcomes at the societal level \cite{micro_macro_op_effects}.

%Given the vast scale and heterogeneity of online content, RS require principled and efficient mechanisms to align user preferences with relevant items.
The fundamental principle underlying RS techniques remains consistent: leveraging historical user interaction patterns and profile information to generate personalized recommendations. In this context, collaborative filtering, content-based filtering, and hybrid approaches that integrate both methodologies \cite{Recent_dev_in_Rec_Sys} have emerged as dominant paradigms in the field. However, while personalization enhances user satisfaction at the individual level, it has reinforced negative macroscopic phenomena, such as opinion radicalization \cite{closed_loop_opinion,micro_macro_op_effects,dual_influence}.
% In response to these systemic challenges, a growing body of research has proposed diversification strategies as countermeasures. These include top-k diversification methods \textcolor{red}{citation}, diversified collaborative filtering frameworks \textcolor{red}{citation}, and reinforcement learning approaches that incorporate diversification rewards into their objective functions \textcolor{red}{citation}. Such interventions aim to balance the trade-off between personalization accuracy and recommendation diversity, thereby mitigating the formation of filter bubbles and echo chambers while maintaining user satisfaction. 
While conventional countermeasures may prove effective in isolated experimental settings, they fail to account for the dynamic response of content creators who strategically adapt their material to target potential audiences \cite{dual_influence, dean2024usercreators}. Furthermore, most approaches ignore social interactions users have, limiting their capability to operate at a macroscopic level.

Individuals, in fact, do not act in isolation, but are embedded in social contexts where interactions with others influence their preferences and behaviors \cite{opiniondyn_tutorial,weightedmedian}. 
In this regard, a recommendation paradigm arises, known as Social RS, where the social network graph is exploited together with the user-item rating matrix in order to make more accurate and personalized recommendations \cite{srs_firstpaper,survey_SRS}.
\textcolor{black}{In contrast to Social RS, our focus is not on neighbours' opinions as a predictor of future preferences. Instead, we aim to dynamically exploit the structure of the social network as a control tool to shape the long-term opinion dynamics in a way that mitigates opinion clusterization.}
 
However, a critical yet underexplored dimension of social RS is how the social network can be leveraged to mitigate harmful content without compromising user satisfaction \cite{srs_trustworthy}.
We argue that, opinion polarization being a collective phenomenon, \textcolor{black}{and thus influenced by social interactions}, for the RS to mitigate such undesired effects, enhancing content diversity \textcolor{black}{for a single user} is not enough. \textcolor{black}{The embedding of the user in a social network is a key component in designing a trustworthy RS }\cite{srs_trustworthy,network_aware_rec_sys_via_feedback}. %The authors in \cite{diversification} propose a way to diversify strategies in \emph{top-k} recommendation algorithms by balancing relevance and diversity.

% At the individual level, users affect one another through their social networks, which in turn shape their preferences. In response, content creators adapt their material based on the observed engagement, seeking to gain priority from the recommender algorithm and thereby enhance their visibility \cite{xxx}. This interaction creates complex dynamics that extend far beyond traditional user-item matching strategies.

This paper addresses a fundamental question: How can RS efficiently leverage the social interactions between users to mitigate global clusterization effects, while simultaneously maintaining high levels of user satisfaction? We show that by explicitly modeling the dynamical interplay of users in the social network, creators' content and the RS, it becomes possible to achieve a better balance between user satisfaction and diversity of opinions.

\textbf{Contributions} \quad
We propose a novel \textcolor{black}{theoretical} framework that models the RS landscape with dynamic interaction between users, content creators, and the RS, where users are embedded in social networks. We leverage the user's network structure to develop an optimization-based RS that mitigates opinion clusterization while maintaining high user satisfaction. Unlike previous approaches, we explicitly model the interaction between users, creators, and the RS.
% providing a comprehensive solution to the clusterization problem in modern social media platforms. 
Our main contributions are as follows:
\vspace{-2mm}

\begin{itemize}[leftmargin=1em, itemsep=0pt]
\item We propose a framework that captures the dynamic interplay between socially-connected users, strategic content creators, and the RS, \textcolor{black}{and characterize the relationship between content personalization and opinion polarization}.

\item  \textcolor{black}{We show that a RS that greedily optimizes for user satisfaction leads to opinion cluster formation among creators.}
%\item We extend the optimization based RS into a purely data-driven framework and show how the social network of the users can be used to incorporate the RS into a scalable framework

\item \textcolor{black}{We propose a social-network–aware recommender, $RS(d)$, where the parameter $d$ (number of user hops) controls the trade-off between user satisfaction and the extent of creator clustering, with low d leading to higher satisfaction but more clusters, and high $d$ reducing clusters at the cost of satisfaction.}

%\item We provide a new optimization-based RS that explicitly incorporates social connections to reduce clusterization effects while maintaining a high level of user satisfaction.

\item We test our algorithm experimentally and showcase that when only accounting for engagement maximization, RS  increase opinion clusterization effects over the users population. 
\end{itemize}

\section{Related Work}

\textbf{Negative impacts of RS.}
RS algorithms have been linked to several undesired societal phenomena, including opinion polarization, filter bubbles, and echo chambers. The study \cite{polarization2} shows that link recommendations between highly similar nodes lead to network topologies that exacerbate opinion polarization. The work in \cite{topic_diversification} proposes a method to balance diversity and personalization in recommendation lists, enabling exploration of the full spectrum of users’ interests and demonstrating improved user satisfaction. Similarly, \cite{learning_to_recommend,diversified_recommendations,avoiding_monotony} introduce formal optimization frameworks that incorporate diversity objectives into the recommendation process and propose novel metrics to assess diversification quality beyond traditional accuracy measures. These studies collectively show that diversification can be enhanced without severely compromising accuracy, and in some cases even improving it. However, this body of work adopts a static perspective on recommender algorithms, overlooking the dynamic interactions between users and the RS.

\textbf{Opinion dynamics.}
Opinion dynamics studies how opinions evolve and spread among interacting agents within a social network \cite{opiniondyn_tutorial,weightedmedian,altafini,multi_topic_FJ,FJ}. In our work, we explicitly account for network influence on user preferences and assume that both users’ and creators’ preferences evolve dynamically according to an extended version of the Friedkin–Johnsen model \cite{FJ}. This extension incorporates multiple topics \cite{multi_topic_FJ}, where opinions evolve under the joint influence of connected users, recommended content, and each user’s own prejudice.

\textcolor{black}{
\textbf{Link recommendations.} A line of work analyzes the impact of link recommendation over opinions, when opinions follow a Friedkin-Johnsen dyanmics \cite{RelevanceConflict,polarizationDisagreement,TowardsConsensus,kuhne2025optimizing,ChitraMusco_RS}. In particular, \cite{RelevanceConflict} study the impact of addition of links in a social network relates to the level of conflict in the network.  \cite{TowardsConsensus} study how a centralized planner can alter the structure of a social network to reduce polarization. 
% They also analyze the setting where users'internal opinions are adversarially chosen and relate the planner's problem to the maximization of the network Laplacian's spectral gap. 
The works from \cite{polarizationDisagreement,kuhne2025optimizing,ChitraMusco_RS} study the design of link recommendations to jointly minimize opinion polarization and disagreement subject to some budget constraints.
}

\textbf{Performative prediction.}
Performative predictions support decisions that can influence the outcomes they aim to predict \cite{performative_prediction}. This is the case for RS, whose goal is to predict relevant content for users. The design and evaluation of RS is often approached from a supervised machine learning perspective, treating viewer preferences and the content catalog as static. In practice, however, RS interact with and shape the behavior of both viewers and content creators. This interaction generates a feedback loop between the system and its users \cite{Recent_dev_in_Rec_Sys}.
A recent line of work makes this feedback loop explicit by modeling RS–user interactions and studying how users’ opinions evolve under the influence of recommended content \cite{dean2022preference,yao2024user,closed_loop_opinion,modelling_closed_loop_op_for_social_media,ctrl_strat_for_rec_sys,dual_influence,dean2024accounting,network_aware_rec_sys_via_feedback}.
On the other hand, works such as \cite{RS_gametheory,hron2022modeling,jagadeesan2023supply,divers_content,yao2023bad} focus on dynamic adaptation by creators while treating users as static. The position paper from \cite{dean2024usercreators} proposes a unifying framework that views user–creator–recommender interactions as a dynamical system. 
% In this setting, we draw inspiration from \cite{dual_influence}, where both users and creators co-evolve within a feedback loop.  
\cite{dual_influence} adopt a model-based approach and show that such dynamics lead to opinion polarization, while standard diversity-promoting strategies are insufficient to mitigate it.
In contrast to \cite{dual_influence}, we consider users embedded in a social network, influenced not only by recommended content from creators but also by the content shared from other connected users. We demonstrate that, in this setting, balancing content diversity and personalization counteracts opinion clusterization and polarization.

\textcolor{black}{To better position our paper, the Table in Section \ref{sec:related_work} of the Appendix  provides a schematic summary of the related work and how we compare with it.}

\section{Problem Setup}

We consider a setup where users interact bidirectionally with content creators, mediated by a RS. Similar to previous works \cite{modelling_closed_loop_op_for_social_media, closed_loop_opinion}, opinions are considered as the driving factor behind user preferences. Formally, we consider two sets of agents (modeled as dynamical systems) engaging in $n$ different topics at each time step $t$:
\begin{itemize}
    \item \textbf{Users:}  $\displaystyle \mathcal{U}^t =\{ \displaystyle u _0^t, \displaystyle u _1^t,...,\displaystyle u _{N-1}^t\}$, where $u _i^t \in [-1,1]^n$ represents the opinion of user $i$ at time $t$. The $k$-th entry of $\displaystyle \vu _i^t$ indicates the opinion of user $i$ on item $k$ at time $t$. We define the global user opinion's vector at time $t$ as $\textbf{u}^{t}=\begin{bmatrix} (u_0^t)^\top & \dots & (u_{N-1}^t)^\top\end{bmatrix}^\top$.  %Users interact with each other via a social network, which we model as a graph and detail in the following subsection.    
    \item \textbf{Content creators:} $\displaystyle \mathcal{C}^t =\{ \displaystyle c _0^t, \displaystyle c _1^t,...,\displaystyle c _{M-1}^t\}$, where $ c _j^t \in [-1,1]^n$ represents the opinion of creator $j$ at time $t$. The $k$-th entry of $\displaystyle c _j^t$ is the opinion of creator $i$ on item $k$ at time $t$. We define the global creator opinion's vector at time $t$ as $\textbf{c}^{t}=\begin{bmatrix} (c_0^t)^\top & \dots & (c_{M-1}^t)^\top\end{bmatrix}^\top$.
    %\item \textbf{Recommeder System:} ... matches creator content to users based on their preferences
\end{itemize}

At each timestep $t$, content creators publish material reflecting their current opinion vectors in $\mathcal{C}^t$. We extend the framework introduced by \cite{dual_influence}, and explicitly account for the social network effects in the opinion evolution and overall user behavioral trends. In particular, we assume closed-loop interactions among all agents, creating a dynamic feedback system where: (1) recommended content influences user opinions over time, (2) content creators adapt their content strategies based on audience implicit feedback and (3) users live in a social network where they influence each other over time. While there is a mutual influence between users and creators, the difference between these two classes of agents lies in the nature of their interactions: 

\begin{itemize}
    \item \textbf{User-user interactions:} mediated via the social network. This is, two users influence one another if they are connected directly through their social network.
    \item \textbf{Creator-user interactions:} mediated via a RS. Specifically, the RS presents each user with a subset of content creators. In turn, content creators only receive feedback from the users they reach through the RS.
\end{itemize}

The dynamics resulting from these interactions are modeled as:
\begin{subequations}\label{eqn:dynamics}
\begin{align}
    &\displaystyle \mathbf u^{t+1} = \green{f}(\mathbf u^{t}) + \pink{h^t}(\mathbf c^t), \\
    & \displaystyle \mathbf c^{t+1} = \blue{p^t}(\mathbf u^{t}) + \orange{q}(\mathbf c^t),\\
    & \quad c^t_j \sim \purple{\mathcal R}(u^t_i), &\qquad \forall i=1,\dots,N \quad j=1,\dots,M, \label{eqn:R_stochastic}
\end{align}
\end{subequations}
where user $i$'s decision of engaging with content from creator $j$ at time $t$, is sampled from a probability distribution $\purple{\mathcal{R}}$ defined by the recommendation assigned to user's $i$ by the RS selection at time~$t$. The functional relationships for $\green{f},\pink h,\blue p,\orange q$ and definitions in \eqref{eqn:dynamics} are given by the multi-topic extended Friedkin-Johnsen model \cite{multi_topic_FJ} extended to include the influence of the RS in the same fashion as in \cite{ctrl_strat_for_rec_sys}. Detailed descriptions of the functions are discussed in the following sections. For a complete discussion of the model, we refer to \Cref{app:multi_topic}.

%After identifying the values determining the users opinion transition, the parameters are normalized such that $ B_i + \sum_{j \text{ s.t }(j,i) \in \mathcal{E}} \mA_{i,j} = 1$. 

%Figure 1 illustrates our complete system architecture. In the following subsections, we provide detailed descriptions of each component and their interaction mechanisms.

\begin{figure}[h!] 
    \centering
    \includegraphics[width=0.8\linewidth]{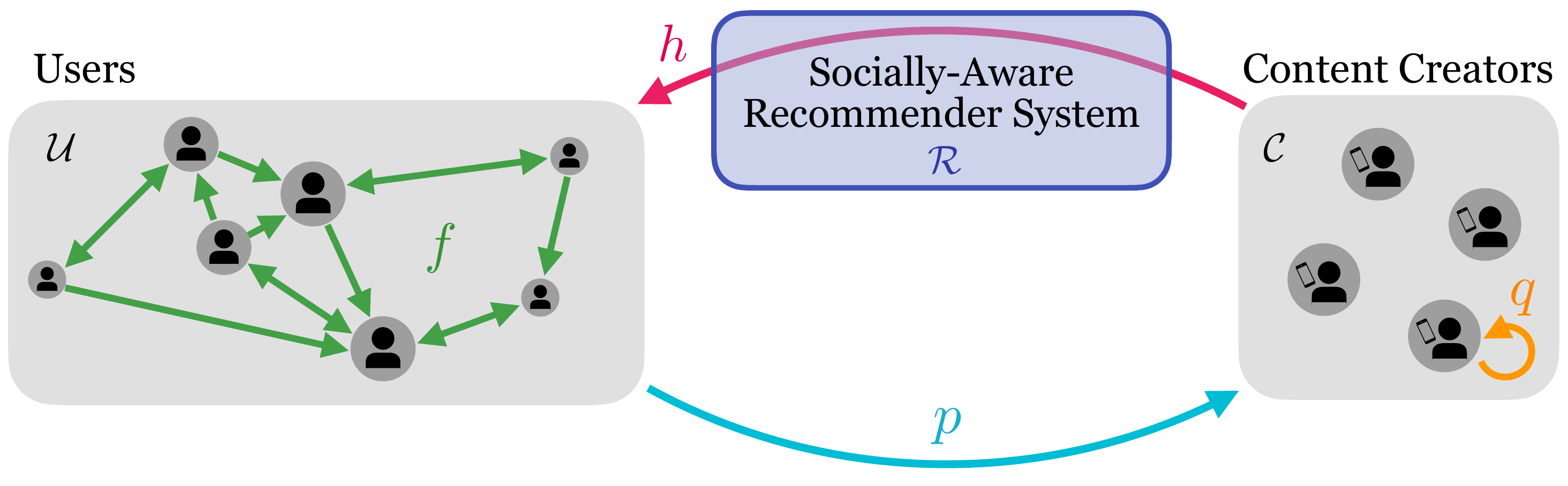}
    \caption{Overview of the dynamic framework described in equation \eqref{eqn:dynamics}. Users $\mathcal U$ influence each other via dynamics $\green{f(\cdot)}$, and are influenced by creators via $\pink{h^t(\cdot)}$, mediated by $\purple{\mathcal R}$. Content creators have internal opinion dynamics $\orange{q(\cdot)}$, and are influenced by users via $\blue{p^t(\cdot)}$.}
    \label{fig:overview}
\end{figure}

\subsection{User-user interaction} 
In our framework, users influence each other by means of interactions through their social network. 

\begin{definition}[Social Network]
    Let $ A \in \displaystyle \R^{N \times N}$ be an adjacency matrix, where $A_{ij} > 0$ indicates that user $j$ influences user $i$ with weight $A_{ij}$. The \emph{social network} is a  directed and weighted graph $\mathcal{G}(\mathcal{U}, \mathcal{E}, A)$, where each node $i \in \mathcal U$ corresponds to an individual user, and $\mathcal{E}$ denotes the set of edges representing social connections. An edge $(j,i) \in \displaystyle \mathcal{E}$ exists if and only if user $j$ has a direct influence on user $i$, i.e. $ A_{ij} \neq 0$.
    \label{def:social-network}
\end{definition}

Given Definition \ref{def:social-network} and the \cite{FJ} model for one single topic, i.e. $n=1$,
\begin{equation} \label{eqn:social-dyn}
    \green{f}(\mathbf{u}^{t}) = \green{(I_N-\Lambda)A}\mathbf{u}^{t} + \green{\Lambda} \mathbf{u}^{0},
\end{equation}
where $I_N$ is the $N$-dimensional identity matrix and $A\in [0,1]^{N\times N}$ is sub-stochastic and is the adjacency matrix of the social network. The matrix $\Lambda$ is a diagonal matrix whose elements ${\lambda}_i~\in~(0,1]$ capture the resistance to opinion change (``stubbornness") of users, and $ u_i^{0}$ the user's $i$ initial opinion (``prejudice"). Users are assumed a degree of critical thinking, i.e. the diagonal of $A$ is nonzero. We will adopt the multi-topic version of \eqref{eqn:social-dyn} from {\cite{multi_topic_FJ}} that is explained in detail in \Cref{app:multi_topic}.

We note that asymmetric social relationships, common in modern platforms with ``following" dynamics, are naturally captured by this representation. More specifically, for each user we can define the set of all users that have an influence on the user via its $d$-hop social network. The parameter $d$ controls the breadth of social context: larger values incorporate more distant connections. 

\begin{definition}[$d$-hop influencers]\label{dhop}
For a graph $\mathcal{G}(\mathcal{U},\mathcal{E},A)$, the $d$-hop influencers of user $i \in  \mathcal{U}$,
is $\text{in}_i(d) = \{j \in \mathcal{U}\mid  \text{dist}(j \rightarrow i) \leq d\}$, and $i \in \text{in}_i(d)$.
\end{definition}

\subsection{Creator-user interaction}

In our framework, creators and users influence each other by means of a RS. In particular, the RS functions as a probability distribution over content creators: for each user $i$, $c_j^t\sim \purple{\mathcal R}(u^t_i)$, where the distribution is dependent on each user's current opinion $u^t_i$. This is, the RS adjusts the probability distribution of content to the user's preferences; details on the specific recommendation strategies studied in this work are provided in the following sections.\footnote{A common choice of probability distribution is often the softmax of a utility function that maximizes confirmation bias \citep{conf_bias}. See \cite{choice_theory,harm_mitigation,Preference_amplification,Rec_effect_on_evolution_of_usr_choice} and references therein for additional details.}. 

\begin{definition}
    Let ${\mathcal{F}^t_1,\ldots,\mathcal{F}^t_{M}}$ be the set of disjoint partitions of $\mathcal U$ into subsets at time $t$. The \emph{user partition} $\mathcal{F}_j^t$  contains all users $i$ that consume content from creator $j$ at time $t$, i.e., $i \in \mathcal{F}_j^t$ if $c^t_j$ is sampled from $\purple{\mathcal R}(u^t_i)$ at time $t$. 
\end{definition}

This allows for the definition of how content creators and users influence each other. 

The influence of content creators towards users follows as
\begin{equation}\label{eq:creator_to_user}
    \pink{h^t}(\textbf{c}^t) = \pink{(I_N - \Lambda)B^t} \textbf{c}^t,  
\end{equation}
where $\pink{B^t} \in [0,1]^{N \times M}$, with $\pink{B^t}$ such that $[\green{A}\, \pink{B^t}]\vec{1}_{N+M}=\vec{1}_{N}$ similar to the setting in \cite{ctrl_strat_for_rec_sys}. $B^t$ describes the influence power of the recommendation on user $i$ opinions', where $B_{ij}^t\neq 0$ if $i \in \mathcal{F}_j^t$ at time $t$ and  $B_{ij}^t = 0$ otherwise.  This is, creator $j$ can only directly influence user's $i$ opinion at time $t$ if they consume their content. 

Similar to users, we assume a degree of critical thinking in the creators. This is captured by:
\begin{equation}\label{eq:creator_dynamics}
\orange{q}(\mathbf{c}^t) = \orange{(I_M-\Gamma)}\orange{E}  \mathbf{c}^t + \orange{\Gamma} \mathbf{c}^0,
\end{equation}
where $I_M$ is the $M$-dimensional identity matrix and $\orange{E},\orange{\Gamma} \in [0,1]^{M\times M}$ are diagonal matrices (no cross-talk between creators) that govern the temporal consistency of the creator's opinions, determining the influence of their previous stance. $\displaystyle c_j^{0}$ captures the creator's $j$ initial opinion.

The influence of users towards content creators follows as
\begin{equation}\label{eq:user_to_creator}
    \blue{p^t}(\mathbf{u}^{t}) = \blue{(I_M-\Gamma)C^t}  \mathbf{u}^{t},
\end{equation}
where $\blue{C^t} \in [0,1]^{M \times N}$, with $\blue{C^t}$ such that $[\orange{E}\, \blue{C^t}]\vec{1}_{M+N}=\vec{1}_{M}$. $C^t$ describes the influence power of user feedback on creator's $j$ opinions', where $C_{ji}^t\neq 0$ if $i \in \mathcal{F}_j^t$ at time $t$ and  $C_{ji} = 0$ otherwise. This is, creator $j$ only receives feedback from the set of users consuming their content.

\subsection{Recommender System}

The RS mediates directly the creator $\rightarrow$ user interaction, and indirectly the user $\rightarrow$ creator interaction. 
%It does so through a two-stage optimization approach fo each user $i$ by which at each time step $t$:
%\begin{enumerate}[label=(\roman*)]
%  \item It computes a reference recommendation $r_{i,\text{ref}}^t \in \sR^n$ for each user $i$.
%  \item It selects a subset $\mathcal{K}^t_i \subseteq \mathcal C^t$ of size $k < M$, consisting of the $k$ items from $\mathcal C^t$ that are closest to $r_{i,\text{ref}}$ in the $\ell_2$-norm, and assigns them a probability as per equation \eqref{eqn:choice-prob}.
%\end{enumerate}
The goal of a RS is to sort and present content to users to maximize their engagement. According to the confirmation bias theory \citep{conf_bias}, this goal can be achieved by recommending content that perfectly matches the users' existing opinion.
 However, such a greedy approach was shown to result in polarized opinions and clusterization behavior \cite{del2017modeling}. Hence, we study the problem of designing a RS that (a) maximizes users' satisfaction while (b) reducing clusterization effects on a global scale. We do so by having the RS explicitly account for the existence of a social network. To that end, we next provide formal definitions of satisfaction and clusterization.

\paragraph{Satisfaction.} Satisfaction is quantified by measuring the cumulative distance between a user's opinion and the selected content over the entire time sequence $\{0,...,T\}$.

Motivated by the fact that the engagement of users is driven by confirmation bias \cite{conf_bias}, the RS will recommend content to maximize user satisfaction, defined as follows.

\begin{definition}[User Satisfaction]\label{user_sat} The satisfaction of user $i$ (with opinion $u_i^T\in\mathcal{U}^T$) with creator $j$ (with opinion  $c_j^T\in\mathcal{C}^T$) at time $T$ is:
\begin{equation}
\text{sat}(u_i^T,c_j) =
\begin{cases}
-\dfrac{1}{T}\sum_{t=0}^{T-1}\|u_i^t - c_j^t\|_2, &  \text{if } c_j^t \sim \mathcal R_i^t, \\
0, & \text{otherwise.}
\end{cases}
\end{equation}
\end{definition}

Definition \ref{user_sat} formalizes the fact that users' engage more with content that is closely aligned with their own opinion. 
On a global scale, the RS wants to maximize global satisfaction defined as follows.
%The global satisfaction is then determined by averaging the user satisfaction across all users.
\begin{definition}[Global Satisfaction] \label{def:satisfaction}Let $\mathcal U^t$ be the set of $N$ users at time $t$, the global satisfaction at time $t$ is:
\begin{equation} \label{eq:satisfaction}
\text{sat}(\mathcal{U}^t) = \frac{1}{N}\sum_{i=0}^{N-1} \text{sat}(u_i^t).
\end{equation}
\end{definition}

\paragraph{Clusterization.} Opinion clusterization is quantified by the silhouette coefficient that each user has in its assigned cluster (as computed via $k$-means for the opinion vectors).\footnote{As is standard, the optimal number of clusters is determined by running $k$-means for various values of $k$ and selecting the value that yields the highest average silhouette.} A high silhouette coefficient indicates that a user's opinion is well-matched to its assigned cluster and poorly matched to neighboring clusters. If users can be clearly assigned to a cluster and thus have high opinion silhouettes, the opinion landscape is clusterized.
\begin{definition}[User Silhouette]\label{silhouette}
 Given an option $u_i^t \in \mathcal{F}_i$, the silhouette coefficient is defined as:
\begin{equation}
s(u_i^t) = \frac{b(u_i^t) - a(u_i^t)}{\max\{a(u_i^t), b(u_i^t)\}} \in [-1,1],
\end{equation}
where $a(u_i^t) = \frac{1}{|\mathcal{F}_i|-1} \sum_{u_j^t \in \mathcal{F}_i, j \neq i} ||u_i^t - u_j^t||_2$ is the average intra-cluster distance, and $b(u_i^t) = \min_{l \neq i} \frac{1}{|\mathcal{F}_l|} \sum_{\vu_k^t \in \mathcal{F}_l} ||u_i^t - u_k^t||_2$ is the minimum average outer-cluster distance.
\end{definition}

\begin{definition}[Global Clusterization]\label{de:global_clust}
 The global opinion clusterization of the set $\mathcal{U}^t$ is defined as
\begin{equation} \label{eq:clusterization}
    \text{cl}(\mathcal{U}^t) = \cfrac{1}{N}\sum_{i = 0}^{N-1} s(u_i^t).
\end{equation}
\end{definition}

In what follows, we present a RS design that ensures balance between satisfaction and clusterization by taking into account the social network dynamics.

\section{Recommender System Design}

In this section, we present the design of our \emph{socially-aware RS}. As is standard in personalized feed mechanisms employed by online platforms, we consider a two-stage algorithmic curation strategy that determines which subset of available content reaches each user. At each time step $t$, the RS:
\begin{enumerate}[label=(\roman*)]
    \item Computes a reference recommendation $\purple{r_i^t} \in \mathbb{R}^n$ for each user $i$ as a function of their opinion $u_i^t$, i.e. $\purple{r_i^t} := \purple{r}(u_i^t)$.
    \item Provides the \texttt{top-k} recommendations by presenting the user with the $k$ content items closest to this reference point through $k$-nearest-neighbor search:
\begin{equation}\label{eqn:set}
\purple{\mathcal{R}}(u_i^{t}) = \left\{c_j^{t}   
 \;\bigg|\; 
\lVert c_j^t - \purple{r_i^t} \rVert_2 \in k\text{-top } \underset{j} { \text{min}}
\{ \lVert c_j^t - \purple{r_i^t} \rVert_2\}_{j=1}^M
\right\}.
\end{equation}
\end{enumerate}

Each user $i$ samples one element (piece of content) from $\purple{\mathcal{R}}(u_i^{t})$ at every time step $t$, i.e., $c_j^t\sim \purple{\mathcal{R}}(u_i^{t})$ according to softmax with temperature parameter $ \delta^{-1}$. In what follows, we provide a theoretical result that informs how the choice of $\purple{r}(u_i^t)$ impacts user satisfaction and clusterization, and provide a design choice that, by mimicking the structure of the social network provably achieves a sweet spot between the two metrics.

\subsection{Recommendation Strategies for Satisfaction and Clusterization}

By studying the dynamics of the feedback mechanism in \ref{fig:overview}, it is possible to derive insights on how to design the RS to avoid clusterization while promoting user satisfaction. Consider the dynamics in \eqref{eqn:dynamics} where no RS is intervening and the user partitions are static. In this case, \eqref{eqn:dynamics} becomes: 
\begin{subequations}\label{eqn:simple_dynamics}
\begin{align}
    &\displaystyle \mathbf u^{t+1} = \green{(I_N-\Lambda)A}\mathbf{u}^{t} + \green{\Lambda} \mathbf{u}^{0} +  \pink{(I_N-\Lambda)}\pink{B}\mathbf c^t, \\
    & \displaystyle \mathbf c^{t+1} = \orange{(I_M-\Gamma)E}  \mathbf{c}^t + \orange{\Gamma} \mathbf{c}^0 +\blue{(I_M-\Gamma)}\blue{C}\mathbf u^{t},
\end{align}
\end{subequations}
where $\pink{B}$ and $\blue{C}$ are static, since the user partitions are static. This is, users engage with the same content creator over time. In what follows, we explore the influence of the dynamic components (social network $A$, creator's influence $B$, etc.) on the user's opinion evolution.

We also study how different recommendation strategies, influence the overall dynamic behavior and emphasize the key role the RS and the social connections have in this framework. To do that, we consider a recommendation strategy as in \eqref{eqn:set} with $k=1$. This implies that the RS becomes a deterministic map, i.e. $c_j^t = \purple{R}(u_i^t)$. Note that this relaxation allows for the bypassing of the stochastic sampling of content given a recommendation -- this assumption is solely used for the theoretical results; simulations consider the full stochastic setup. 

\iffalse
By studying the dynamics of the feedback mechanism in \Cref{fig:overview}, it is possible to derive insights on how to design the RS to avoid clusterization while promoting user satisfaction. To emphasize the impact of the network over opinion clusterisation, let us first consider the dynamics in \eqref{eqn:dynamics} with no interaction between users (i.e. $\green{A}$ is a diagonal matrix), an assumption that we will relax later, and a deterministic RS $k = 1$. In this case, \eqref{eqn:dynamics} becomes: 
\begin{subequations}\label{eqn:simple_dynamics}
\begin{align}
    &\displaystyle \mathbf u^{t+1} = \green{A}\mathbf{u}^{t} + \green{\Lambda} \mathbf{u}^{0} + \pink{B^t}\mathbf c^t, \\
    & \displaystyle \mathbf c^{t+1} = \blue{C^t}\mathbf u^{t} + \orange{E}  \mathbf{c}^t + \orange{\Gamma} \mathbf{c}^0,\\
    &\quad c_j^t = r_i^t(u_i^t).
\end{align}
\end{subequations}
Note that this relaxation allows for the bypassing of the stochastic sampling of content given a recommendation -- this assumption is solely used for the theoretical results; simulations consider the full stochastic setup. 
\fi

\subsubsection{Social Interactions mitigate the effect of Recommendations}

\begin{theorem}\label{thm}
Given the system with dynamics in \eqref{eqn:simple_dynamics}, the users' opinions reaches a steady state, i.e., $\exists \mathbf u^*\in\mathbb R^N$ such that $\lim_{t\rightarrow \infty}\ \mathbf{u}^t = \mathbf u^*$. Moreover, the influence of the social network ($A$) \textcolor{black}{towards each user $i\in\{1,\dots,N\}$, namely $\sum_j A_{i,j}$, and the recommended content, i.e. $B_i$} on $\mathbf u^*$ are complementary: increasing one decreases the other.
\end{theorem}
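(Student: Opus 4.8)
The plan is to prove the two assertions in turn: existence of a steady state by a spectral-radius/contraction argument on the stacked linear recursion, and the complementarity by reading the stationary per-user identity as a convex combination.

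\emph{Convergence.} First I would stack the two recursions in \eqref{eqn:simple_dynamics} into a single affine iteration $\mathbf x^{t+1}=M\mathbf x^t+\mathbf b$ with $\mathbf x^t=[(\mathbf u^t)^\top\ (\mathbf c^t)^\top]^\top$, where
\[
M=\begin{bmatrix}(I_N-\Lambda)A & (I_N-\Lambda)B\\ (I_M-\Gamma)C & (I_M-\Gamma)E\end{bmatrix},\qquad \mathbf b=\begin{bmatrix}\Lambda\mathbf u^0\\ \Gamma\mathbf c^0\end{bmatrix}.
\]
All factors are entrywise nonnegative, so $M\ge 0$ and $\rho(M)\le\|M\|_\infty$. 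The key computation is the row sums. For a user row $i$, the constraint $[A\,B]\mathbf 1_{N+M}=\mathbf 1_N$ gives $\sum_j A_{ij}+\sum_j B_{ij}=1$, so that row of $M$ sums to $(1-\lambda_i)\big(\sum_j A_{ij}+\sum_j B_{ij}\big)=1-\lambda_i$; symmetrically, $[E\,C]\mathbf 1_{M+N}=\mathbf 1_M$ together with $E$ diagonal makes a creator row sum to $(1-\gamma_j)(E_{jj}+\sum_i C_{ji})=1-\gamma_j$. With $\lambda_i\in(0,1]$ and creators retaining some attachment to their initial stance ($\Gamma\succ 0$), every row sum is strictly below one, hence $\|M\|_\infty<1$, so $\rho(M)<1$ and $I-M$ is invertible. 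Therefore $\mathbf x^t\to\mathbf x^*=(I-M)^{-1}\mathbf b$, and in particular $\mathbf u^t\to\mathbf u^*$. If some $\gamma_j=0$, that creator row sums to one; I would then note that creators are driven only by themselves and by the strictly contractive users, so every cycle in the support graph of $M$ meets a user node and a suitable power satisfies $\|M^k\|_\infty<1$, preserving the conclusion.

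\emph{Complementarity.} Writing the fixed-point equation row-wise, each steady-state opinion satisfies
\[
u_i^*=\lambda_i\,u_i^0+(1-\lambda_i)\Big(\underbrace{\textstyle\sum_j A_{ij}u_j^*}_{\text{social}}+\underbrace{\textstyle\sum_j B_{ij}c_j^*}_{\text{recommendation}}\Big).
\]
Setting $\alpha_i:=\sum_j A_{ij}$ and $\beta_i:=\sum_j B_{ij}$, the same stochasticity constraint forces $\alpha_i+\beta_i=1$. Normalizing the two inner sums, $u_i^*$ is a convex combination of the prejudice $u_i^0$ (weight $\lambda_i$), the social mean $\alpha_i^{-1}\sum_j A_{ij}u_j^*$ (weight $(1-\lambda_i)\alpha_i$), and the recommendation mean $\beta_i^{-1}\sum_j B_{ij}c_j^*$ (weight $(1-\lambda_i)\beta_i=(1-\lambda_i)(1-\alpha_i)$); these three weights are nonnegative and sum to one. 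Thus, within the user's openness budget $(1-\lambda_i)$, the share placed on the social network and the share placed on recommended content are $\alpha_i$ and $1-\alpha_i$, so increasing the aggregate social influence $\sum_j A_{ij}$ necessarily decreases the aggregate recommendation influence $\sum_j B_{ij}$, and vice versa — precisely the claimed complementarity.

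\emph{Main obstacle.} The delicate point is the meaning of ``influence on $\mathbf u^*$'': perturbing a row of $A$ (and hence of $B$) also shifts the neighbors' $u_j^*$ and the creators' $c_j^*$ through the equilibrium, so a naive global monotonicity of the scalar $u_i^*$ need not hold. I expect the intended and cleanest reading to be the local one above — the mixing weights in the stationary identity — rather than full comparative statics of the closed form $\mathbf u^*=[I_N-(I_N-\Lambda)H]^{-1}\big[\Lambda\mathbf u^0+(I_N-\Lambda)BG\Gamma\mathbf c^0\big]$, where $H=A+BG(I_M-\Gamma)C$ and $G=[I_M-(I_M-\Gamma)E]^{-1}$. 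Disentangling the direct weight effect from this indirect feedback is where any difficulty lies, and the convex-combination argument sidesteps it by reasoning at the fixed point.
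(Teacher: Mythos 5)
Your proposal is correct and follows the same skeleton as the paper's proof: stack \eqref{eqn:simple_dynamics} into a single affine iteration on $[(\mathbf u^t)^\top\ (\mathbf c^t)^\top]^\top$, establish convergence, and then extract the complementarity from the stochasticity constraint $[A\ B]\vec 1_{N+M}=\vec 1_N$, i.e.\ $\sum_j A_{ij}+B_i=1$ — that constraint is the key observation in both arguments. The differences are in execution. For convergence, the paper simply invokes an external result (Theorem~21 of the opinion-dynamics tutorial it cites), whereas you give a self-contained contraction argument via $\|M\|_\infty<1$; your treatment of the $\gamma_j=0$ edge case is slightly informal (a creator with $E_{jj}>0$ has a self-loop cycle that does not meet a user node, so "every cycle meets a user node" is not literally true — you need the row-sum-of-powers argument, or to observe that a fully self-referential creator is constant and can be treated as an exogenous input), but the paper does not address this case at all, so you are not behind it here. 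For complementarity, the paper expands $(I-J)^{-1}$ as a Neumann series and reads the constraint off the \emph{first-order} term $(I-\Lambda)(A\mathbf u^0+B\mathbf c^0)$, while you read the same constraint off the exact fixed-point identity as a convex-combination of weights. Your version is arguably the cleaner and more faithful rendering of what the theorem can actually claim, and your closing remark — that neither argument delivers genuine comparative statics of the scalar $u_i^*$ because perturbing $A$ also moves the neighbors' and creators' equilibria — correctly identifies a limitation that the paper's first-order reading quietly shares.
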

%\begin{proof}
%We refer the reader to \Cref{app:proof_th1}.
%\end{proof}

\iffalse
\begin{theorem}\label{clustering_thm}
    Given the dynamics in \eqref{eqn:simple_dynamics}, if a partition $j$ turns into static at a time $t_0$, i.e. $\mathcal{F}_j^t = \mathcal{F}_j^{t_0}$, $\forall t \geq t_0$, then there exists $ \text{ }u_{i}^*, c_{j}^* \text{ with } i \in \mathcal{F}_j$, such that:
    \begin{equation}
    \begin{aligned}
        &\lim_{t \rightarrow\infty} u_i^t = u_{i}^* \quad \forall i \in \mathcal{F}_j\\
        &\lim_{t \rightarrow\infty} c_j^t = c_{j}^* 
    \end{aligned}
    \end{equation} 
    And for all users $u_i \in \mathcal{F}_j$, $u_i^*= c_j^* + (I-A_{ii})^{-1}\Lambda_{ii}(u_{i}^0-c_j^0)$, $\forall i \in \mathcal{F}_j^{t_0}$.
    If 
    
\end{theorem} 
\begin{proof}
We refer the reader to 
\end{proof}
\fi 

The proof can be found in \Cref{app:proof_lemma1}. Intuitively, Theorem \ref{thm} states that when users consistently engage with the same creator over time, they gradually approach the creator's opinion. Creators, in turn, adapt to their fixed audience until an equilibrium is reached. The degree to which this happens depends on the user's and creator's stubbornness (as captured by $\Lambda$ and $\Gamma$, respectively) as well as the strength of the social connections (captured in $A$). The theorem shows that the social network and the recommended content execute two opposing forces over the steady state users' opinions. 

We note that the social network promotes dynamics partitions by propagating opinions through user connections. Yet, even connections within the same partition can mitigate the formation of static assignments through indirect network effects. Users within the same partition often maintain paths to users in other clusters, enabling opinion diffusion across partition boundaries.

%If the user has a high degree centrality, then he receives high social influence and the user will most likely not fall into one unique partition over time. Thus, the social network and greedy RS exert opposing forces: the former promotes opinion diversification through peer influence, while the latter reinforces existing preferences and promotes clustering. The critical insight is that global polarization emerges when social network effects are insufficient to counterbalance the RS's homogenizing influence. This tipping point, where clustering effects begin to dominate, depends on the relative strength of social connections versus recommendation influence. We showcase this phenomenon empirically in the experiments section.

\subsubsection{RS that Maximize Satisfaction increase Clusterization}

The main goal of a RS is to maximize user's engagement. As such, it is natural to consider RS designs that maximize user's satisfaction. A greedy RS is one that solely maximizes satisfaction by exploiting confirmation bias \cite{conf_bias}. That is, the RS provides the content creator that most closely aligns with users opinions, i.e. 
\begin{equation}\label{eqn:greedy_RS}
    R(u_i^t) = \text{argmin}_{c_j}(||c_j^t-u_i^t||_2).
\end{equation}
This strategy achieves the maximum satisfaction as per definition of \emph{user satisfaction} in \Cref{def:satisfaction}. 

\begin{lemma}\label{clustering_lemma}
\textcolor{black}{
Consider the system with dynamics as in \eqref{eqn:dynamics}. Let $A$ be a diagonal matrix and creators being stubborn, i.e. $\Gamma = I_M$. Then, a greedy RS as in \eqref{eqn:greedy_RS} induces a static user partition  $\mathcal{F}_1,...,\mathcal{F}_M$ and the users' opinions reach a steady state as per Theorem \ref{thm}. Moreover, for any user $i \in \mathcal{F}_j$, all opinions approach the recommended creator's content, namely $||u_i^t - c_j^t||_2$ decreases monotonically in time.  
}
% Consider the system with dynamics as in equation \ref{eqn:simple_dynamics}, and let $A$ be a diagonal matrix and $\Lambda \ll I_N$ with $N\gg M$. Then, a greedy RS as in \eqref{eqn:greedy_RS} increase clusterization in the proximity of content creators as in equation \ref{eq:clusterization}.
  %Given the system with dynamics as in equation \ref{eqn:simple_dynamics}, in the absence of social interactions,  i.e. $A$ is a diagonal matrix, for $N\gg M$ and for highly susceptible users, i.e. $\Lambda \approx 0$, greedy recommendations increase clusterization (equation \ref{eq:clusterization}) in the proximity of content creators.% Namely, there exists $t_0\geq 0$ such that $\mathcal{F}_j^t=\mathcal{F}_j^{t_0}$, for all $ t \geq t_0$.
\end{lemma}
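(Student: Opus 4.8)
The plan is to exploit the two structural hypotheses to collapse the coupled dynamics in \eqref{eqn:dynamics} into a collection of decoupled, per-user affine recursions, and then to show that the greedy assignment is self-reinforcing so that the partition never changes. First I would observe that $\Gamma = I_M$ freezes the creators: substituting into the creator update $\mathbf c^{t+1} = (I_M-\Gamma)E\mathbf c^t + \Gamma\mathbf c^0 + (I_M-\Gamma)C^t\mathbf u^t$ gives $\mathbf c^{t} = \mathbf c^0$ for all $t$. Next, since $A$ is diagonal and the greedy rule \eqref{eqn:greedy_RS} sends each user to a single creator, the sub-stochasticity constraint $[A\,B^t]\vec{1}_{N+M} = \vec{1}_N$ forces $B^t_{ij} = 1 - A_{ii}$ for the unique creator $j$ that user $i$ consumes at time $t$. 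The user dynamics then decouple across $i$, and for a user currently assigned to creator $j$ they read $u_i^{t+1} = (1-\lambda_i)A_{ii}\,u_i^t + \lambda_i u_i^0 + (1-\lambda_i)(1-A_{ii})\,c_j^0$, which is a convex combination of $u_i^t$, the prejudice $u_i^0$, and the frozen creator opinion $c_j^0$.

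The core of the argument, and the step I expect to be the main obstacle, is showing that the partition is static, i.e. that once the greedy rule assigns user $i$ to creator $j$ it never reassigns. The key geometric fact is that each creator's Voronoi cell $V_j = \{x : \|x - c_j^0\|_2 \le \|x - c_l^0\|_2 \ \forall l\}$ is convex, being an intersection of half-spaces. I would argue by induction on $t$ that $u_i^t \in V_j$, where $j$ is the creator selected at $t=0$. The base case is the definition of the greedy assignment at $t=0$, which also records that $u_i^0 \in V_j$. For the inductive step, note that $u_i^{t+1}$ is a convex combination of the three points $u_i^t$ (in $V_j$ by the hypothesis), $u_i^0$ (in $V_j$ by the base case), and $c_j^0$ (trivially in $V_j$); convexity of $V_j$ then places $u_i^{t+1} \in V_j$, so the greedy rule again selects $j$ (breaking ties consistently in favor of the incumbent). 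The subtlety is that the prejudice term $\lambda_i u_i^0$ could a priori pull the opinion out of the cell, which is precisely why it matters that the \emph{initial} opinion already lies in $V_j$; carrying $u_i^0 \in V_j$ through the induction is what makes the convex-combination argument close.

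With the partition shown to be static, the dynamics are exactly those of \eqref{eqn:simple_dynamics} with time-invariant $B$ and frozen $C$, so \Cref{thm} applies verbatim and yields convergence of $\mathbf u^t$ to a steady state $\mathbf u^*$. Finally, for the monotonicity claim I would solve the scalar recursion in closed form. Writing $\alpha = (1-\lambda_i)A_{ii} \in [0,1)$ (strict since $\lambda_i > 0$), $\beta = \lambda_i > 0$, and $\gamma = (1-\lambda_i)(1-A_{ii}) \ge 0$ with $\alpha+\beta+\gamma = 1$, the fixed point is $u_i^* = (\beta u_i^0 + \gamma c_j^0)/(\beta+\gamma)$ and one obtains $u_i^t - c_j^0 = \frac{\beta + \alpha^t\gamma}{\beta+\gamma}(u_i^0 - c_j^0)$. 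Since $\alpha^t$ decreases monotonically to $0$, the scalar coefficient $\frac{\beta+\alpha^t\gamma}{\beta+\gamma}$ decreases monotonically, hence $\|u_i^t - c_j^0\|_2$ decreases monotonically in $t$; because $c_j^t = c_j^0$, this is exactly the claimed monotone approach of each user to its recommended creator, which together with the static partition establishes the clusterization conclusion.
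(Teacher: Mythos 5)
Your proof is correct, but it reaches the conclusion by a genuinely different route than the paper. The paper's proof runs an induction on the \emph{ratio} of consecutive difference vectors: it eliminates the prejudice term by subtracting two consecutive one-step updates, obtains $(u_i^{t+1}-c_j) = \alpha^{t+1}(u_i^t-c_j)$ with $\alpha^{t+1} = 1+\eta-\eta/\alpha^t \in [\eta,1]$ for $\eta=(1-\Lambda_i)A_{ii}$, and reads off both collinearity of the iterates with $c_j$ and the monotone decrease of $\|u_i^t-c_j\|_2$; the static partition is then asserted from the distance decrease. You instead keep the prejudice term explicit, observe that $u_i^{t+1}$ is a convex combination of $u_i^t$, $u_i^0$, and $c_j^0$ with weights summing to one, and deduce invariance of the Voronoi cell $V_j$ by convexity (using that $u_i^0\in V_j$ from the base case); you then get monotonicity from the closed form $u_i^t-c_j^0=\tfrac{\beta+\alpha^t\gamma}{\beta+\gamma}(u_i^0-c_j^0)$. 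The two arguments are mathematically equivalent in content --- your closed form also exhibits the collinearity the paper proves inductively --- but your Voronoi-convexity step makes the ``partition never changes'' claim fully airtight, whereas the paper's ``this directly implies $u_i^{t+1}\in\mathcal F_j^{t+1}$'' tacitly relies on the iterates moving along the segment toward $c_j$ inside the (convex) cell; and your closed form additionally hands you the explicit limit $u_i^*=(\beta u_i^0+\gamma c_j^0)/(\beta+\gamma)$ and the geometric convergence rate for free, so you do not need to invoke Theorem \ref{thm} at all. The only points worth tightening are the tie-breaking convention on Voronoi boundaries (which you already flag) and the degenerate case $\lambda_i=1$, where $\alpha=\gamma=0$ and the opinion is constant, so monotonicity holds trivially.
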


The proof can be found in \Cref{app:proof_lemma1}. Intuitively, by static assignment of users to creators, users will fall into filter bubbles where they engage with creators who provide users with content that closely aligns with their opinions. This effect is aggravated when the users do not have social interactions to counterbalance the effect of the RS, as per Theorem \ref{thm}, since static user-creator mappings lead to the formation an equilibrium which fosters clusterization. Moreover, the closer a user's opinion gets to the creators' opinion, the smaller the chances for them to exit the bubble. The critical insight is that global polarization emerges when social network effects are insufficient to counterbalance the RS's homogenizing influence. This tipping point, where clustering effects begin to dominate, depends on the relative strength of social connections versus recommendation influence. We showcase this phenomenon empirically in the experiments section.

% Our claim follows from Lemma \ref{clustering_lemma} and by noting that greedy RS promote the same or very similar creators repeatedly. Since users adapt to creators' opinions over time (with the limiting case where they coincide with the creators' opinions), global opinion clusters will form around the creators' opinions.
% Intuitively, a greedy RS drives users towards similar content creators, who in turn adapt their content to match their audience's preferences, creating a feedback loop that reinforces clustering.

% \begin{proof}
% \begin{enumerate}
%     \item Show that maximizing user satisfaction (eq 7) is the same as $r_i^t= argmin_{c_j} ||c_t^t-u_i^t||^2$ s.t. $u^{t+1}=\sum_j A_{ij}u^t_j + B r_i^t$ and $c_{i,j} = \sum_j C_{ij}u^t_j + B c_j^t $ and $r_i = c_j$ WHERE (THIS IS THE KEY, I THINK) $C_ij = 0$ if $i\notin \mathcal F_j$ and not $0$ otherwise. If the fact that satisfaction is over time gives you trouble, just add the assumption that we are considering "immediate satisfaction", this is 1 time step.
%     \item I think you can easily show that the problem above is asymptotically stable, over $t$, which I think (should) lead to the global silhouette increasing monotonically over time. [If it doesn't, hack: we change the definition and define the clusterization metric to be a meassure for assymptotic stability so this works].
% \end{enumerate}
% \end{proof}

\subsubsection{RS that reduce Clusterization decrease Satisfaction}
We have shown that targeting only satisfaction, in the absence of social interactions, promotes opinion clusterization around creators. To counter the static assignment of each user to creators, the RS has to depart from engagement maximization and promote content diversity.

\begin{corollary}\label{diversification_lemma}
    Let $c_j^t=R(u_i^t)$ in \eqref{eqn:simple_dynamics} $\forall i,j$ result from a non-greedy RS, that is, $R(u_i^t) \neq \text{min}_j(||c_j^t-u_i ^t||_2)$. Then the user satisfaction in \eqref{eq:satisfaction} is suboptimal.
\end{corollary}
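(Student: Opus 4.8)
The plan is to reduce the statement to the pointwise optimality of the greedy rule, which was already noted to attain the maximal satisfaction. First I would unfold the definitions: combining \Cref{user_sat} and \Cref{def:satisfaction}, the time-$T$ global satisfaction induced by a recommendation policy $R$ equals
\[
\text{sat}(\mathcal U^T)=-\frac{1}{NT}\sum_{i=0}^{N-1}\sum_{t=0}^{T-1}\bigl\|u_i^t-c_{j(i,t)}^t\bigr\|_2,
\]
where $j(i,t)$ is the creator consumed by user $i$ at time $t$ under $R$. Hence maximizing satisfaction is the same as minimizing the cumulative consumed distance. The pointwise observation is that, for fixed current opinion $u_i^t$ and creator positions $\{c_j^t\}_j$, every summand obeys $\|u_i^t-c_{j(i,t)}^t\|_2\ge \min_j\|u_i^t-c_j^t\|_2$, with equality exactly when $R$ returns a minimizer—which is the defining property of the greedy rule \eqref{eqn:greedy_RS}.

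Next I would convert this into a trajectory-level comparison by a first-deviation coupling. Running the greedy policy and the given non-greedy policy $R$ from the same initial data $(\mathbf u^0,\mathbf c^0)$ under \eqref{eqn:simple_dynamics}, the two induced trajectories agree until the earliest time $t^*$ and user $i^*$ at which $R$ selects a creator that is not a closest one; by the hypothesis $R(u_i^t)\neq\argmin_j\|c_j^t-u_i^t\|_2$ such a $(i^*,t^*)$ exists, and there the consumed distance is strictly larger under $R$. To extend the loss beyond $t^*$ I would invoke the monotone behavior established in \Cref{clustering_lemma}: the recommendation enters the update solely through the pulling term $(I_N-\Lambda)B^t\mathbf c^t$, so the greedy choice both minimizes the instantaneous distance and moves the opinion toward the nearby consumed creator, keeping every later consumed distance no larger along the greedy trajectory.

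The main obstacle is exactly this feedback coupling: instantaneous optimality does not automatically yield optimality of the time-averaged functional, because a recommendation at time $t$ reshapes $\mathbf u^{t+1}$ and therefore the distances attainable at all later steps. The crux is to rule out a sacrifice-now-gain-later policy that could undercut greedy. I expect this to follow from the affine-contraction structure of \eqref{eqn:simple_dynamics}, whose only steering input pushes each user toward the content it consumes: deviating to a farther creator at $t^*$ raises the distance at $t^*$ and simultaneously displaces $u_{i^*}^{t^*+1}$ away from the closest creators, so no later term can compensate. Combining the strict loss at $t^*$ with this monotonicity yields total satisfaction strictly below that of the greedy (maximal) policy, i.e. suboptimality.
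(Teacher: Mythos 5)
Your first paragraph is, in essence, the paper's entire proof: the authors simply observe from \eqref{eq:satisfaction} that if $R(u_i^t)$ returns a creator $c_k^t$ with $\|u_i^t-c_k^t\|_2 > \min_j\|u_i^t-c_j^t\|_2$, then user $i$'s instantaneous satisfaction is reduced relative to the greedy choice, and they stop there. So your pointwise comparison is exactly the intended argument, and if the corollary is read as that weak, instantaneous statement, your proposal matches the paper and is complete after the first paragraph.

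Where you go further---the first-deviation coupling and the claim that ``no later term can compensate''---you have correctly identified a real subtlety that the paper's one-line proof silently ignores: after the first deviation the two trajectories differ, so the later summands are distances measured at different opinion states and cannot be compared term by term. However, your proposed resolution does not close this gap. \Cref{clustering_lemma} is proved only for diagonal $A$ and fully stubborn creators ($\Gamma=I_M$), and it establishes monotone decrease of $\|u_i^t-c_j\|_2$ \emph{along the greedy trajectory}; it provides no comparison between a greedy continuation and a non-greedy continuation launched from a different post-deviation state. The assertion that deviating at $t^*$ displaces $u_{i^*}^{t^*+1}$ so that no later term can compensate is not established and is not obviously true: the creators themselves move under \eqref{eqn:simple_dynamics}, and even with static creators a deviation could in principle steer the user toward a region where creators are more densely placed at later times. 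If you want the trajectory-level optimality of the greedy policy, the missing ingredient is a genuine exchange or dynamic-programming argument ruling out sacrifice-now-gain-later behavior; as written, neither the paper's proof nor your sketch supplies it, so you should either state and prove that stronger claim separately or retreat to the pointwise reading that the paper actually uses.
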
 
\begin{proof} 
The proof follows directly from \eqref{eq:satisfaction}. That is $c_k^t = R(u_i^t)$ with $||u_i^t-c_k^t||_2 > {\rm min}_j ||u_i^t-c_j^t||_2$, for some $j \in 0,...,M-1$, will lead to reduced satisfaction of user $i$.
\end{proof}

Together with Lemma \ref{clustering_lemma}, this statement promotes the search of a tradeoff between maximizing users satisfaction while countering opinion clusterization.

\subsection{A Socially-Aware Recommender System Mimics the Social Network}
Given that social connections naturally mitigate clustering, we propose a recommendation strategy that explicitly leverages network structure. Rather than optimizing solely for individual preferences, our approach expands on the greedy RS in a socially-aware manner. This is, the RS incorporates the opinions of users within a social neighborhood besides the individual opinion of the user. To do so, we will leverage the concept of \emph{$d$-hop influencers} from \Cref{dhop}.  %More specifically, for each user we can define the set of all users that have an influence on the user via $d$-steps.
%\begin{definition}[$d$-hop influencers]
%For a graph $\mathcal{G}(\mathcal{U},\mathcal{E},A)$, the $d$-hop influencers of user $i \in  \mathcal{U}$,
%is $\text{in}_i(d) = \{j \in \sU \mid  \text{dist}(j \rightarrow i) \leq d\}$, and $i \in \text{in}_i(d)$.
%\end{definition}
%The parameter $d$ controls the breadth of social context: larger values incorporate more distant connections. 
In particular, we design a RS that mimics the social influence by leveraging the mean opinion of the $d$-hop influencers of each user, and using this as the RS recommendation reference. 
% \textcolor{red}{ The proposed strategy takes advantage of the fact that users are homophilic, that is, users have the tendency to connect with other users with similar interests (citation). }
\begin{definition}[$d$-hop socially-aware RS]\label{def:d-hop}
Let $\text{in}_i(d)$ be the $d$-hop influencers of user $i$. The $d$-hop socially aware RS produces a recommendation reference that is given as:
\begin{equation}
    r(u_i^t) = \cfrac{1}{|\text{in}_i(d)|}\sum_{j \in \text{in}_i(d)}u_j^t.
\end{equation}
\end{definition}
\vspace{-3mm}
The design parameter $d$ controls the trade-off between personalization and diversification: $d=0$ recovers the greedy strategy in Lemma \ref{clustering_lemma}, while larger values incorporate broader social influence. 
% Since this approach explicitly mimics social influence mechanisms, we have the following:
% \begin{lemma}
%    Let $R(u_i)$ be a $d$-hop socially-aware RS with parameter $d$. Then, global clusterization as per \Cref{de:global_clust} decreases as $d$ increases.
% \end{lemma}

% The proof can be found in \Cref{app:lemma2}. Intuitively, by the dynamics in \eqref{eqn:simple_dynamics}, 
If the RS provides the $d$-hop neighborhood average opinion then users' opinion are pulled towards the average opinion of their neighborhood, including neighbors from other clusters. This mechanisms will reduce the separation between cluster. 
Increasing $d$ enhances content diversity at the cost of reduced user satisfaction, as recommendations deviate further from individual preferences. In the next section, we empirically analyze this trade-off and identify parameter values that minimize global clustering while maintaining user satisfaction.

\section{Experiments}
We evaluate the performance of our RS using the satisfaction and clusterization metrics as in \eqref{def:satisfaction} and \eqref{de:global_clust}. In a first experiment we study the closed loop interaction of our RS with a synthetic social network. We then demonstrate our findings on the Facebook-ego dataset \cite{ego_facebook_data}, which comprises the social network of $4039$ users.
% Our experimental setup assumes that recommender influence dominates over social network effects. 
% We further assume that user and creator stubbornness is low, $\Lambda\ll I_N$ and $\Gamma\ll I_M$  to accelerate convergence of the opinion dynamics. 
Throughout this section, the temperature parameter governing the users choice is chosen as $\delta^{-1} = 0.5$. We present our main findings below. 
% Additional experimental results and detailed analysis can be found in \Cref{appendix:k_analysis}.

\subsection{Experimental Setup For Synthetic Dataset}
We consider a network of $N=600$ users and $M=50$ content creators, with an average user in-degree of $11$. All interaction parameters governing the user-creator and creator-user dynamics in \eqref{eqn:dynamics} are detailed in \Cref{appendix:model_parameters}. For clarity of representation, we set the opinion dimension to $n=2$, with all user and creator opinions initialized uniformly at random within $[-1,1]^n$. The social network topology is randomly generated, with the probability of an edge of being present between users decreasing with their opinion distance.  As a result, users with closer opinions are more likely to be connected. This setup aligns with the homophily principle in social networks, which leads to more contact between similar users \cite{Homophilly_Pherson}. The specific connection probability function, is detailed in \Cref{appendix:connection_prob_synthetic}.

\subsection{Experimental Results on Synthetic Dataset}
The RS follows a \texttt{top-$k$} recommendation strategy with $k=5$. The impact of different $k$ and varying interaction parameters in \eqref{eqn:dynamics} on all performance metrics for the following settings is analyzed in \Cref{appendix:k_analysis_synthetic}.

\subsubsection{Clusterization and Satisfaction with different RS Strategies}
% In figure \cref{fig:dynamics_lam_0.0} the opinion landscapes of the creators is displayed after various different timesteps with $d=0$ (greedy RS). The user opinion clusters after k-means are displayed as elliposids with $2$ times the standard deviation along the axis with maximum and minimum variation, where the transparency is increased if the global clusterization is reduced. If the global opinion clusterization is lower than $0.5$ we do not display the clusters at all, as our results show that below this threshhold, clusters are very hard to identify. After $15$ timesteps, distinct clusters form around the creators. This effect increases for larger time horizons. 

We investigate three distinct RS strategies: (i)  RS only optimizes for user engagement (greedy RS with $d=0$), (ii) RS  only accounts for opinion diversity (\emph{socially aware RS} with $d=6$, high diversification), and (iii) hybrid RS that balances satisfaction and diversity (socially aware RS with $d=3$, intermediate localization). Strategy (i) only accounts for personalized recommendations, strategy (ii) emphasizes social interactions by considering broad network effects, while strategy (iii) seeks trades-off between user satisfaction and content diversity.

Figure~\ref{fig:dynamics_lam_0.0} shows the evolution of user and creator opinion landscapes at different time steps under the three proposed RS strategies. User opinion clusters, obtained via k-means clustering, are visualized as ellipsoids with axes corresponding to twice the standard deviation along the principal components. The ellipsoid transparency decreases with global clusterization. Clusters with global clusterization values below 0.5 are omitted, as they lack structural definition. The last column in \cref{fig:dynamics_lam_0.0} shows the negative global clusterization and user satisfaction over time. Ideally, one would like both curves to increase over time. We notice how the greedy RS (i) leads to the formation of distinct user clusters around creators, with this polarization effect intensifying over longer time horizons. The socially-aware RS (ii) reduces clusterization but at the cost of significantly lower user satisfaction. The hybrid RS (iii) maintains high satisfaction while effectively mitigating clusterization effects, achieving a balance between the two objectives.
\begin{figure*}[htbp]
    \centering
    \setlength{\tabcolsep}{2pt} % Minimal spacing between columns
    
    \begin{tabular}{c@{\hspace{0.1cm}}c@{\hspace{0.1cm}}c@{\hspace{0.1cm}}c}
        % Header row
        $t = 5$ & $t = 20$ & $t = 50$ & \orange{-Cl} / \blue{Sat}\\[1mm]
        
        % First row with images
        \includegraphics[width=0.23\textwidth]{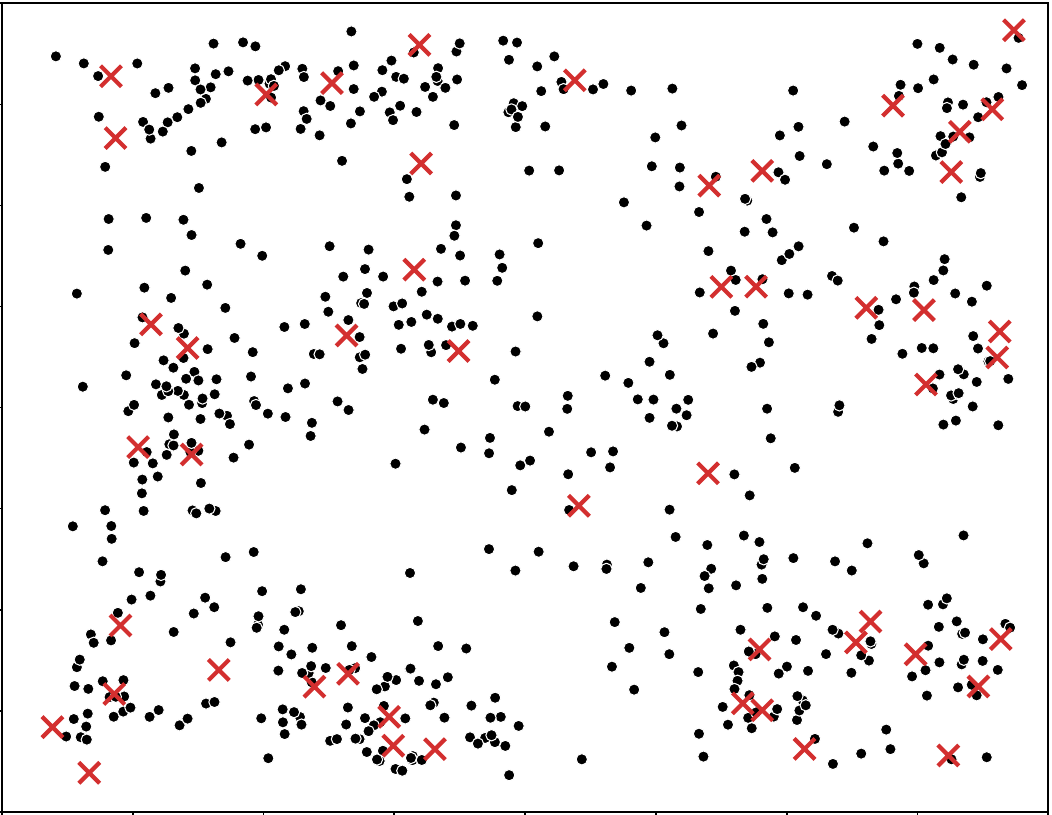} &
        \includegraphics[width=0.23\textwidth]{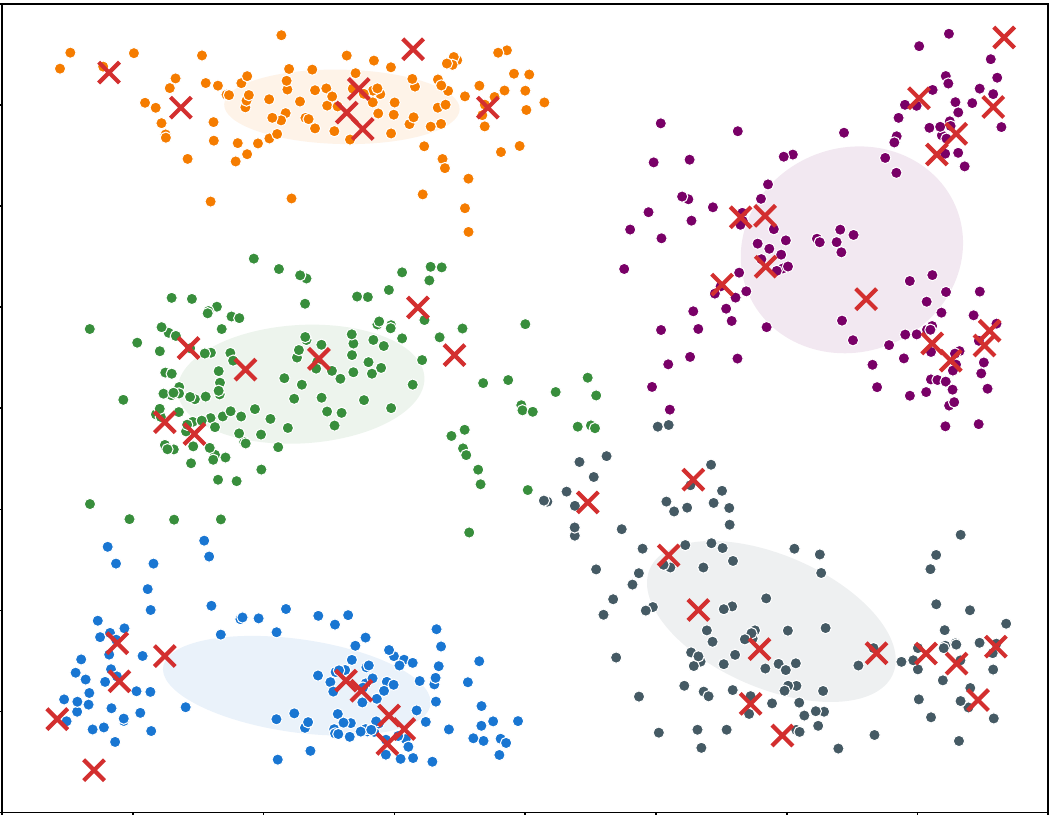} &
        \includegraphics[width=0.23\textwidth]{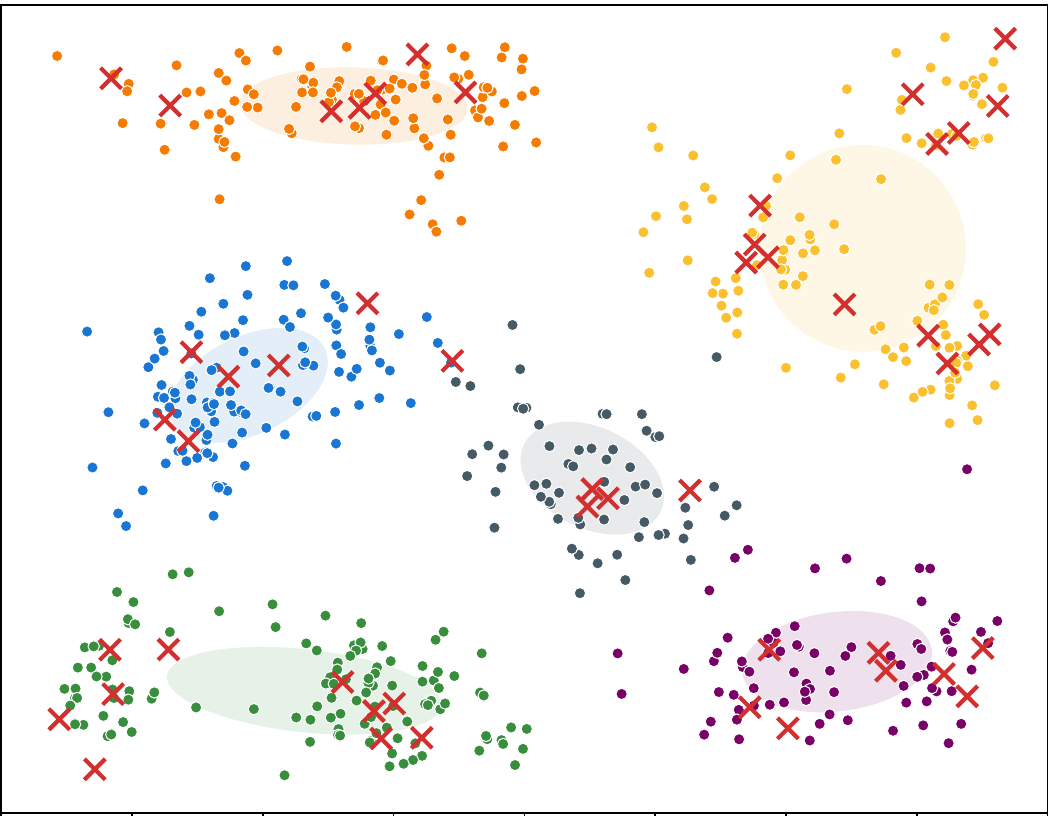} &
        \includegraphics[width=0.23\textwidth]{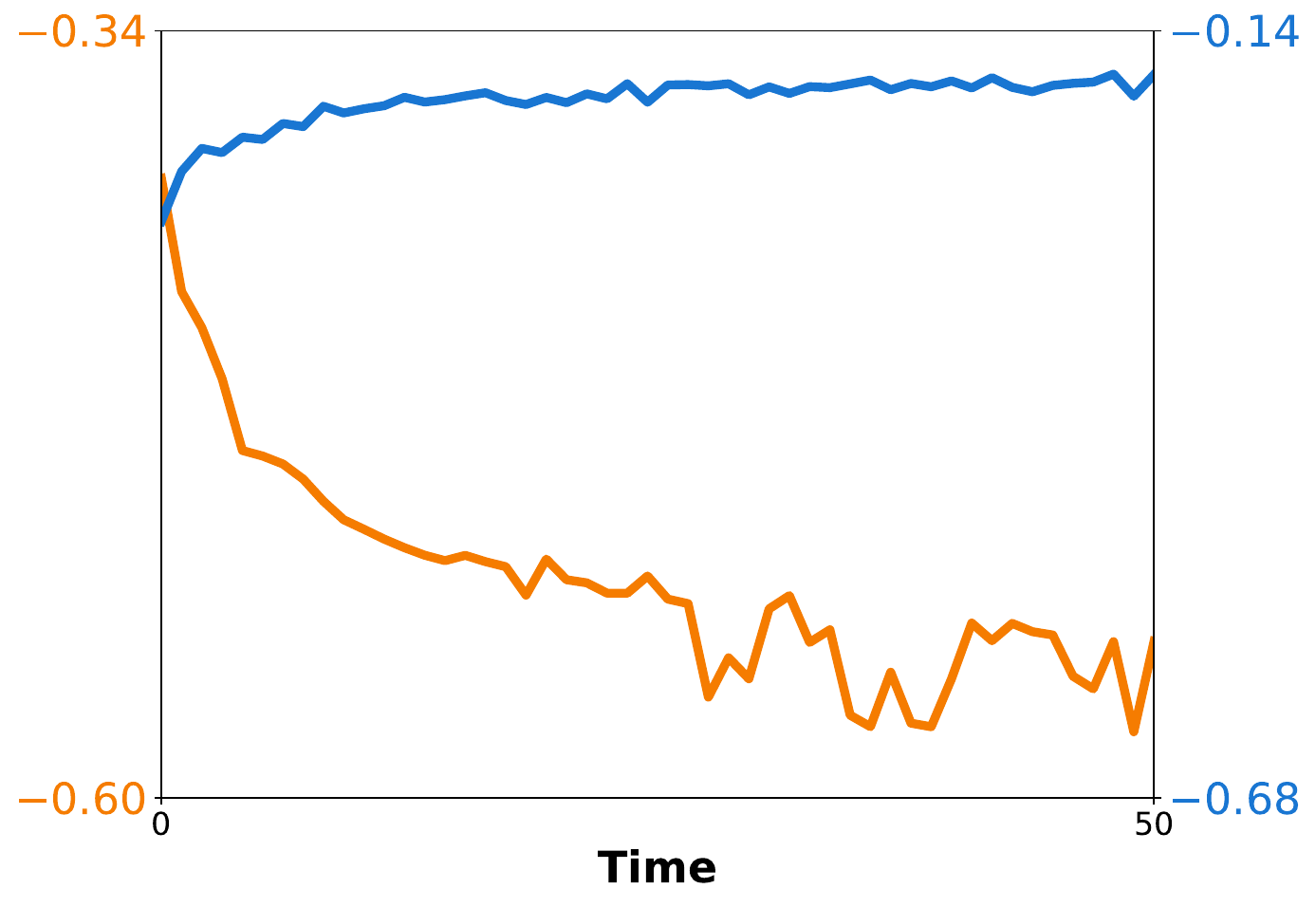} 
    \end{tabular}
    \begin{tabular}{c@{\hspace{0.1cm}}c@{\hspace{0.1cm}}c@{\hspace{0.1cm}}c}
        % Header row
        
        % First row with images
        \includegraphics[width=0.23\textwidth]{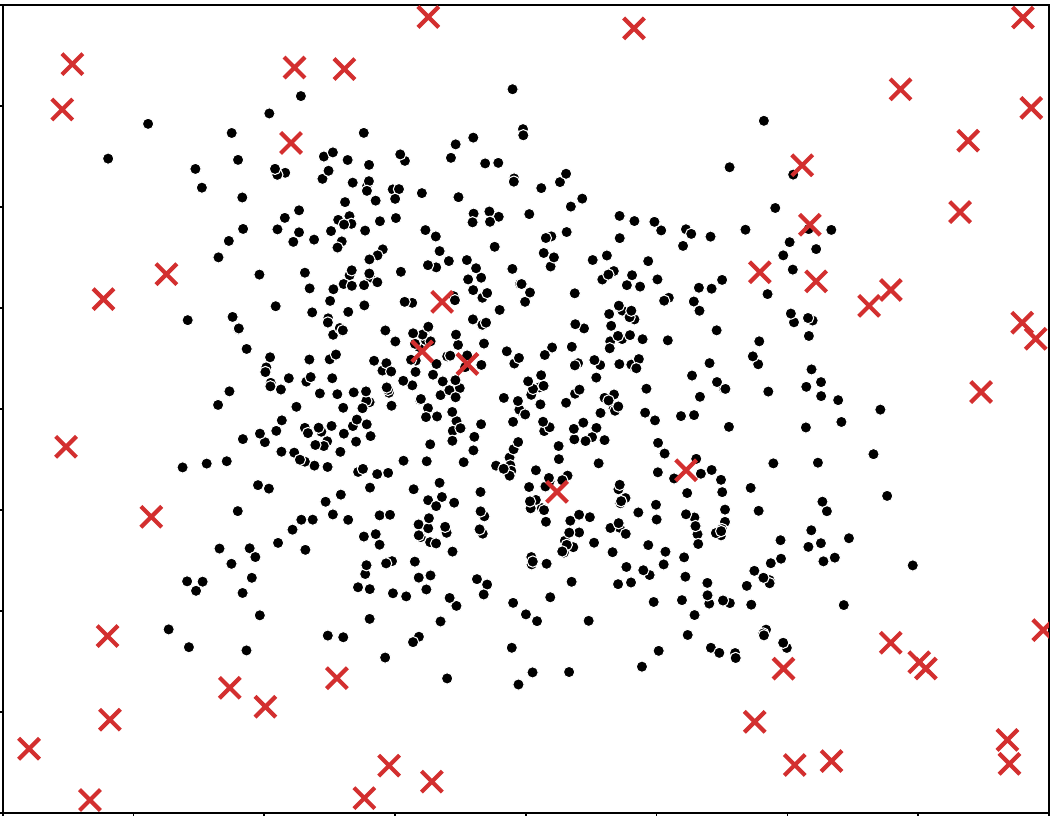} &
        \includegraphics[width=0.23\textwidth]{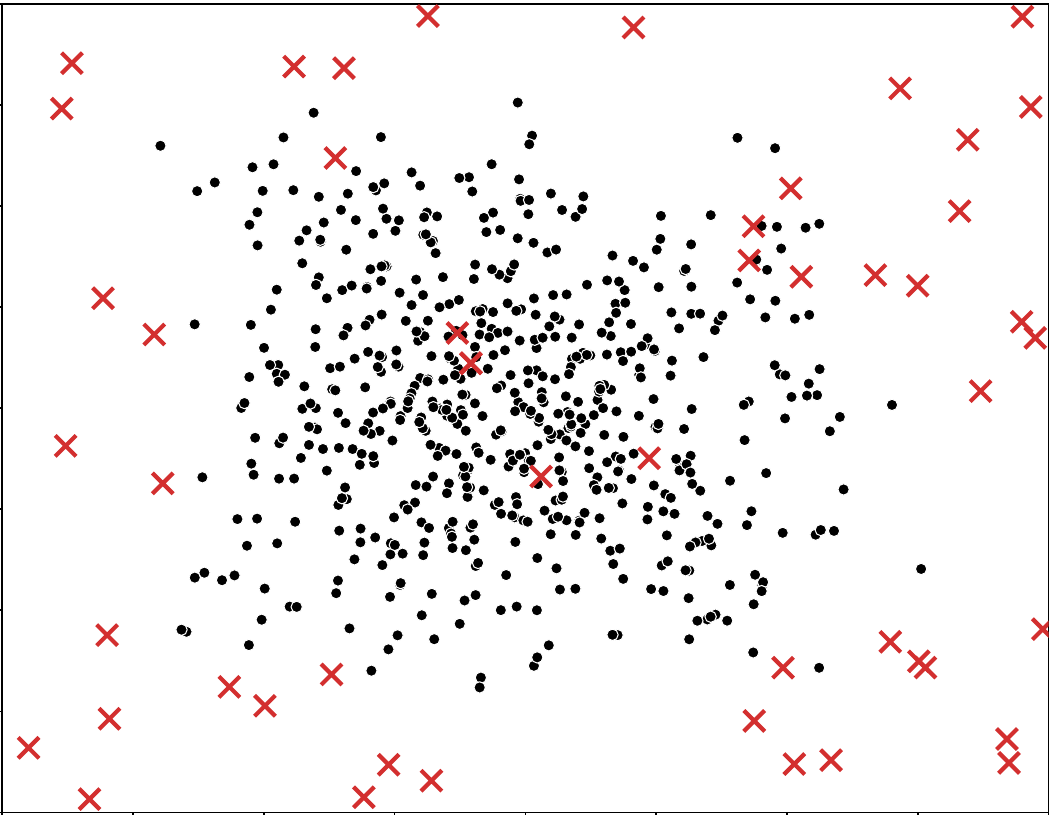} &
        \includegraphics[width=0.23\textwidth]{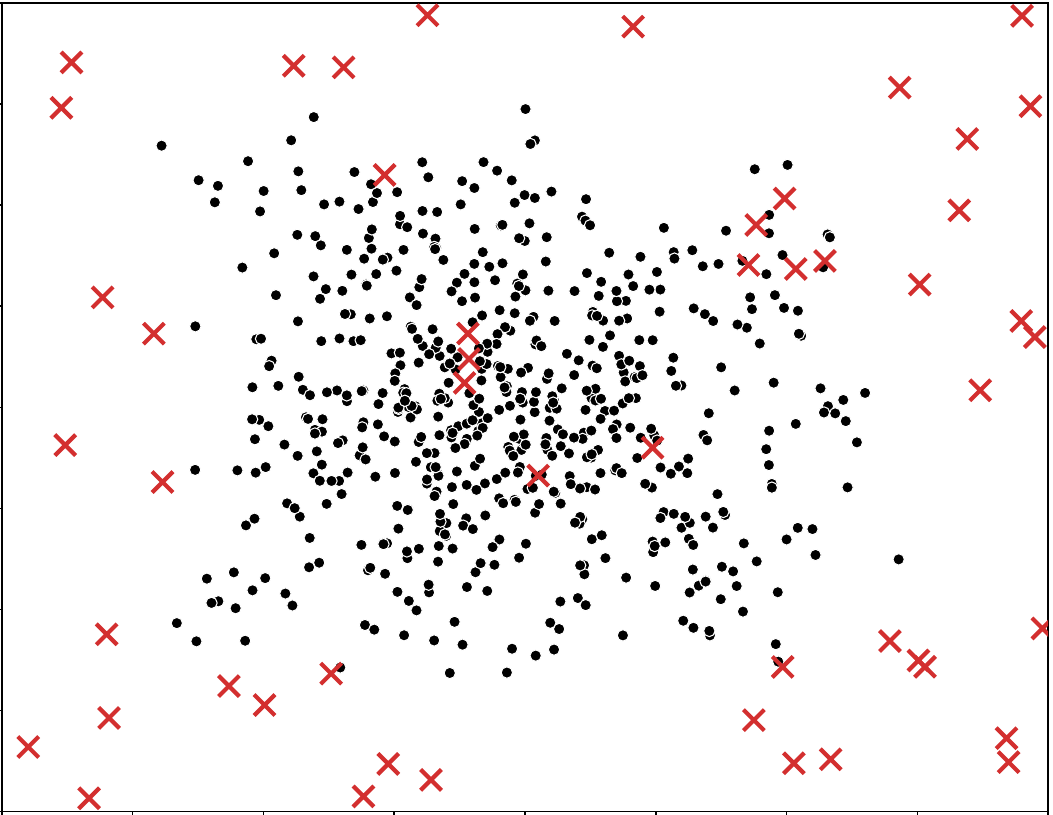} &
        \includegraphics[width=0.23\textwidth]{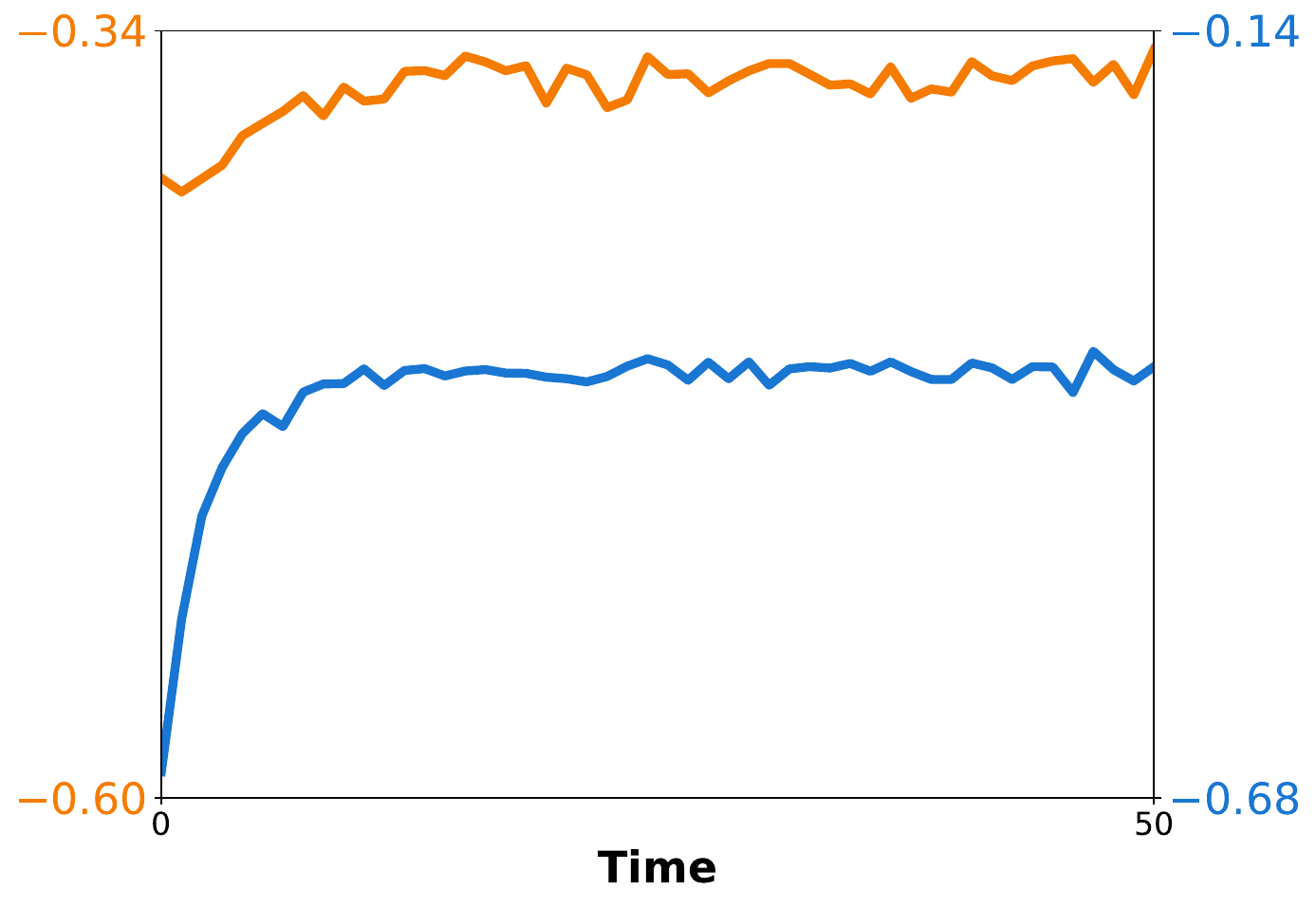} \\[1mm]
    \end{tabular}
    \begin{tabular}{c@{\hspace{0.1cm}}c@{\hspace{0.1cm}}c@{\hspace{0.1cm}}c}
        % Header row
        
        % First row with images
        \includegraphics[width=0.23\textwidth]{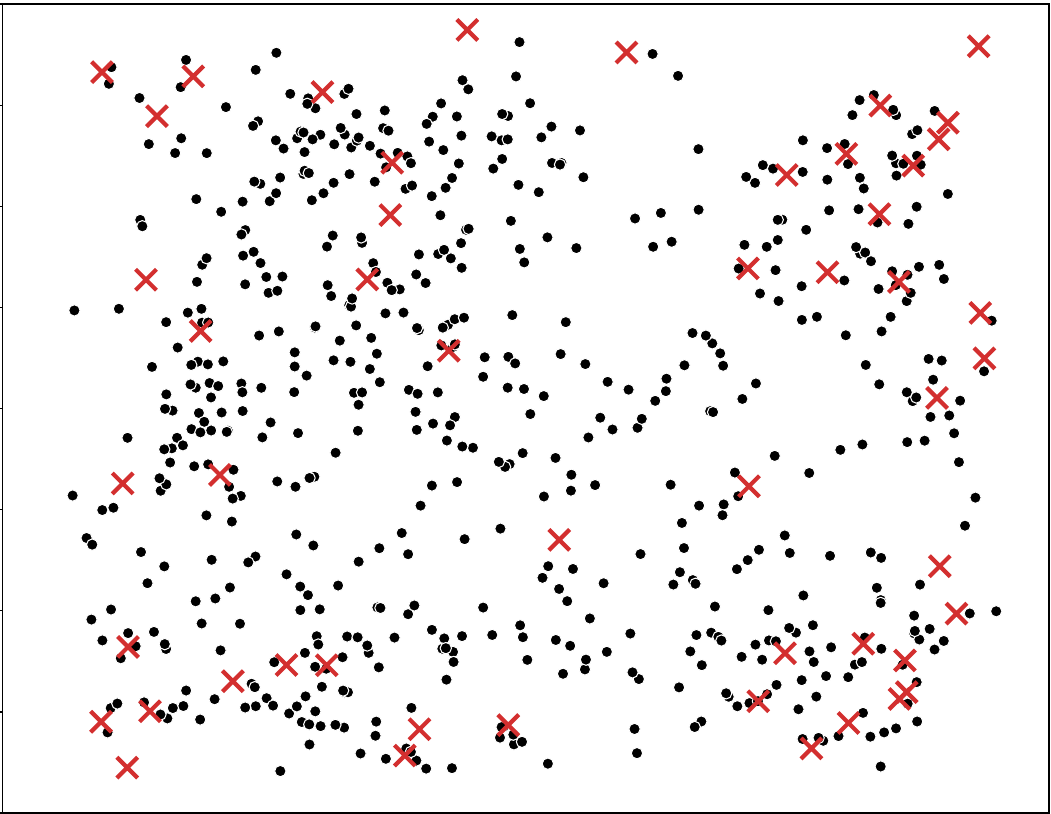} &
        \includegraphics[width=0.23\textwidth]{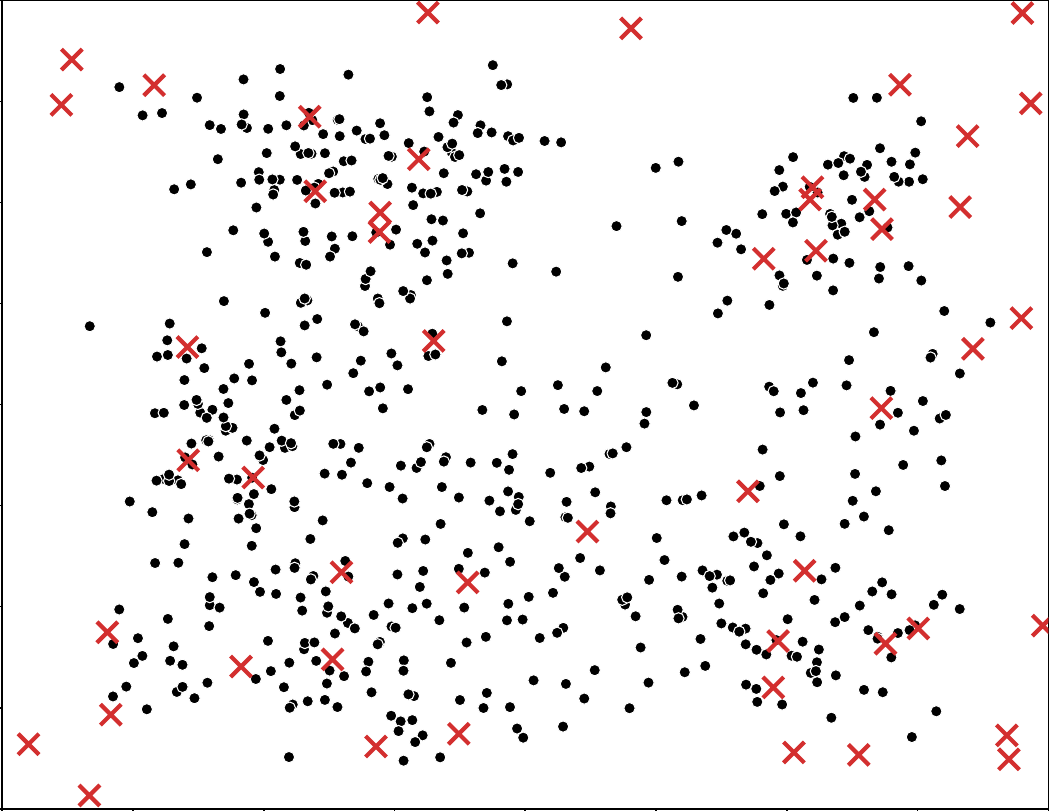} &
        \includegraphics[width=0.23\textwidth]{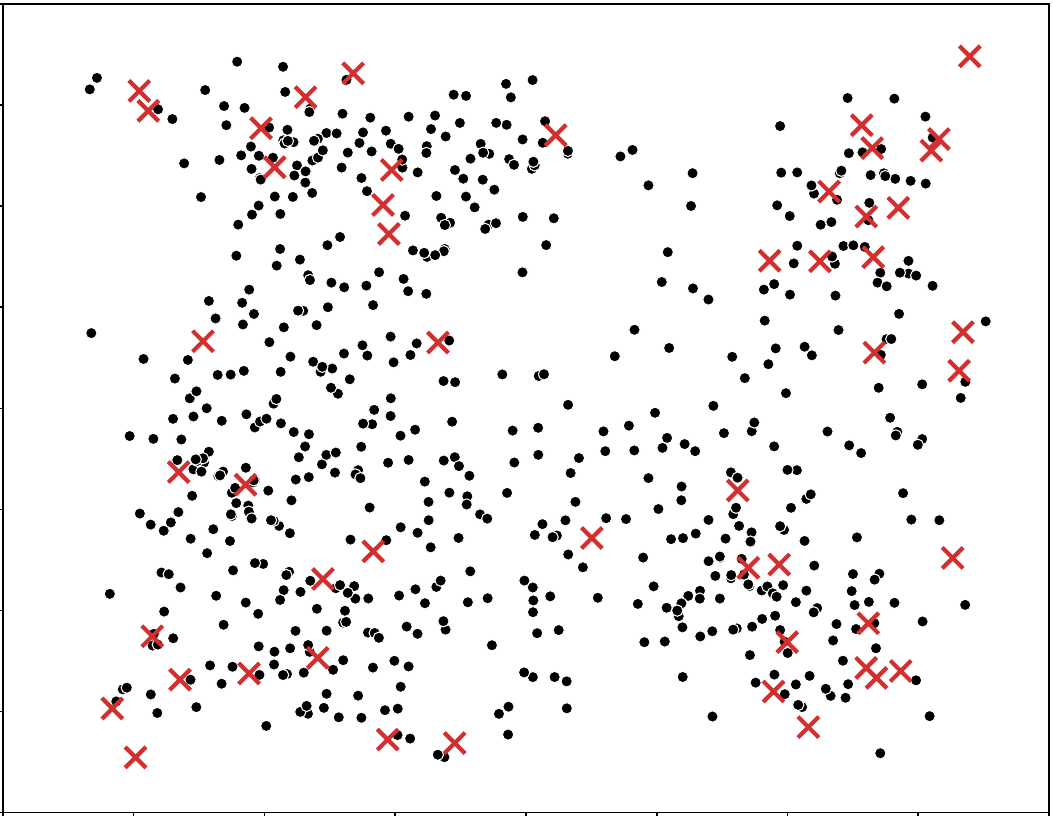} &
        \includegraphics[width=0.23\textwidth]{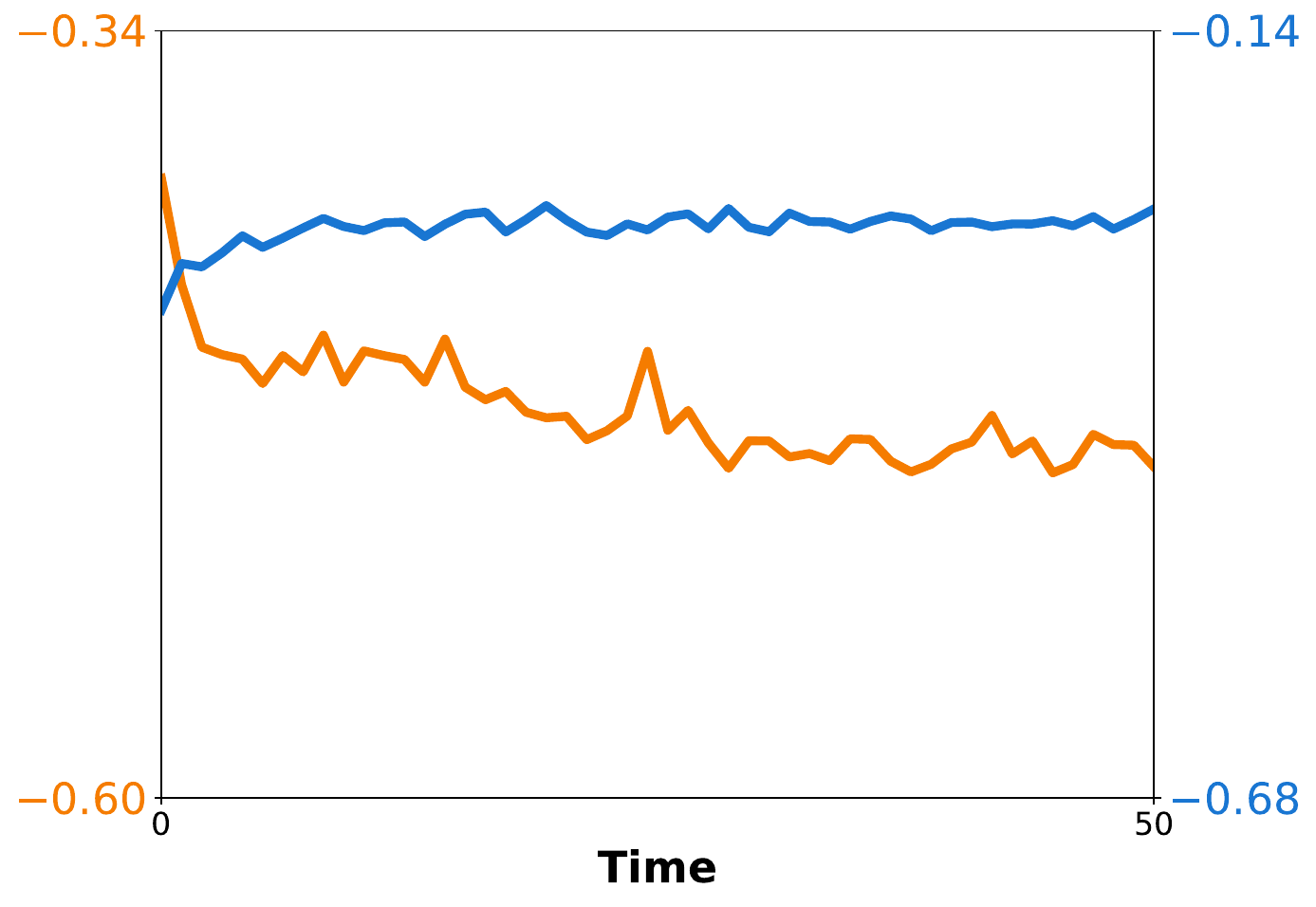} \\[1mm]
    \end{tabular}
    
    \caption{Snapshots of the opinion environment simulated with a localized region (i) $d=0$, first row, (ii) $d=6$, second row (iii) $d=3$, third row. \textcolor{red}{$\times$} denotes the creator, $\bullet$ the users respectively.}
    \label{fig:dynamics_lam_0.0}
\end{figure*}

 %The negative global opinion clusterization and the global user satisfaction is displayed  in \Cref{fig:varying_lambda} as function of the localization parameter $d$, for $t=50$ and $t=500$, respectively, together with their variance \footnote{We denote the variance of the satisfaction and the clusterization over the users in the network as in \eqref{eq:satisfaction} and \eqref{eq:clusterization}.}. When increasing the localized region, the global clusterization effect decreases but so does the user satisfaction. We can observe a sweet spot between the two for $d =3$, where the global clusterizaion effects strongly decrease while the global user satisfaction stays high.
% \vspace{-0.5cm}
 \Cref{fig:varying_lambda} displays negative global opinion clusterization and user satisfaction as functions of the localization parameter $d$ at $t=50$ and $t=500$, along with their variance\footnote{Variance is computed over users as in \eqref{eq:satisfaction} and \eqref{eq:clusterization}.}. As $d$ increases, both clusterization and satisfaction decline. A sweet spot occurs at $d=3$, where clusterization drops sharply while satisfaction remains high.
% \vspace{-0.3cm}
\begin{figure}[h!]
    \centering
    \begin{subfigure}[b]{0.36\textwidth}
        \centering
        \includegraphics[width=\textwidth]{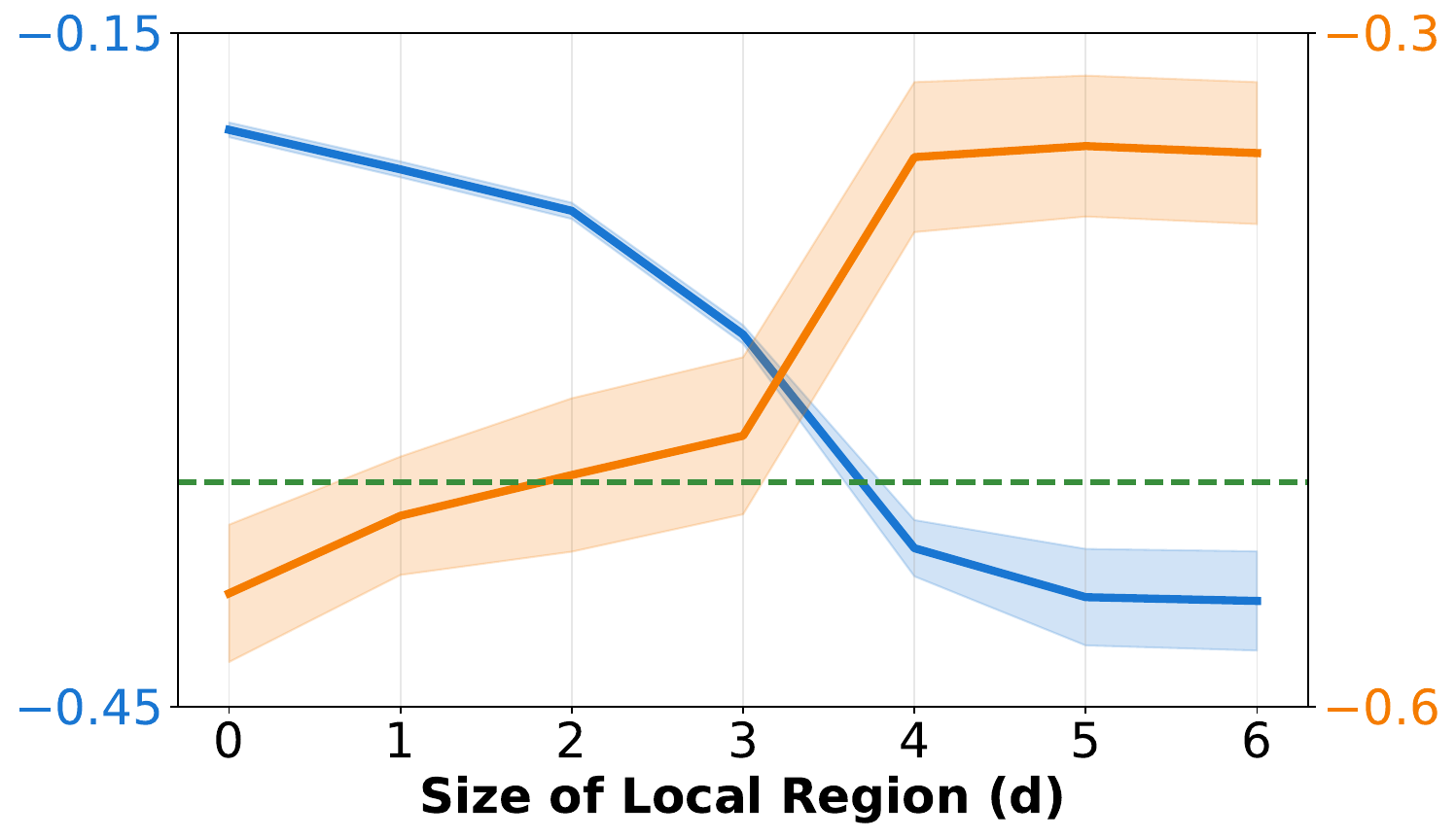}
        \caption{After 50 timesteps}
        \label{fig:varying_lambda_50}
    \end{subfigure}
    \hfill
    \begin{subfigure}[b]{0.36\textwidth}
        \centering
        \includegraphics[width=\textwidth]{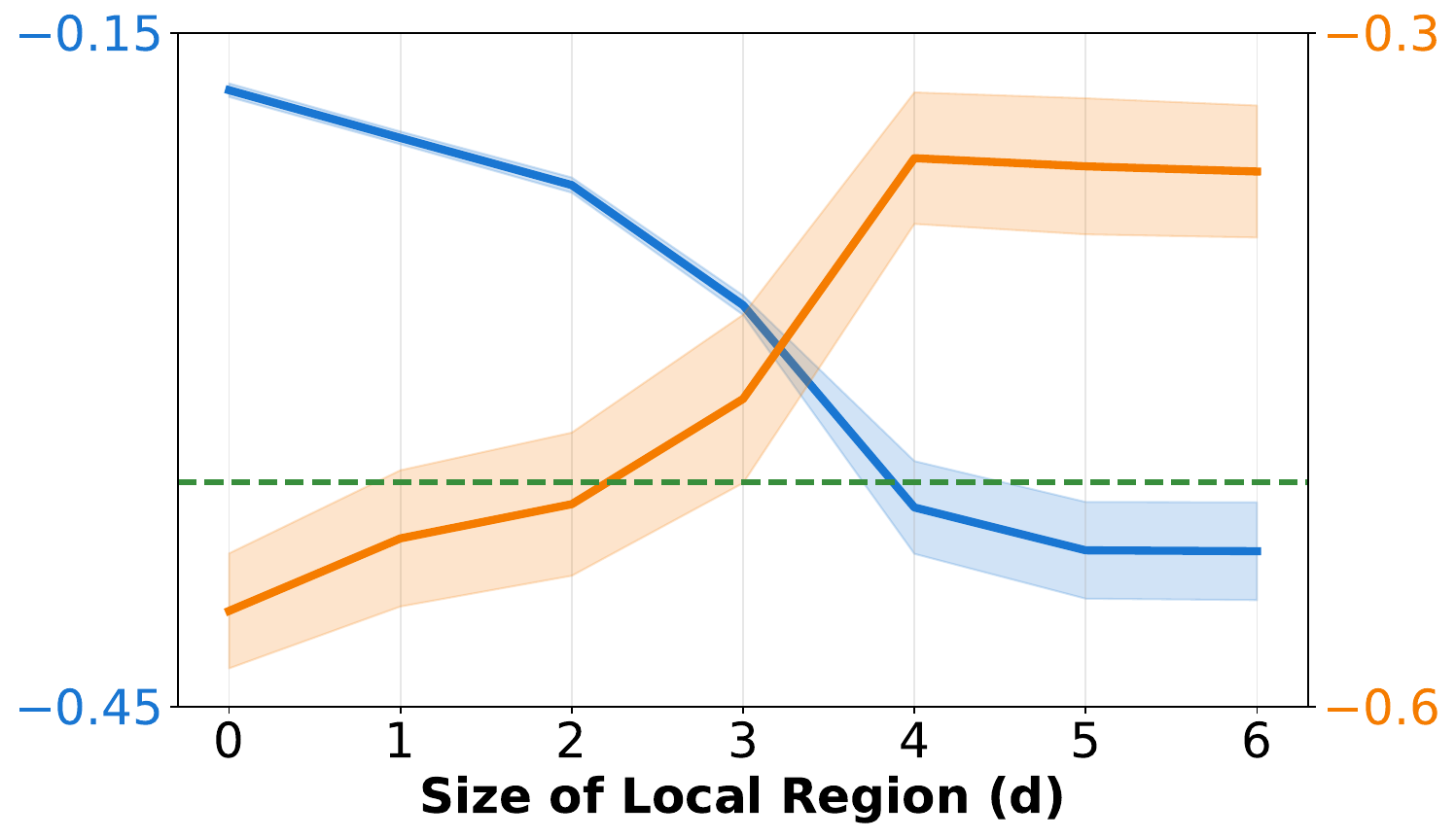}
        \caption{After 500 timesteps}
        \label{fig:varying_lambda_500}
    \end{subfigure}
    \hfill
    \begin{subfigure}[b]{0.25\textwidth}
        \centering
        \raisebox{1.5cm}{%  % Adjust this value to center vertically
        \small
            \begin{tabular}{cl}
                \textcolor[RGB]{249,124,0}{\rule{0.2cm}{0.1cm}} & Neg. Clusterization \\[0.2cm]
                \textcolor[RGB]{25,118,210}{\rule{0.2cm}{0.1cm}} & Satisfaction \\ [0.2cm]
                \textcolor[RGB]{56,142,60}{\rule{0.2cm}{0.1cm}} & Clusterization Thresh.
            \end{tabular}
        }
        \caption*{}  % Empty caption to maintain alignment
    \end{subfigure}
    \caption{Global clusterization and global user satisfaction plotted as $d$ varies after (a) 50 and (b) 500 timesteps. Clusterization thresh is set to $-0.5$, for which clusters are no longer distinguishable.}
    \label{fig:varying_lambda}
\end{figure}

\subsection{Experimental Setup for Real Dataset}
\textcolor{black}{
The ego-Facebook dataset comprises the social network of $4039$ users. The resulting social graph $\mathcal{G}(\mathcal{U},\mathcal{E},W)$ has an average degree of $45$ and is bidirectional, i.e. $(i,j)\in \mathcal{E}$ if and only if $(j,i)\in \mathcal{E}$. We consider $M=120$ content creators with random initial opinions and set the opinion dimension to $n=3$. All interaction parameters governing the user-creator and creator-user dynamics in \eqref{eqn:dynamics} are detailed in \Cref{app:real_data_params}. To generate initial user opinions that reflect the homophilic structure inherent in social networks, we employ spectral clustering \cite{spectral_clustering} to identify network communities, then assign similar opinions to users within the same community. The details can be found in \Cref{app:Ego_facebook}.}

\subsection{Experimental Results on Real Dataset}
\textcolor{black}{
The RS operates under a \texttt{top-$k$} recommendation strategy with $k=5$. 
% The impact of varying $k$ on all performance settings is analyzed in \cref{app:variational_k_real_dataset}. 
\Cref{fig:real_Dataset_3d} displays the user opinion landscapes after $t=20$ time-steps for (i) a greedy RS with $d=0$ and (ii) a hybrid RS with $d=3$. For illustration purposes, we omitted the creators opinions in the $3$-dimensional figure on the left side and only display every fourth opinion for the the users and the creators opinions respectively in the figures showing the projections on the $xy$-, $xz$- and $yz$-planes.}
\begin{figure*}[htbp]
    \centering
    \setlength{\tabcolsep}{2pt} % Minimal spacing between columns
    
    \begin{tabular}{c@{\hspace{0.1cm}}c@{\hspace{0.1cm}}c@{\hspace{0.1cm}}c}
        % Header row
        $ \begin{tikzpicture}[baseline=-0.5ex, scale=0.3]
            % Draw axes
            \draw[->, thick] (0,0) -- (1.2,-0.6) node[right] {\tiny $y$};
            \draw[->, thick] (0,0) -- (0,1.3) node[above] {\tiny $z$};
            \draw[->, thick] (0,0) -- (-1.2,-0.6) node[below left] {\tiny $x$};
        \end{tikzpicture}$ & 
        $ \begin{tikzpicture}[baseline=-0.5ex, scale=0.3]
            % XY plane axes
            \draw[->, thick] (0,0) -- (1.3,0) node[right] {\tiny $x$};
            \draw[->, thick] (0,0) -- (0,1.3) node[above] {\tiny $y$};
        \end{tikzpicture}$ & 
        $ \begin{tikzpicture}[baseline=-0.5ex, scale=0.3]
            % XZ plane axes
            \draw[->, thick] (0,0) -- (1.3,0) node[right] {\tiny $x$};
            \draw[->, thick] (0,0) -- (0,1.3) node[above] {\tiny $z$};
        \end{tikzpicture}$ & 
        $ \begin{tikzpicture}[baseline=-0.5ex, scale=0.3]
            % YZ plane axes
            \draw[->, thick] (0,0) -- (1.3,0) node[right] {\tiny $y$};
            \draw[->, thick] (0,0) -- (0,1.3) node[above] {\tiny $z$};
        \end{tikzpicture}$\\[1mm]
        
        % First row with images
        \includegraphics[width=0.23\textwidth]{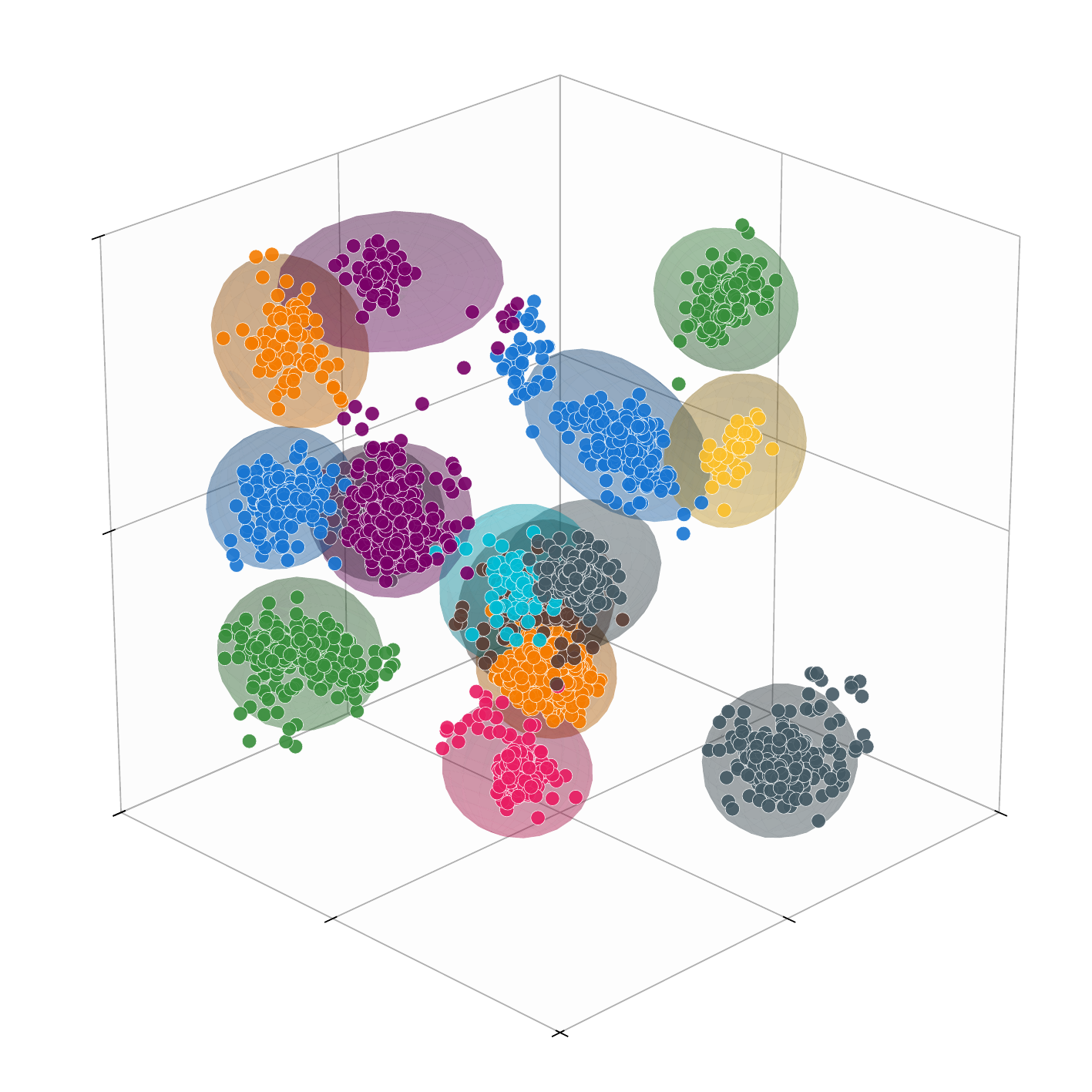} &
        \includegraphics[width=0.23\textwidth]{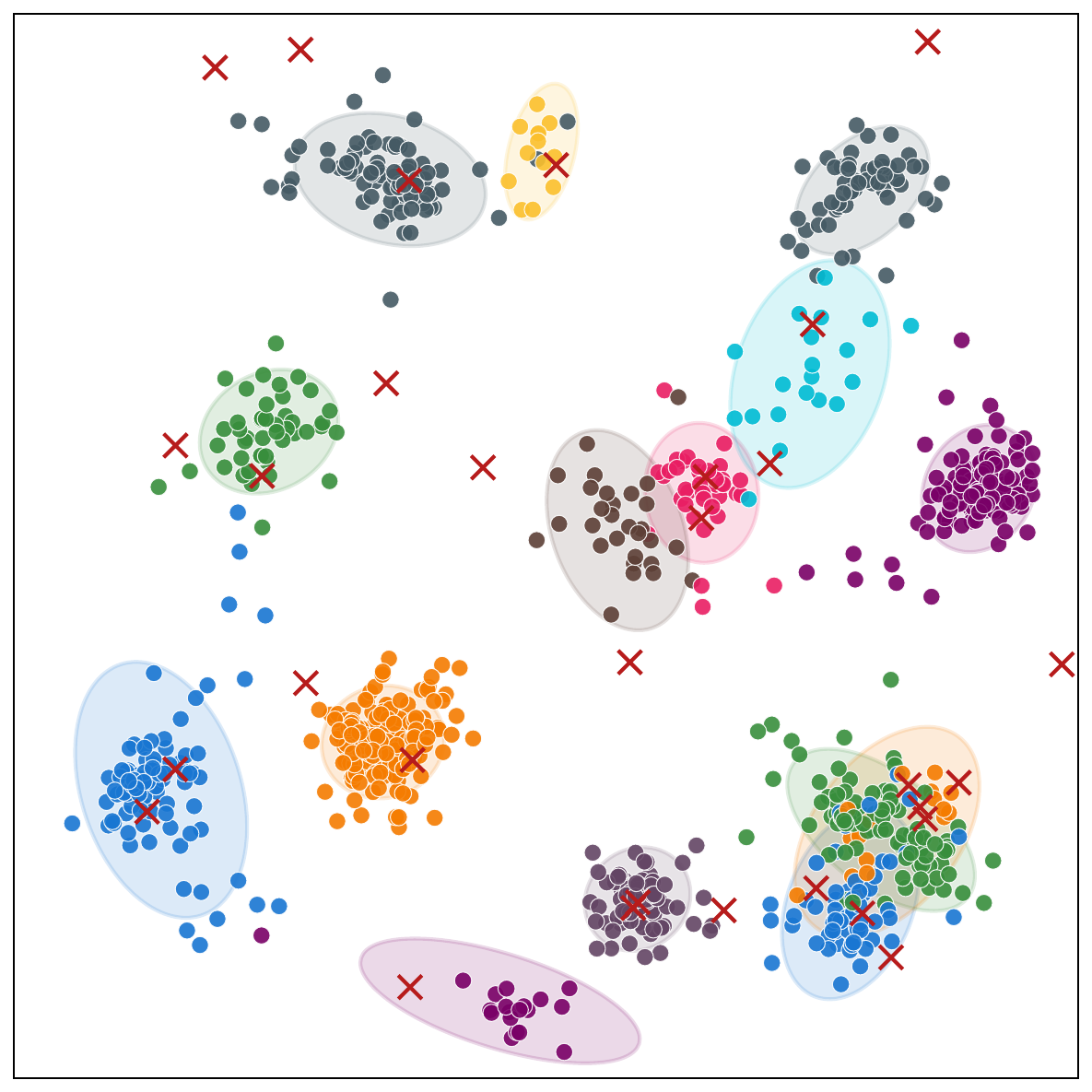} &
        \includegraphics[width=0.23\textwidth]{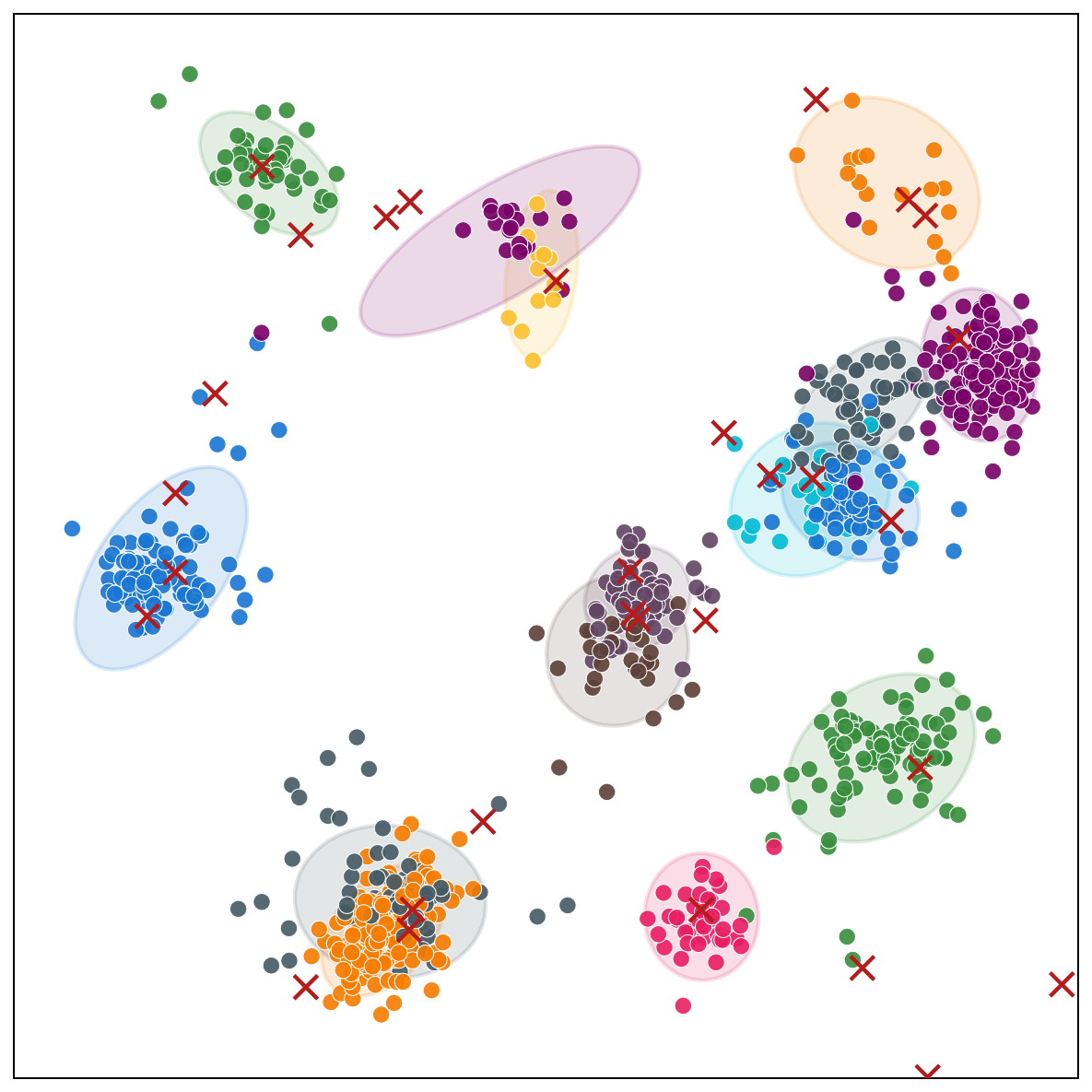} &
        \includegraphics[width=0.23\textwidth]{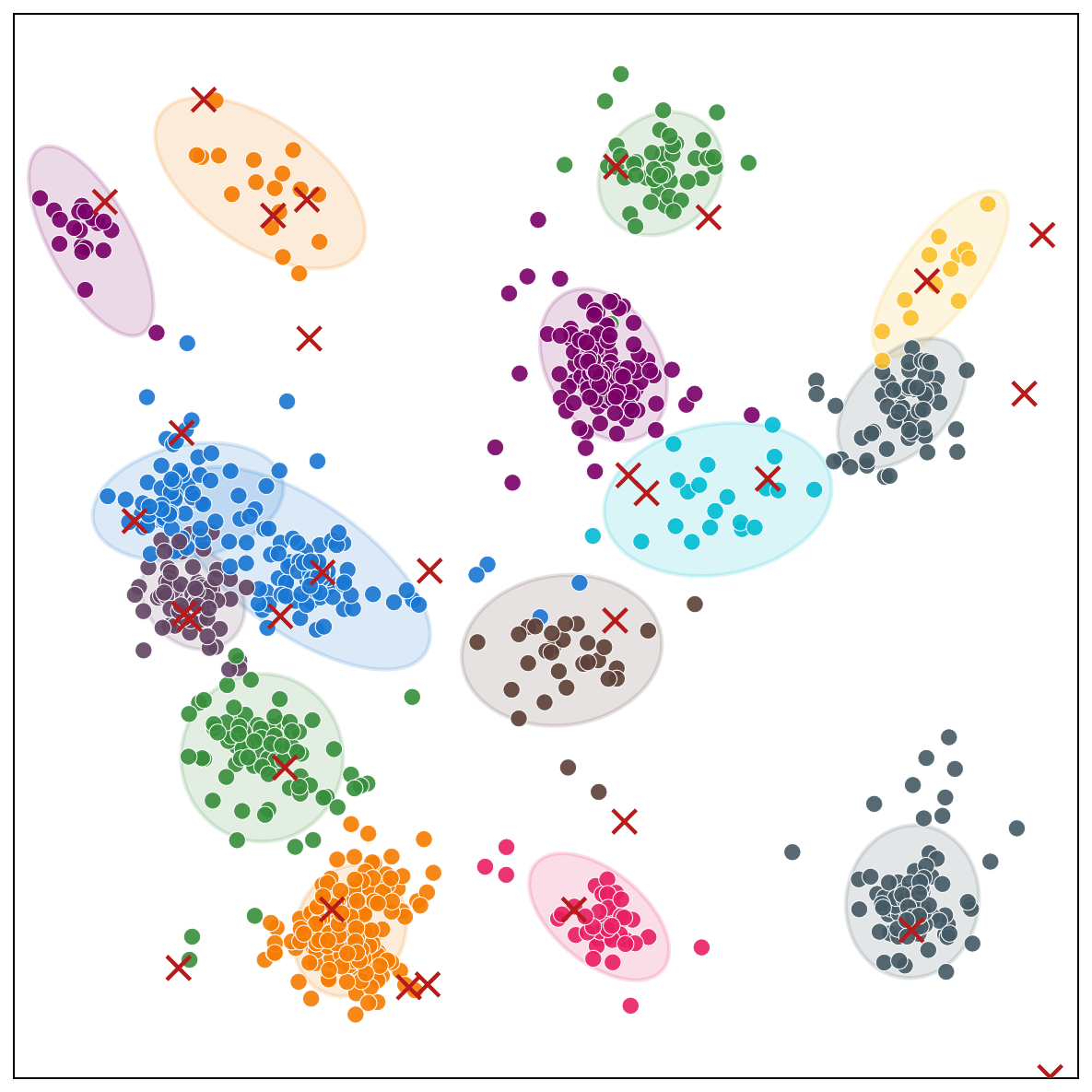} 
    \end{tabular}
    \begin{tabular}{c@{\hspace{0.1cm}}c@{\hspace{0.1cm}}c@{\hspace{0.1cm}}c}
        % Header row
        
        % First row with images
        \includegraphics[width=0.23\textwidth]{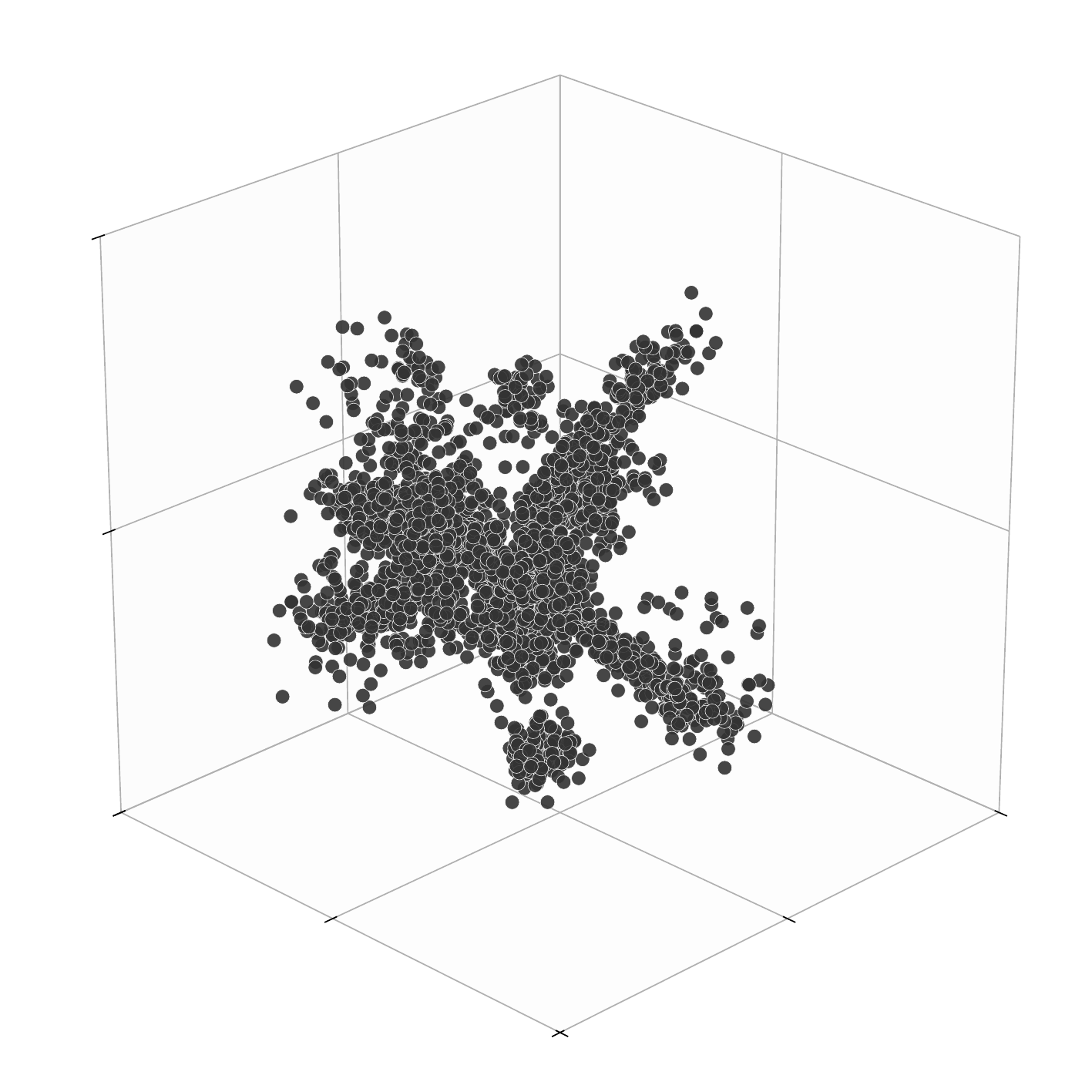} &
        \includegraphics[width=0.23\textwidth]{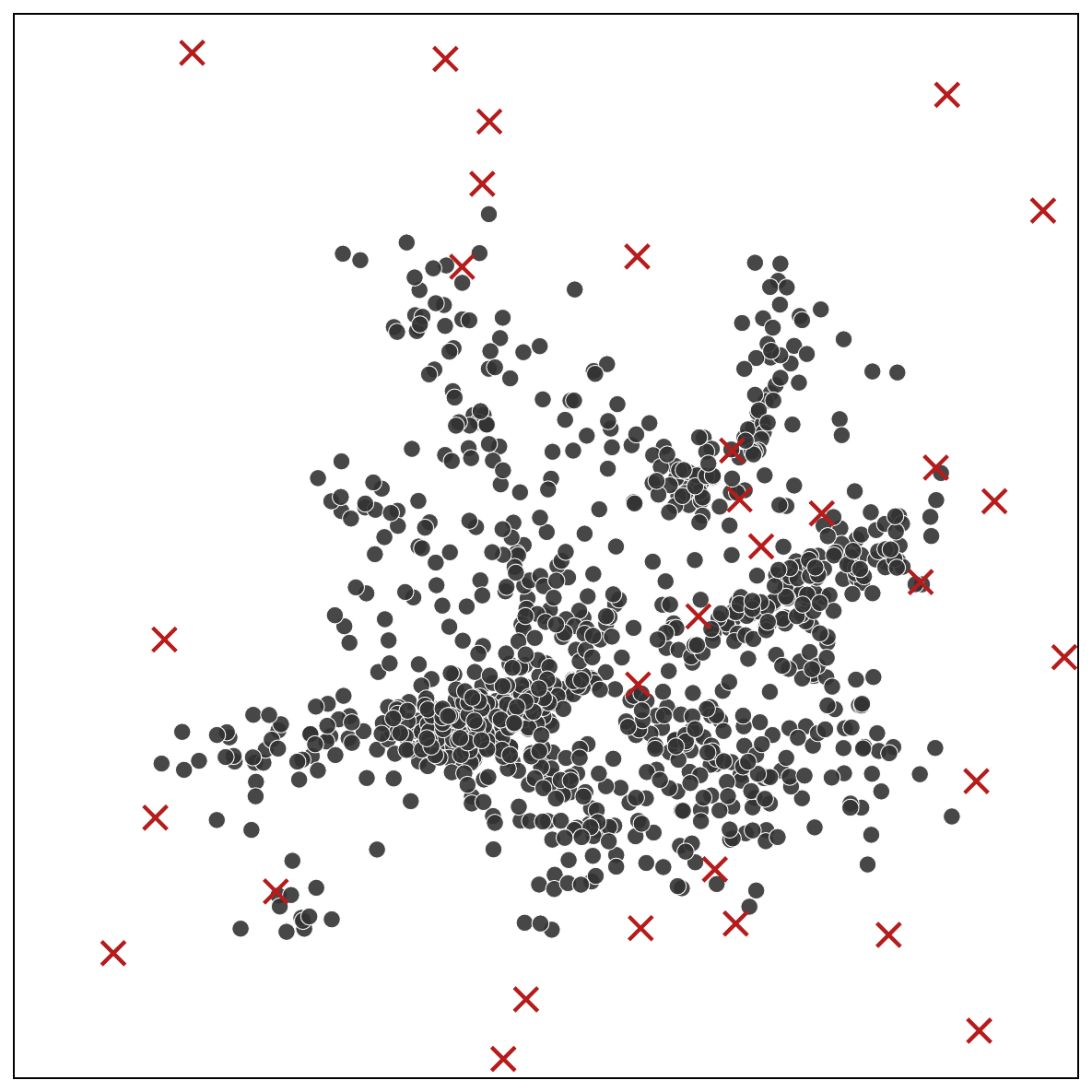} &
        \includegraphics[width=0.23\textwidth]{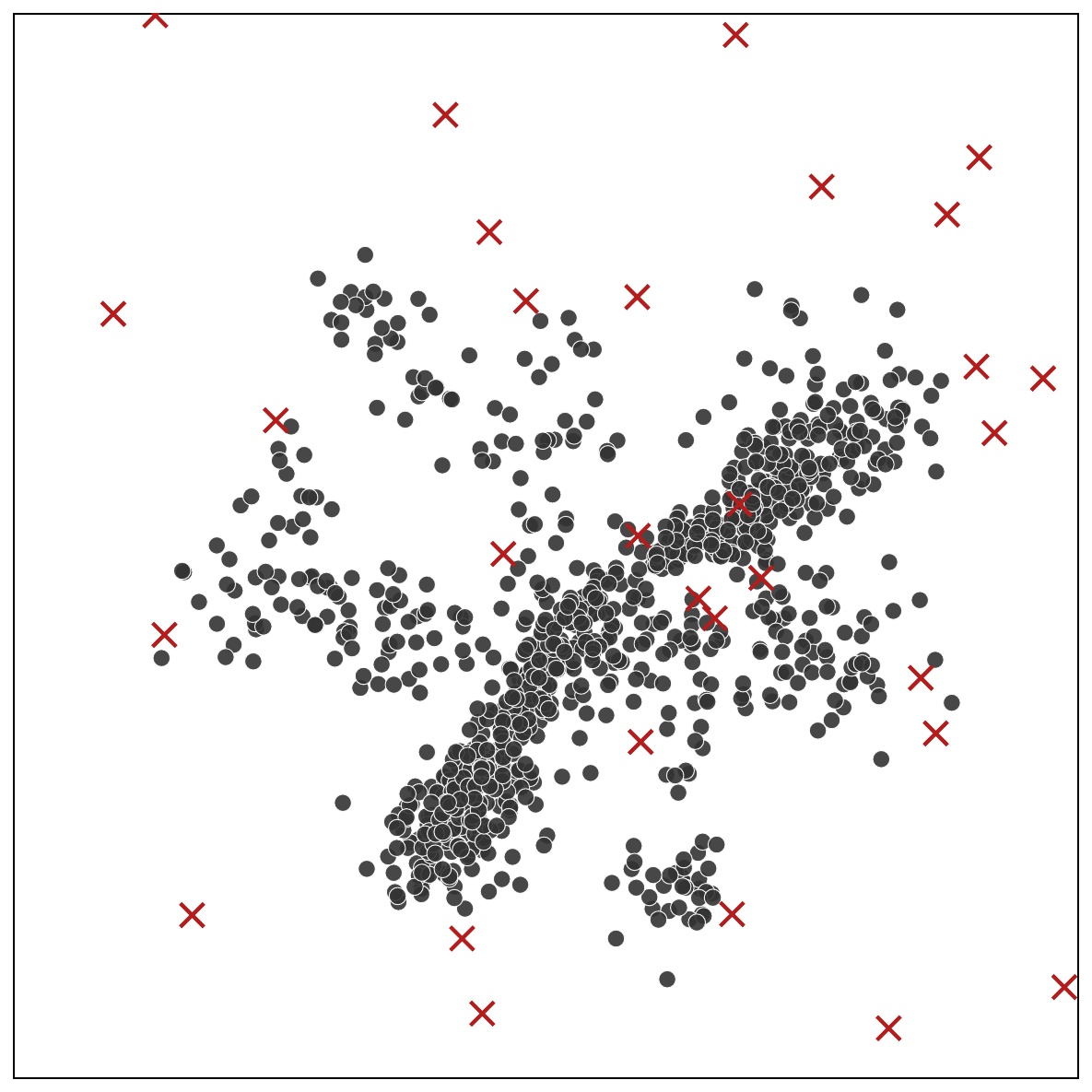} &
        \includegraphics[width=0.23\textwidth]{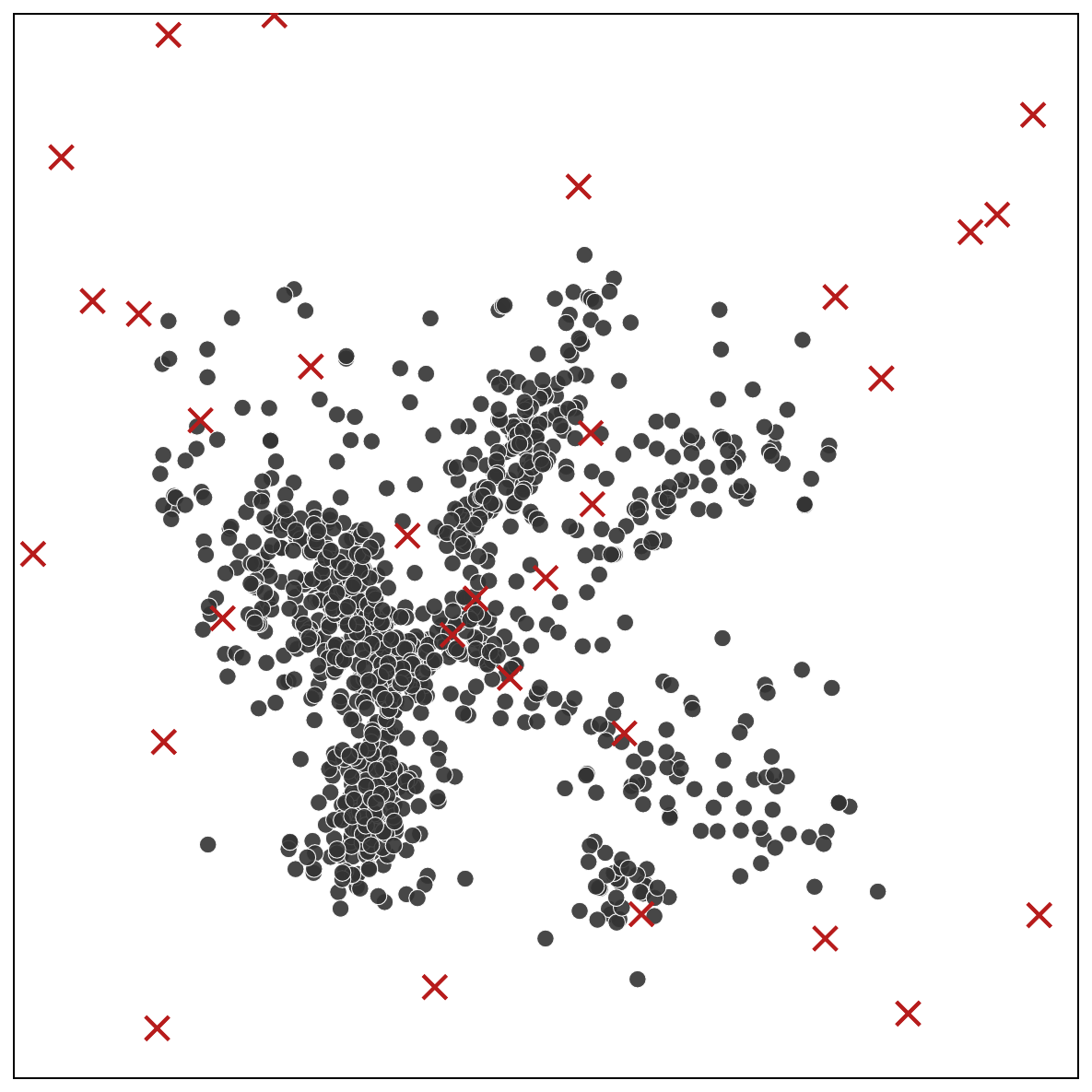} \\[1mm]
    \end{tabular}
    \caption{Snapshots after $t=20$ timesteps of the opinion environment simulated with a localized region (i) $d=0$, first row, (ii) $d=3$, second row. \textcolor{red}{$\times$} denotes the creator, $\bullet$ the users respectively.}
    \label{fig:real_Dataset_3d}
\end{figure*}

\newpage
\bibliography{iclr2026_conference}

@inproceedings{RS_gametheory,
 author = {Ben-Porat, Omer and Tennenholtz, Moshe},
 booktitle = {Advances in Neural Information Processing Systems},
 editor = {S. Bengio and H. Wallach and H. Larochelle and K. Grauman and N. Cesa-Bianchi and R. Garnett},
 pages = {1--11},
 publisher = {Curran Associates, Inc.},
 title = {A Game-Theoretic Approach to Recommendation Systems with Strategic Content Providers},
 volume = {31},
 year = {2018}
}

@inproceedings{yao2023bad,
  title={How Bad is Top-$ K $ Recommendation under Competing Content Creators?},
  author={Yao, Fan and Li, Chuanhao and Nekipelov, Denis and Wang, Hongning and Xu, Haifeng},
  booktitle={International Conference on Machine Learning},
  pages={39674--39701},
  year={2023},
  organization={PMLR}
}

@inproceedings{hron2022modeling,
  title={Modeling content creator incentives on algorithm-curated platforms},
  author={Hron, Jiri and Krauth, Karl and Jordan, Michael I and Kilbertus, Niki and Dean, Sarah},
  booktitle={International Conference on Learning Representations},
  year={2023},
}

@inproceedings{RelevanceConflict,
 author = {Wang, Yanbang and Kleinberg, Jon},
 booktitle = {Advances in Neural Information Processing Systems},
 editor = {A. Oh and T. Naumann and A. Globerson and K. Saenko and M. Hardt and S. Levine},
 pages = {36708--36725},
 publisher = {Curran Associates, Inc.},
 title = {On the Relationship Between Relevance and Conflict in Online Social Link Recommendations},
 url = {https://proceedings.neurips.cc/paper_files/paper/2023/file/73d6c3e4b214deebbbf8256e26d2cf45-Paper-Conference.pdf},
 volume = {36},
 year = {2023}
}

@inproceedings{polarizationDisagreement,
 author = {Zhu, Liwang and Bao, Qi and Zhang, Zhongzhi},
 booktitle = {Advances in Neural Information Processing Systems},
 editor = {M. Ranzato and A. Beygelzimer and Y. Dauphin and P.S. Liang and J. Wortman Vaughan},
 pages = {2072--2084},
 publisher = {Curran Associates, Inc.},
 title = {Minimizing Polarization and Disagreement in Social Networks via Link Recommendation},
 url = {https://proceedings.neurips.cc/paper_files/paper/2021/file/101951fe7ebe7bd8c77d14f75746b4bc-Paper.pdf},
 volume = {34},
 year = {2021}
}

@ARTICLE{TowardsConsensus,
  author={Rácz, Miklos Z. and Rigobon, Daniel E.},
  journal={IEEE Transactions on Network Science and Engineering}, 
  title={Towards Consensus: Reducing Polarization by Perturbing Social Networks}, 
  year={2023},
  volume={10},
  number={6},
  pages={3450-3464},
  keywords={Social networking (online);Indexes;Perturbation methods;Optimization;Mathematical models;Statistics;Sociology;Synchronization;Social factors;Recommender systems;Network architecture;Network consensus and synchronization;opinion dynamics;social influence and recommendations;social network design and architecture},
  doi={10.1109/TNSE.2023.3262970}}

@article{jagadeesan2023supply,
  title={Supply-side equilibria in recommender systems},
  author={Jagadeesan, Meena and Garg, Nikhil and Steinhardt, Jacob},
  journal={Advances in Neural Information Processing Systems},
  volume={36},
  pages={14597--14608},
  year={2023}
}

@inproceedings{
kuhne2025optimizing,
title={Optimizing Social Network Interventions via Hypergradient-Based Recommender System Design},
author={Marino K{\"u}hne and Panagiotis D. Grontas and Giulia De Pasquale and Giuseppe Belgioioso and Florian Dorfler and John Lygeros},
booktitle={Forty-second International Conference on Machine Learning},
year={2025},
url={https://openreview.net/forum?id=Q4tpr9bnXU}
}

@inproceedings{ChitraMusco_RS,
author = {Chitra, Uthsav and Musco, Christopher},
title = {Analyzing the Impact of Filter Bubbles on Social Network Polarization},
year = {2020},
isbn = {9781450368223},
publisher = {Association for Computing Machinery},
address = {New York, NY, USA},
url = {https://doi.org/10.1145/3336191.3371825},
doi = {10.1145/3336191.3371825},
booktitle = {Proceedings of the 13th International Conference on Web Search and Data Mining},
pages = {115–123},
numpages = {9},
keywords = {stochastic block model, social networks, polarization, filter bubbles, algorithmic content filtering},
location = {Houston, TX, USA},
series = {WSDM '20}
}

@inproceedings{divers_content,
  title={Performative Recommendation: Diversifying Content via Strategic Incentives},
  author={Itay Eilat and Nir Rosenfeld},
  booktitle={International Conference on Machine Learning},
  year={2023},
  url={https://api.semanticscholar.org/CorpusID:256697126}
}

@inproceedings{dean2022preference,
  title={Preference dynamics under personalized recommendations},
  author={Dean, Sarah and Morgenstern, Jamie},
  booktitle={Proceedings of the 23rd ACM Conference on Economics and Computation},
  pages={795--816},
  year={2022}
}

@inproceedings{
dean2024usercreators,
title={Recommender Systems as Dynamical Systems: Interactions with Viewers and Creators},
author={Sarah Dean and Evan Dong and Meena Jagadeesan and Liu Leqi},
booktitle={Workshop on Recommendation Ecosystems: Modeling, Optimization and Incentive Design},
year={2024}
}

@article{del2017modeling,
  title={Modeling confirmation bias and polarization},
  author={Del Vicario, Michela and Scala, Antonio and Caldarelli, Guido and Stanley, H Eugene and Quattrociocchi, Walter},
  journal={Scientific reports},
  volume={7},
  number={1},
  pages={40391},
  year={2017},
  publisher={Nature Publishing Group UK London}
}

@article{dean2024accounting,
  title={Accounting for {AI} and users shaping one another: The role of mathematical models},
  author={Dean, Sarah and Dong, Evan and Jagadeesan, Meena and Leqi, Liu},
  journal={Transactions on Machine Learning Research},
  year={2024},
  pages = {1--25}
}

@inproceedings{yao2024user,
  title={User welfare optimization in recommender systems with competing content creators},
  author={Yao, Fan and Liao, Yiming and Wu, Mingzhe and Li, Chuanhao and Zhu, Yan and Yang, James and Liu, Jingzhou and Wang, Qifan and Xu, Haifeng and Wang, Hongning},
  booktitle={Proceedings of the 30th ACM SIGKDD Conference on Knowledge Discovery and Data Mining},
  pages={3874--3885},
  year={2024}
}

@inproceedings{dual_influence, 
author = {Lin, Tao and Jin, Kun and Estornell, Andrew and Zhang, Xiaoying and Chen, Yiling and Liu, Yang}, 
title = {User-creator feature polarization in recommender systems with dual influence}, 
year = {2024}, 
address = {Red Hook, NY, USA}, 
booktitle = {Proceedings of the 38th International Conference on Neural Information Processing Systems}, 
articleno = {3993}, 
numpages = {39}, 
location = {Vancouver, BC, Canada}, 
series = {NeurIPS '24} }

@article{conf_bias,
author = {Nickerson, Raymond},
year = {1998},
month = {06},
pages = {175-220},
title = {Confirmation Bias: A Ubiquitous Phenomenon in Many Guises},
volume = {2},
journal = {Review of General Psychology}
}

@article{choice_theory,
author = {Anas, Alex},
year = {1983},
month = {02},
pages = {13-23},
title = {Discrete Choice Theory, Information-Theory and the Multinomial Logit and Gravity Models},
volume = {17},
journal = {Transportation Research Part B: Methodological}
}

@inproceedings{harm_mitigation, 
author = {Chee, Jerry and Kalyanaraman, Shankar and Ernala, Sindhu Kiranmai and Weinsberg, Udi and Dean, Sarah and Ioannidis, Stratis}, 
title = {Harm Mitigation in Recommender Systems under User Preference Dynamics}, 
year = {2024}, 
publisher = {ACM}, 
booktitle = {Proceedings of the 30th ACM SIGKDD Conference on Knowledge Discovery and Data Mining}, 
pages = {255–265}, 
numpages = {11}, 
keywords = {amplification, harm mitigation, recommender systems, user preference modeling}, 
location = {Barcelona, Spain}, 
series = {KDD '24} 
}

@article{opiniondyn_tutorial,
title = {A tutorial on modeling and analysis of dynamic social networks. {P}art {I}},
journal = {Annual Reviews in Control},
volume = {43},
pages = {65-79},
year = {2017},
issn = {1367-5788},
author = {Anton V. Proskurnikov and Roberto Tempo},
keywords = {Social network, Opinion dynamics, Multi-agent systems, Distributed algorithms}
}

@inproceedings{Preference_amplification, 
author = {Kalimeris, Dimitris and Bhagat, Smriti and Kalyanaraman, Shankar and Weinsberg, Udi}, 
title = {Preference Amplification in Recommender Systems}, 
year = {2021},  
publisher = {ACM}, 
booktitle = {Proceedings of the 27th ACM SIGKDD Conference on Knowledge Discovery \& Data Mining}, 
pages = {805–815}, 
numpages = {11}, 
keywords = {recommender systems, fixed point, filter bubbles, echo chambers}, 
location = {Virtual Event, Singapore}, 
series = {KDD '21} 
}

@article{Rec_effect_on_evolution_of_usr_choice,
author = {Hazrati, Naieme and Ricci, Francesco},
year = {2022},
month = {01},
pages = {102766},
title = {Recommender systems effect on the evolution of users’ choices distribution},
volume = {59},
journal = {Information Processing and Management},
}

@article{multi_topic_FJ,
author = {Parsegov, S. and Proskurnikov, Anton and Tempo, Roberto and Friedkin, Noah},
year = {2017},
month = {05},
pages = {2270-2285},
title = {Novel Multidimensional Models of Opinion Dynamics in Social Networks},
volume = {62},
journal = {IEEE Transactions on Automatic Control}
}

@ARTICLE{Amazon,
  author={Linden, G. and Smith, B. and York, J.},
  journal={IEEE Internet Computing}, 
  title={Amazon.com recommendations: item-to-item collaborative filtering}, 
  year={2003},
  volume={7},
  number={1},
  pages={76-80},
  keywords={Collaboration;Clustering algorithms;Filtering algorithms;Aggregates;Information filtering;Information filters;Demography;Pediatrics;Electronic mail;Advertising},
}

@article{Netflix, 
author = {Gomez-Uribe, Carlos A. and Hunt, Neil}, 
title = {The Netflix Recommender System: Algorithms, Business Value, and Innovation}, 
year = {2016}, 
issue_date = {January 2016}, 
publisher = {ACM}, 
address = {New York, NY, USA}, 
volume = {6}, 
number = {4}, 
journal = {ACM Transaction Management Information Systems},
articleno = {13}, 
pages = {1--19}, 
keywords = {Recommender systems} }

@inproceedings{Youtube, author = {Covington, Paul and Adams, Jay and Sargin, Emre}, title = {Deep Neural Networks for YouTube Recommendations}, year = {2016}, publisher = {ACM}, address = {New York, NY, USA},   booktitle = {Proceedings of the 10th ACM Conference on Recommender Systems}, pages = {191–198}, numpages = {8}, keywords = {scalability, recommender system, deep learning}, location = {Boston, Massachusetts, USA}, series = {RecSys '16} }

@ARTICLE{Recent_dev_in_Rec_Sys,
  author={Li, Yang and Liu, Kangbo and Satapathy, Ranjan and Wang, Suhang and Cambria, Erik},
  journal={IEEE Computational Intelligence Magazine}, 
  title={Recent Developments in Recommender Systems: A Survey [Review Article]}, 
  year={2024},
  volume={19},
  number={2},
  pages={78-95},
  keywords={Collaborative software;Recommender systems;Surveys;Taxonomy;Performance evaluation;Group recommendation;personalized recommendation;recommender system},}

@ARTICLE{closed_loop_opinion,
  author={Rossi, Wilbert Samuel and Polderman, Jan Willem and Frasca, Paolo},
  journal={IEEE Transactions on Control of Network Systems}, 
  title={The Closed Loop Between Opinion Formation and Personalized Recommendations}, 
  year={2022},
  volume={9},
  number={3},
  pages={1092-1103},
  keywords={Mathematical model;Recommender systems;Analytical models;Numerical models;Art;Data models;Control systems;Control systems;networked control systems}}

@article{compr_review,
      title={A Comprehensive Review of Recommender Systems: Transitioning from Theory to Practice}, 
      author={Shaina Raza and Mizanur Rahman and Safiullah Kamawal and Armin Toroghi and Ananya Raval and Farshad Navah and Amirmohammad Kazemeini},
      year={2025},
      journal={arXiv:2407.13699},
      primaryClass={cs.IR},
      pages = {1--66}
}

@article{polarization2,
author = {Santos, Fernando and  Lelkes, Yphtach and  Levin, Simon A.},
title = { Link recommendation algorithms
and dynamics of polarization in online social networks},
year = {2021},
journal = {Proceedings of the National Academy
of Sciences},
booktitle = {Proceedings of the 14th ACM Conference on Recommender Systems},
volume = {50},
number = {118}
}

@inproceedings{topic_diversification,
author = {Ziegler, Cai-Nicolas and McNee, Sean M. and Konstan, Joseph A. and Lausen, Georg},
title = {Improving recommendation lists through topic diversification},
year = {2005},
publisher = {ACM},
address = {New York, NY, USA},
booktitle = {Proceedings of the 14th International Conference on World Wide Web},
pages = {22–32},
numpages = {11},
keywords = {accuracy, collaborative filtering, diversification, metrics, recommender systems},
location = {Chiba, Japan},
series = {WWW '05}
}

@inproceedings{learning_to_recommend,
author = {Cheng, Peizhe and Wang, Shuaiqiang and Ma, Jun and Sun, Jiankai and Xiong, Hui},
title = {Learning to Recommend Accurate and Diverse Items},
year = {2017},
publisher = {International World Wide Web Conferences Steering Committee},
address = {Republic and Canton of Geneva, CHE},
booktitle = {Proceedings of the 26th International Conference on World Wide Web},
pages = {183–192},
numpages = {10},
keywords = {structural svm, recommender systems, diversity, collaborative filtering},
location = {Perth, Australia},
series = {WWW '17}
}

@article{FJ,
  title={Social influence and opinions},
  author={Friedkin, Noah E and Johnsen, Eugene C},
  journal={Journal of mathematical sociology},
  volume={15},
  number={3-4},
  pages={193--206},
  year={1990},
  publisher={Taylor \& Francis}
}

@inproceedings{diversified_recommendations,
author = {Zhang, Xiaoying and Wang, Hongning and Li, Hang},
title = {Disentangled Representation for Diversified Recommendations},
year = {2023},
publisher = {ACM},
address = {New York, NY, USA},
booktitle = {Proceedings of the Sixteenth ACM International Conference on Web Search and Data Mining},
pages = {490–498},
numpages = {9},
keywords = {disentangled user representation, recommendation diversity, recommender system},
location = {Singapore, Singapore},
series = {WSDM '23}
}

@article{altafini,
  title={Consensus problems on networks with antagonistic interactions},
  author={Altafini, Claudio},
  journal={IEEE Transactions on Automatic Control},
  volume={58},
  number={4},
  pages={935--946},
  year={2012},
  publisher={IEEE}
}

@inproceedings{performative_prediction,
  title={Performative prediction},
  author={Perdomo, Juan and Zrnic, Tijana and Mendler-D{\"u}nner, Celestine and Hardt, Moritz},
  booktitle={International Conference on Machine Learning},
  pages={7599--7609},
  year={2020},
  organization={PMLR}
}

@inproceedings{avoiding_monotony,
author = {Zhang, Mi and Hurley, Neil},
title = {Avoiding monotony: improving the diversity of recommendation lists},
year = {2008},
publisher = {ACM},
address = {New York, NY, USA},
booktitle = {Proceedings of the 2008 ACM Conference on Recommender Systems},
pages = {123–130},
numpages = {8},
keywords = {recommender system, novelty, metrics, diversity, accuracy},
location = {Lausanne, Switzerland},
series = {RecSys '08}
}

@INPROCEEDINGS{micro_macro_op_effects,
  author={Lanzetti, Nicolas and Dörfler, Florian and Pagan, Nicolò},
  booktitle={2023 62nd IEEE Conference on Decision and Control (CDC)}, 
  title={The Impact of Recommendation Systems on Opinion Dynamics: Microscopic Versus Macroscopic Effects}, 
  year={2023},
  volume={},
  number={},
  pages={4824-4829},
  keywords={Surveys;Web services;Social networking (online);Navigation;Microscopy;Sociology;Particle measurements},
 }

@ARTICLE{ctrl_strat_for_rec_sys,
  author={Sprenger, Ben and De Pasquale, Giulia and Soloperto, Raffaele and Lygeros, John and Dörfler, Florian},
  journal={IEEE Control Systems Letters}, 
  title={Control Strategies for Recommendation Systems in Social Networks}, 
  year={2024},
  volume={8},
  number={},
  pages={634-639},
  keywords={Recommender systems;Social networking (online);Vectors;Numerical models;TV;Symbols;Predictive models;Recommender systems;opinion dynamics;model predictive control}}

@misc{network_aware_rec_sys_via_feedback,
      title={Network-aware Recommender System via Online Feedback Optimization}, 
      author={Sanjay Chandrasekaran and Giulia De Pasquale and Giuseppe Belgioioso and Florian Dörfler},
      year={2024},
      eprint={2408.16899},
      archivePrefix={arXiv},
      primaryClass={eess.SY}
}

@article{diversify_any_recommender,
      title={How to Diversify any Personalized Recommender?}, 
      author={Manel Slokom and Savvina Danil and Laura Hollink},
      year={2025},
      journal={Advances in Information Retrieval. ECIR 2025},
     pages ={1--17},
      volume={15575}
}

@inproceedings{srs_trustworthy,
author = {Hassan, Taha},
title = {Trust and Trustworthiness in Social Recommender Systems},
year = {2019},
publisher = {ACM},
address = {New York, NY, USA},
booktitle = {Companion Proceedings of The 2019 World Wide Web Conference},
pages = {529–532},
numpages = {4},
keywords = {trust, transparency, social recommendation, quality, explainability, diversification, collaborative filtering, Recommender systems},
location = {San Francisco, USA},
series = {WWW '19}
}

@inproceedings{srs_firstpaper,
  title={Sorec: social recommendation using probabilistic matrix factorization},
  author={Ma, Hao and Yang, Haixuan and Lyu, Michael R and King, Irwin},
  booktitle={Proceedings of the 17th ACM conference on Information and knowledge management},
  pages={931--940},
  year={2008}
}

@article{survey_SRS,
  title={A survey of collaborative filtering based social recommender systems},
  author={Yang, Xiwang and Guo, Yang and Liu, Yong and Steck, Harald},
  journal={Computer communications},
  volume={41},
  pages={1--10},
  year={2014},
  publisher={Elsevier}
}

@article{weightedmedian,
  title = {Micro-foundation of opinion dynamics: Rich consequences of the weighted-median mechanism},
  author = {Mei, Wenjun and Bullo, Francesco and Chen, Ge and Hendrickx, Julien M. and D\"orfler, Florian},
  journal = {Phys. Rev. Res.},
  volume = {4},
  issue = {2},
  pages = {023213},
  numpages = {10},
  year = {2022},
  month = {Jun},
  publisher = {American Physical Society},
}

@article{rs_polarization,
  title={The Effect of People Recommenders on Echo Chambers and Polarization},
  author={Cinus, Federico and Minici, Marco and Monti, Corrado and Bonchi, Francesco},
  year={2022},
journal={Sixteenth International AAAI Conference on Web and Social Media (ICWSM 2022)},
pages={90--101}
}

@article{modelling_closed_loop_op_for_social_media,
      title={Modelling the Closed Loop Dynamics Between a Social Media Recommender System and Users' Opinions}, 
      author={Ella C. Davidson and Mengbin Ye},
      year={2025},
      pages = {1--27},
      journal={arXiv:2507.19792}
}

@inproceedings{set_oriented_pers_rec, 
author = {Su, Ruilong and Yin, Li'Ang and Chen, Kailong and Yu, Yong}, 
title = {Set-oriented personalized ranking for diversified top-n recommendation}, 
year = {2013},  
publisher = {ACM}, 
address = {New York, NY, USA}, 
booktitle = {Proceedings of the 7th ACM Conference on Recommender Systems}, pages = {415–418}, 
numpages = {4}, 
keywords = {collaborative filtering, personalized diversity, recommender systems, set-oriented pairwise ranking}, 
location = {Hong Kong, China}, 
series = {RecSys '13} }

@inproceedings{ego_facebook_data, author = {McAuley, Julian and Leskovec, Jure}, title = {Learning to discover social circles in ego networks}, year = {2012}, publisher = {Curran Associates Inc.}, address = {Red Hook, NY, USA}, abstract = {Our personal social networks are big and cluttered, and currently there is no good way to organize them. Social networking sites allow users to manually categorize their friends into social circles (e.g. 'circles' on Google+, and 'lists' on Facebook and Twitter), however they are laborious to construct and must be updated whenever a user's network grows. We define a novel machine learning task of identifying users' social circles. We pose the problem as a node clustering problem on a user's ego-network, a network of connections between her friends. We develop a model for detecting circles that combines network structure as well as user profile information. For each circle we learn its members and the circle-specific user profile similarity metric. Modeling node membership to multiple circles allows us to detect overlapping as well as hierarchically nested circles. Experiments show that our model accurately identifies circles on a diverse set of data from Facebook, Google+, and Twitter for all of which we obtain hand-labeled ground-truth.}, booktitle = {Proceedings of the 26th International Conference on Neural Information Processing Systems - Volume 1}, pages = {539–547}, numpages = {9}, location = {Lake Tahoe, Nevada}, series = {NIPS'12} }

@article{Homophilly_Pherson,
 ISSN = {03600572, 15452115},
 URL = {http://www.jstor.org/stable/2678628},
 abstract = {Similarity breeds connection. This principle--the homophily principle--structures network ties of every type, including marriage, friendship, work, advice, support, information transfer, exchange, comembership, and other types of relationship. The result is that people's personal networks are homogeneous with regard to many sociodemographic, behavioral, and intrapersonal characteristics. Homophily limits people's social worlds in a way that has powerful implications for the information they receive, the attitudes they form, and the interactions they experience. Homophily in race and ethnicity creates the strongest divides in our personal environments, with age, religion, education, occupation, and gender following in roughly that order. Geographic propinquity, families, organizations, and isomorphic positions in social systems all create contexts in which homophilous relations form. Ties between nonsimilar individuals also dissolve at a higher rate, which sets the stage for the formation of niches (localized positions) within social space. We argue for more research on: (a) the basic ecological processes that link organizations, associations, cultural communities, social movements, and many other social forms; (b) the impact of multiplex ties on the patterns of homophily; and (c) the dynamics of network change over time through which networks and other social entities co-evolve.},
 author = {Miller McPherson and Lynn Smith-Lovin and James M. Cook},
 journal = {Annual Review of Sociology},
 pages = {415--444},
 publisher = {Annual Reviews},
 title = {Birds of a Feather: Homophily in Social Networks},
 urldate = {2025-11-17},
 volume = {27},
 year = {2001}
}

@article{spectral_clustering,
author = {Luxburg, Ulrike},
year = {2004},
month = {01},
pages = {395-416},
title = {A Tutorial on Spectral Clustering},
volume = {17},
journal = {Statistics and Computing},
doi = {10.1007/s11222-007-9033-z}
}
\bibliographystyle{iclr2026_conference}

\appendix

\section{Multi-Topic Friedkin-Johnsen Model}\label{app:multi_topic}
The multi-topic FJ model \cite{multi_topic_FJ} extends the classical scalar model in \eqref{eqn:social-dyn} to the case where each user holds opinions on multiple topics simultaneously. Let $u_i^t \in \mathbb{R}^n$ denote the opinion vector of user $i$ at time $t$, where each entry corresponds to a distinct topic. Stacking all $N$ users’ opinions into a single vector ${\bf u} \in \mathbb{R}^{Nn}$, the opinion update rule can be written as
\begin{equation} 
    f(\mathbf{u}^{t}) = ((I_N-\Lambda)\hat{A}\otimes C)\mathbf{u}^{t} + (\Lambda\otimes I_n) \mathbf{u}^{0},
\end{equation}
where $\hat{A}\in \mathbb{R}^{N \times N}$ is the row-stochastic influence matrix describing interpersonal influence in the network, $\Lambda \in \mathbb{R}^{N \times N}$ is the diagonal susceptibility matrix capturing how attached each influence is to their own prejudice $u^0 \in \mathbb{R}^{Nn}$ versus the social influence, and $\otimes$ denotes the Kronecker product. The matrix $C$ is a correlation matrix among different topics.  We consider the case of uncorrelated topics, and hence set $C = I_n$. In this special case, the model reduces to
\begin{equation} \label{eqn:social-dyn_multi_t}
    \green{f}(\mathbf{u}^{t}) = \green{((I_N-\Lambda)\hat{A}\otimes I_n)}\mathbf{u}^{t} + \green{(\Lambda\otimes I_n)} \mathbf{u}^{0}.
\end{equation}

\section{Environment Variables for Synthetic Dataset} 
\subsection{Parameters for user-creator Dynamics}\label{appendix:model_parameters}
The parameters governing the dynamics in \eqref{eqn:dynamics} are sampled independently from uniform distributions with bounds given in Table~\ref{tab:simulation_parameters}.
\begin{table}[htbp]
\centering
\caption{Simulation Parameters for Uniform Distribution Sampling}
\label{tab:simulation_parameters}
\begin{tabular}{lcc}
\toprule
\textbf{Parameter} & \textbf{Lower Bound} & \textbf{Upper Bound} \\
\midrule
\multicolumn{3}{l}{\textit{User Parameters}} \\
User Stubbornness $\Lambda_i$ & 0.0 & 0.5 \\
User Self-Influence $A_{ii}$& 0.5 & 0.8 \\
Recommender Influence $B_{ij}$ & 0.2 & 0.8 \\
Neighbor Influence $A_{ij}$ & 0.025 & 0.05 \\
\midrule
\multicolumn{3}{l}{\textit{Creator Parameters}} \\
Creator Stubbornness  $\Gamma_j$ & 0.0 & 0.5 \\
Creator Self-Influence $E_j$& 0.5 & 0.8 \\
User-Creator Influence $C_{j}$& 0.2 & 0.8 \\
\bottomrule
\end{tabular}
\end{table}

The user-creator influence is evenly distributed among the audience set of creator $j$; specifically, for creator $j$ with audience set $\mathcal{F}_j$, each user $i \in \mathcal{F}_j$ exerts influence $C_{ji} = C_j / |\mathcal{F}_j|$. The overall social influence on user $i$ is determined by summing the influences from all neighbors. Each user is influenced by exactly one creator. Thus, referring to the FJ model in \cref{app:multi_topic}, we obtain the stochastic constraints: $A_{ii} + B_{ij} + \sum_{j =0}^{N-1} A_{ij} = 1$ for users and $C_j + E_j = 1$ for creators. 

\subsection{User-User interaction Probability}\label{appendix:connection_prob_synthetic}
For each user $j$, we assume the connection to any other user with a probability that is given as:
$$\text{Prob(user $j$ influences user $i$)} = \exp(-\delta ||u^0_i - u_j^0||_2^2)$$
where $u_i^0, u_j^0 \in [-1,1]^n$ are the opinion vectors of users $i$ and $j$ at time $0$, and $\delta > 0$ is a parameter controlling the connectivity of the network. Different parameters of $\delta$ lead to different number of connections. Different choices of parameter $\delta$ and the resulting average node degrees are displayed in Table~\ref{tab:gamma_connectivity}. We choose $\delta=9$, to recover $11$ neighbors.
\begin{table}[htbp]
\centering
\caption{Network connectivity for different parameters $\delta$ with $N= 600$ users initialized randomly}
\label{tab:gamma_connectivity}
\begin{tabular}{cc}
\toprule
\textbf{Parameter} $\delta$ & \textbf{Average Connections} \\
\midrule
6 & 21 \\
7 & 17 \\
8 & 14 \\
9 & 11 \\
\bottomrule
\end{tabular}
\end{table}

\section{Results for variational environment on synthetic data}\label{appendix:k_analysis_synthetic}

\subsection{Varying the Simulation Parameters for the dynamics}
We increase the number of social interactions by setting $\delta=6$, yielding an average of 21 connections per user. The expanded social network exposes users to a broader spectrum of opinions through peer interactions. \Cref{fig:varying_lambda_50_soc1} and \cref{fig:varying_lambda_500_soc1} present the global clusterization and satisfaction metrics after 50 and 500 timesteps, respectively. The results demonstrate that increased social connectivity mitigates clusterization, even under the greedy recommender system ($d=0$).
\begin{figure}[h!]
    \centering
    \begin{subfigure}[b]{0.36\textwidth}
        \centering
        \includegraphics[width=\textwidth]{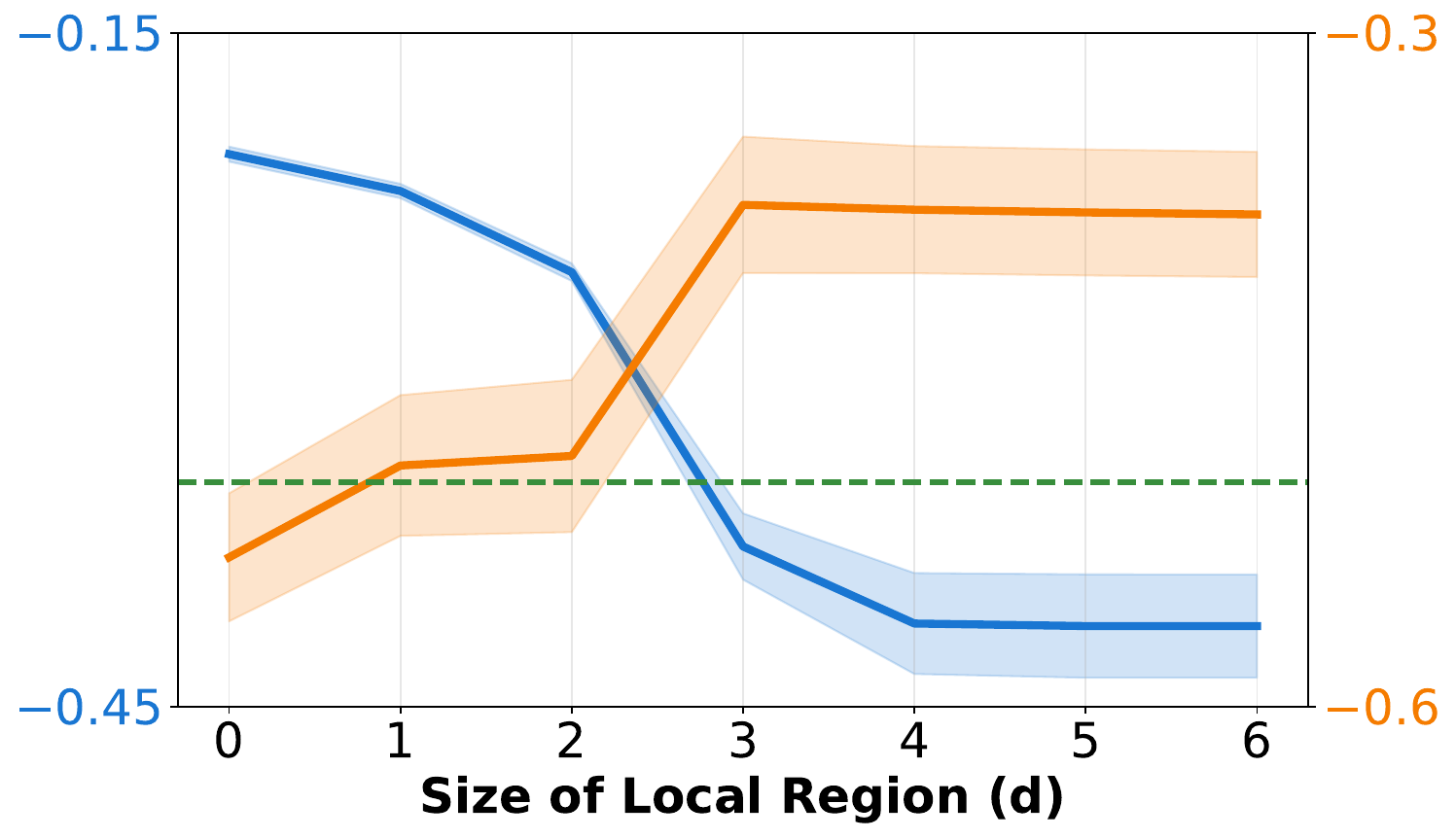}
        \caption{After 50 timesteps}
        \label{fig:varying_lambda_50_soc1}
    \end{subfigure}
    \hfill
    \begin{subfigure}[b]{0.36\textwidth}
        \centering
        \includegraphics[width=\textwidth]{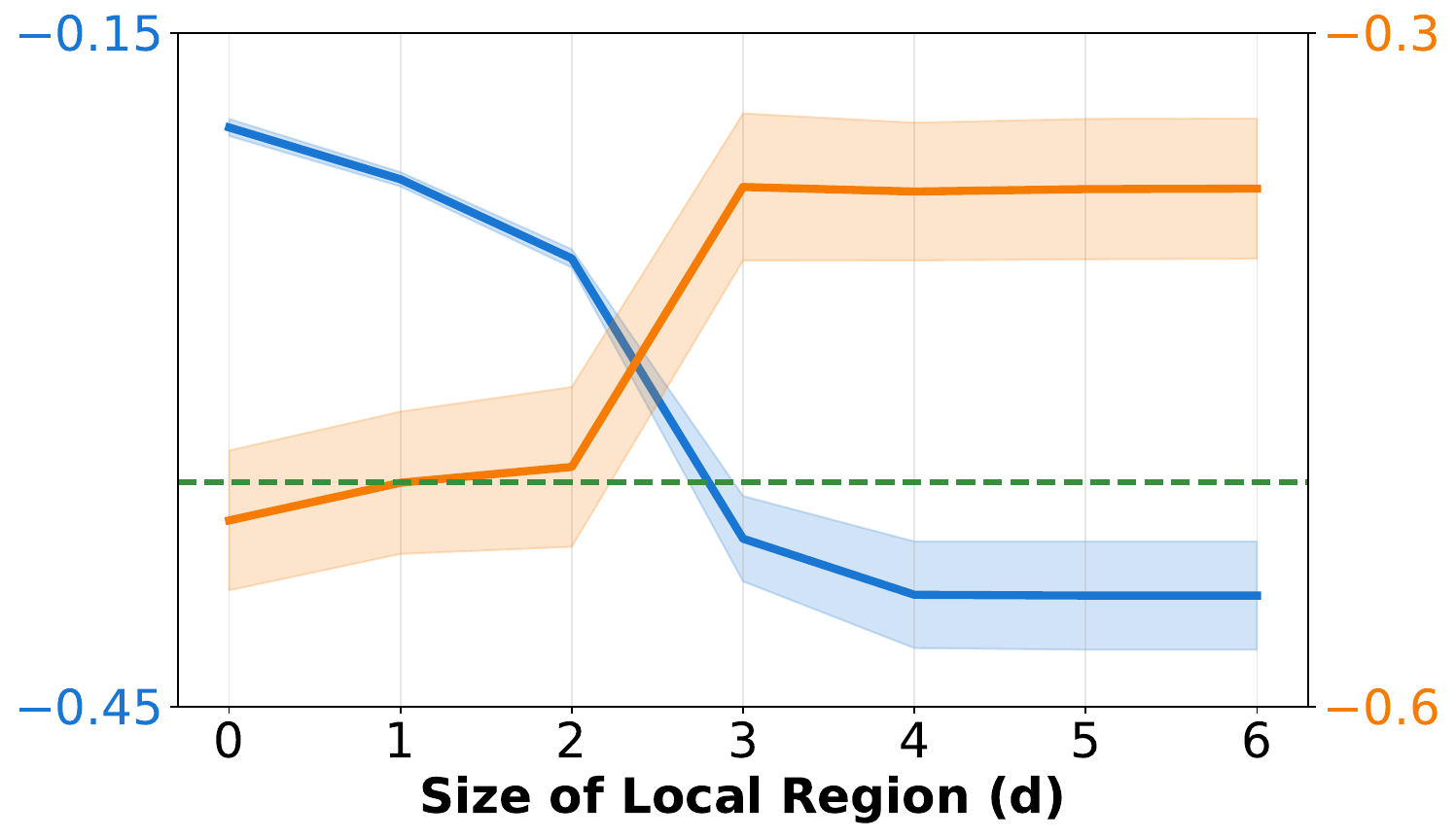}
        \caption{After 500 timesteps}
        \label{fig:varying_lambda_500_soc1}
    \end{subfigure}
    \hfill
    \begin{subfigure}[b]{0.25\textwidth}
        \centering
        \raisebox{1.5cm}{%  % Adjust this value to center vertically
        \small
            \begin{tabular}{cl}
                \textcolor[RGB]{249,124,0}{\rule{0.2cm}{0.1cm}} & Neg. Clusterization \\[0.2cm]
                \textcolor[RGB]{25,118,210}{\rule{0.2cm}{0.1cm}} & Satisfaction \\ [0.2cm]
                \textcolor[RGB]{56,142,60}{\rule{0.2cm}{0.1cm}} & Clusterization Thresh.
            \end{tabular}
        }
        \caption*{}  % Empty caption to maintain alignment
    \end{subfigure}
    \caption{Global clusterization and global user satisfaction plotted as $d$ varies after (a) 50 and (b) 500 timesteps with more social interactions as opposed to \cref{fig:varying_lambda}. Clusterization thresh is set to $-0.5$, for which clusters are no longer distinguishable.}
    \label{fig:varying_lambda_soc}
\end{figure}

\subsection{Varying k for top-k}

The opinion dynamics under socially-aware recommender systems are examined for localization parameters $d \in \{0, 3\}$ and recommendation set sizes $k \in \{1, 2, 3, 4\}$. Figures~\ref{fig:dynamics_k1}--\ref{fig:dynamics_k4} display the corresponding opinion landscapes.
For $k=1$, the dynamics becomes stationary after $t=20$ (\cref{fig:dynamics_k1}). Cluster visualization is omitted given the low global clusterization score.
The number of clusters increases as $k$ decreases, reflecting reduced creator-user interaction diversity. The recommender system with $d=3$ produces fewer clusters than $d=0$ across all values of $k$.

\begin{figure*}[htbp]
    \centering
    \setlength{\tabcolsep}{2pt} % Minimal spacing between columns
    
    \begin{tabular}{c@{\hspace{0.1cm}}c@{\hspace{0.1cm}}c@{\hspace{0.1cm}}c}
        % Header row
        $t = 5$ & $t = 20$ & $t = 50$ & $t=500$\\[1mm]
        
        % First row with images
        \includegraphics[width=0.23\textwidth]{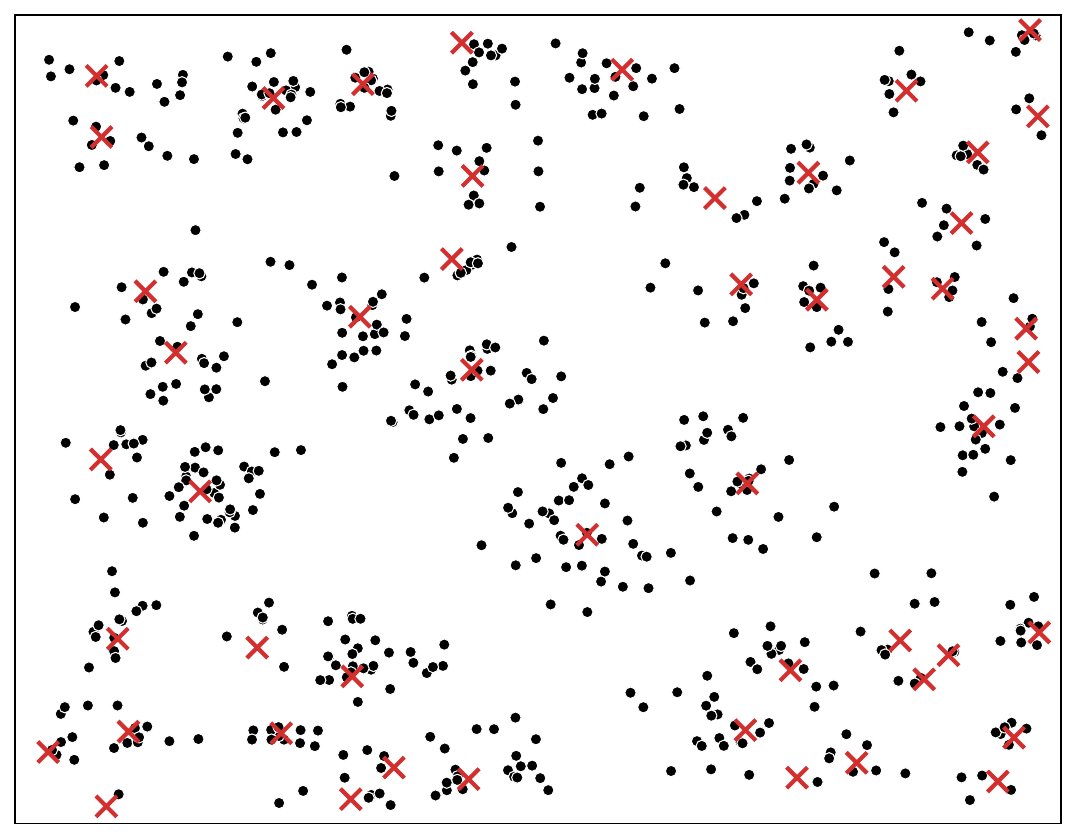} &
        \includegraphics[width=0.23\textwidth]{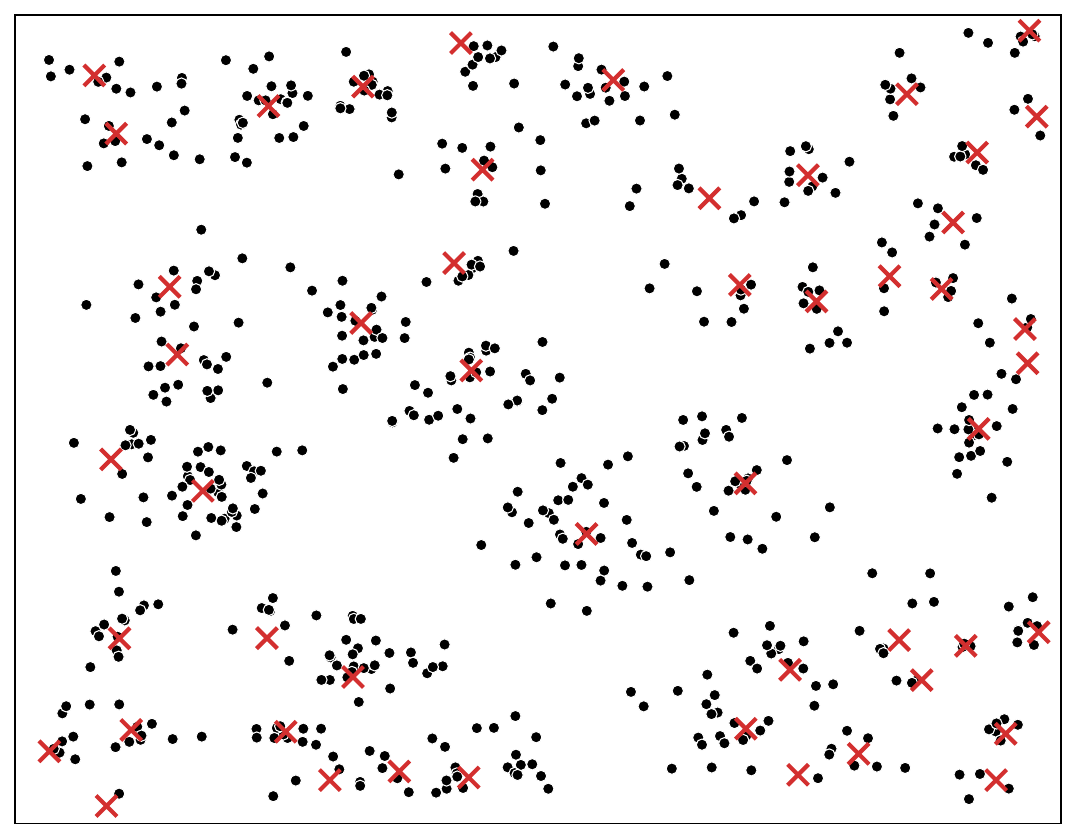} &
        \includegraphics[width=0.23\textwidth]{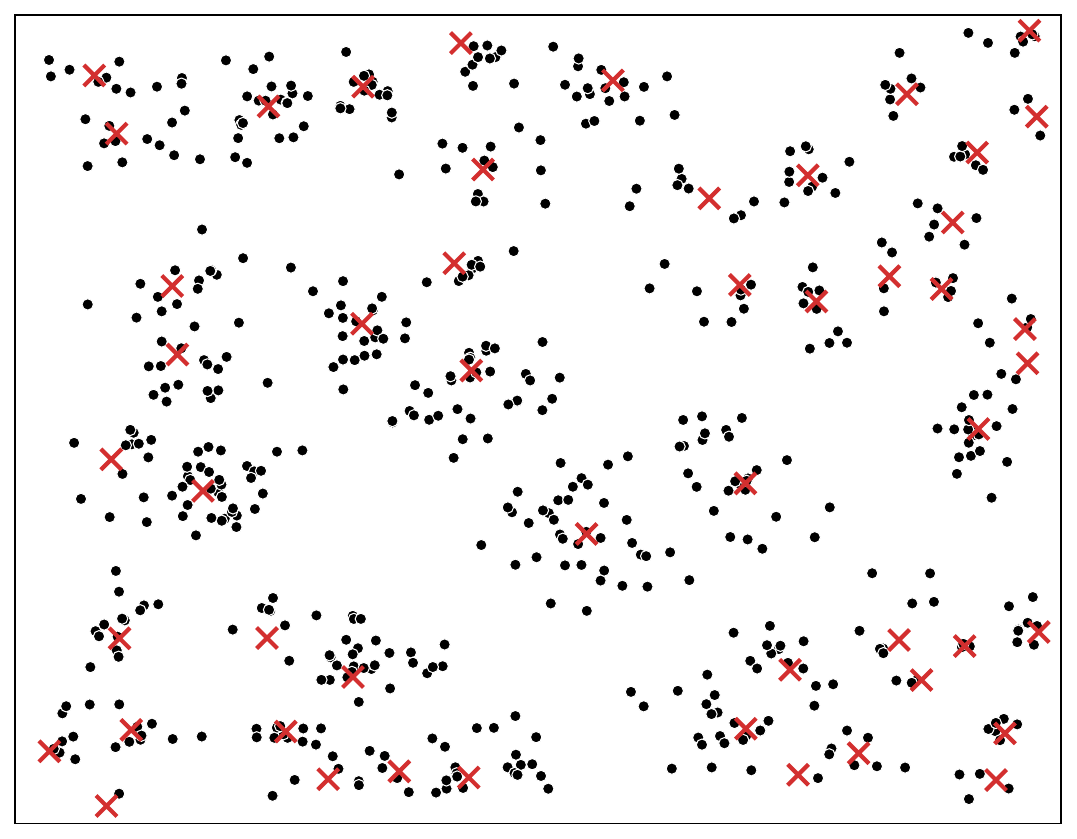} &
        \includegraphics[width=0.23\textwidth]{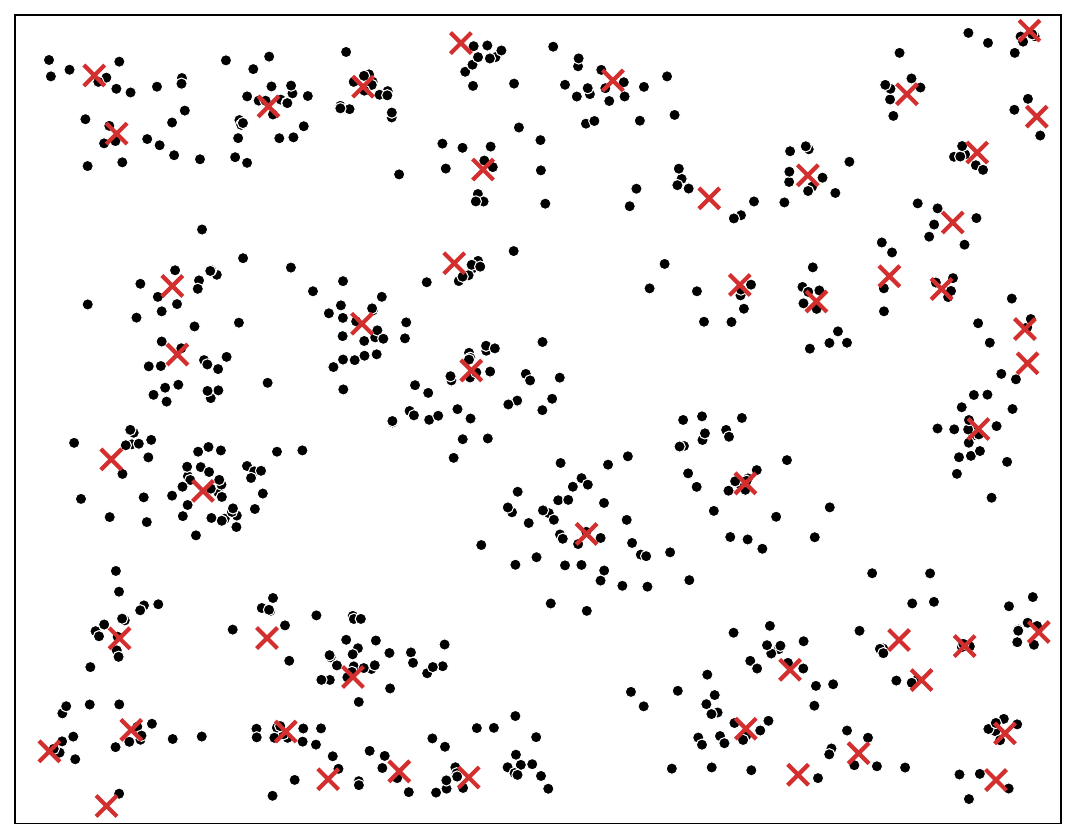}\\[1mm] 
        \end{tabular}
    \begin{tabular}{c@{\hspace{0.1cm}}c@{\hspace{0.1cm}}c@{\hspace{0.1cm}}c}
        % Header row
        
        % First row with images
        \includegraphics[width=0.23\textwidth]{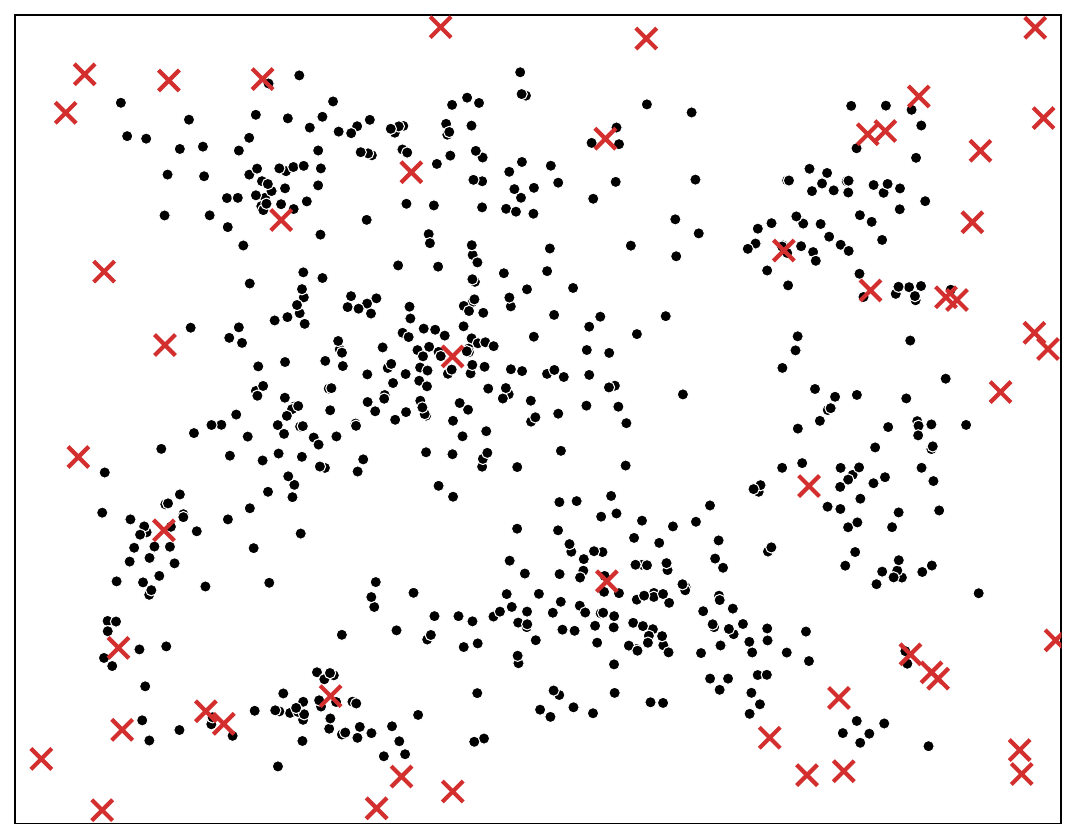} &
        \includegraphics[width=0.23\textwidth]{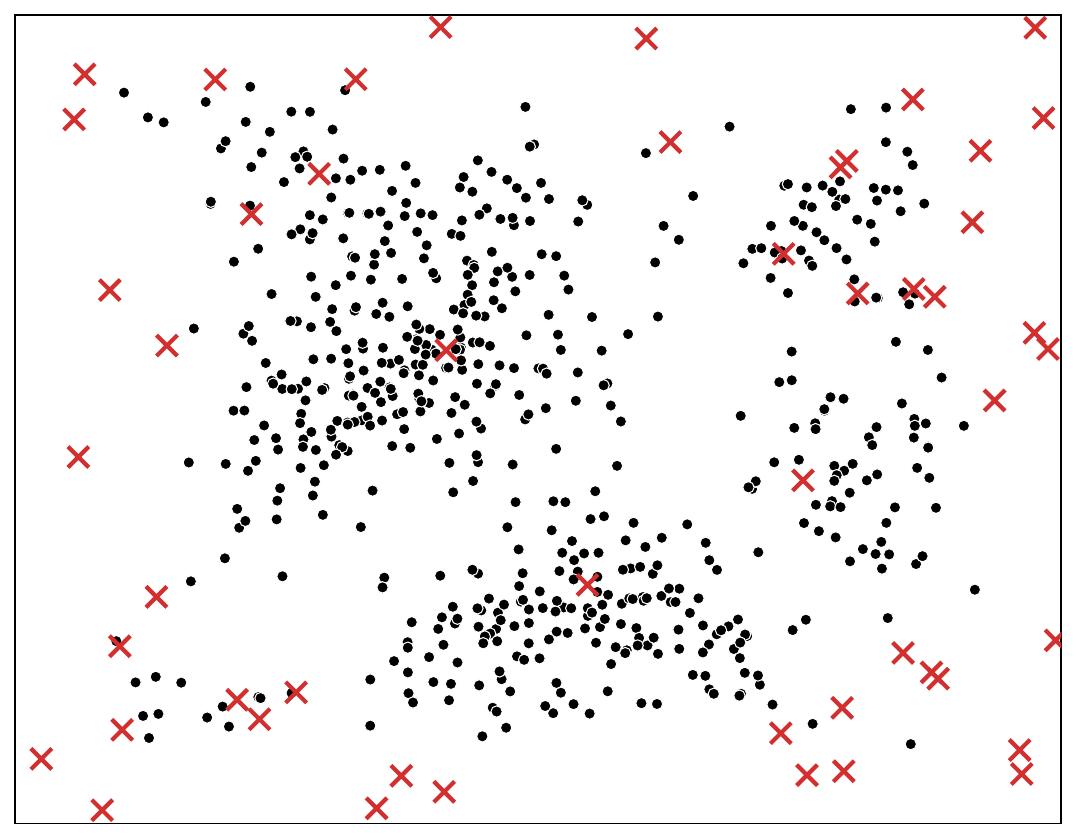} &
        \includegraphics[width=0.23\textwidth]{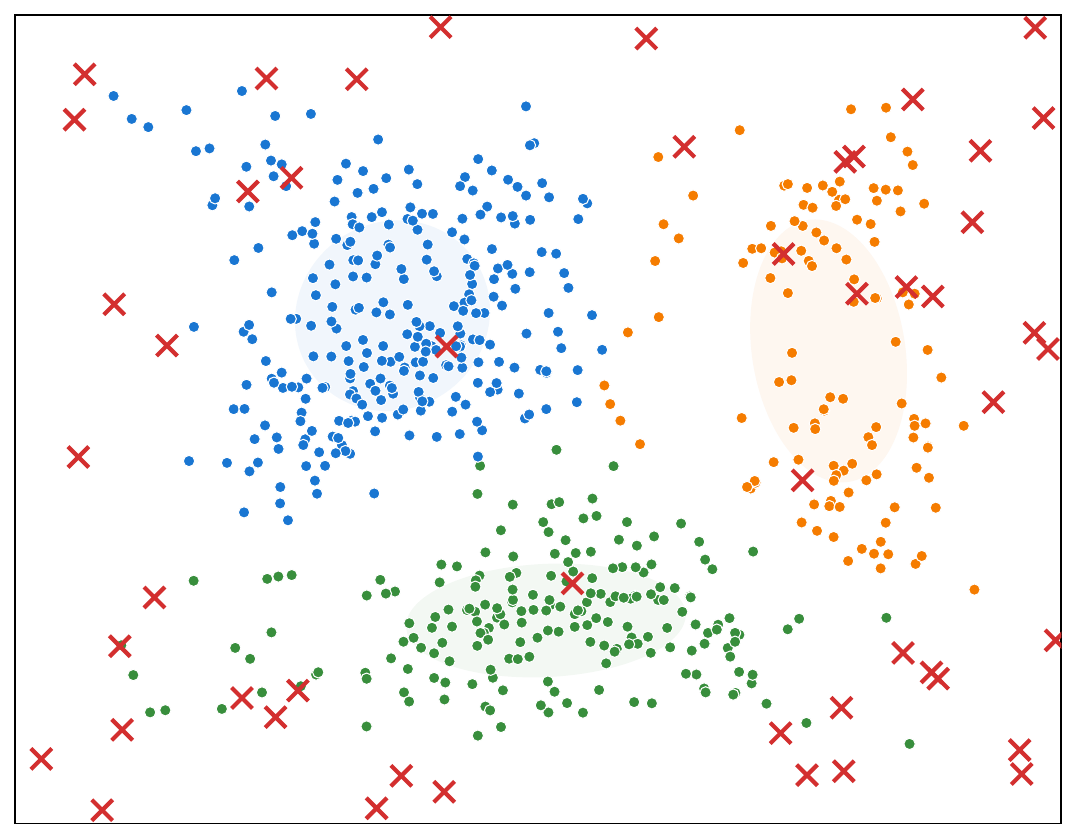} &  
        \includegraphics[width=0.23\textwidth]{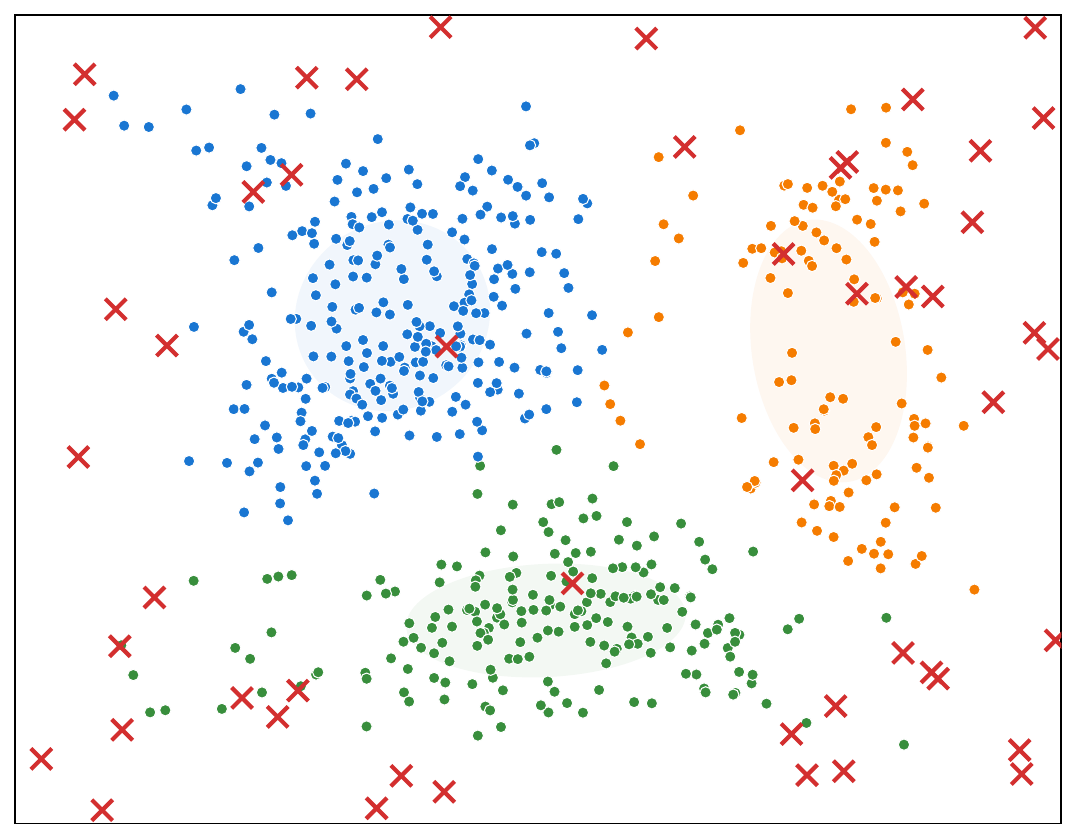}\\[1mm]
    \end{tabular}
    
    \caption{Snapshots of the opinion environment with $k=1$ simulated with a localized region $d=0$, first row,  $d=3$, second row. \textcolor{red}{$\times$} denotes the creator, $\bullet$ the users respectively.}
    \label{fig:dynamics_k1}
\end{figure*}

\begin{figure*}[htbp]
    \centering
    \setlength{\tabcolsep}{2pt} % Minimal spacing between columns
    
    \begin{tabular}{c@{\hspace{0.1cm}}c@{\hspace{0.1cm}}c@{\hspace{0.1cm}}c}
        % Header row
        $t = 5$ & $t = 20$ & $t = 50$ & $t = 500$\\[1mm]
        
        % First row with images
        \includegraphics[width=0.23\textwidth]{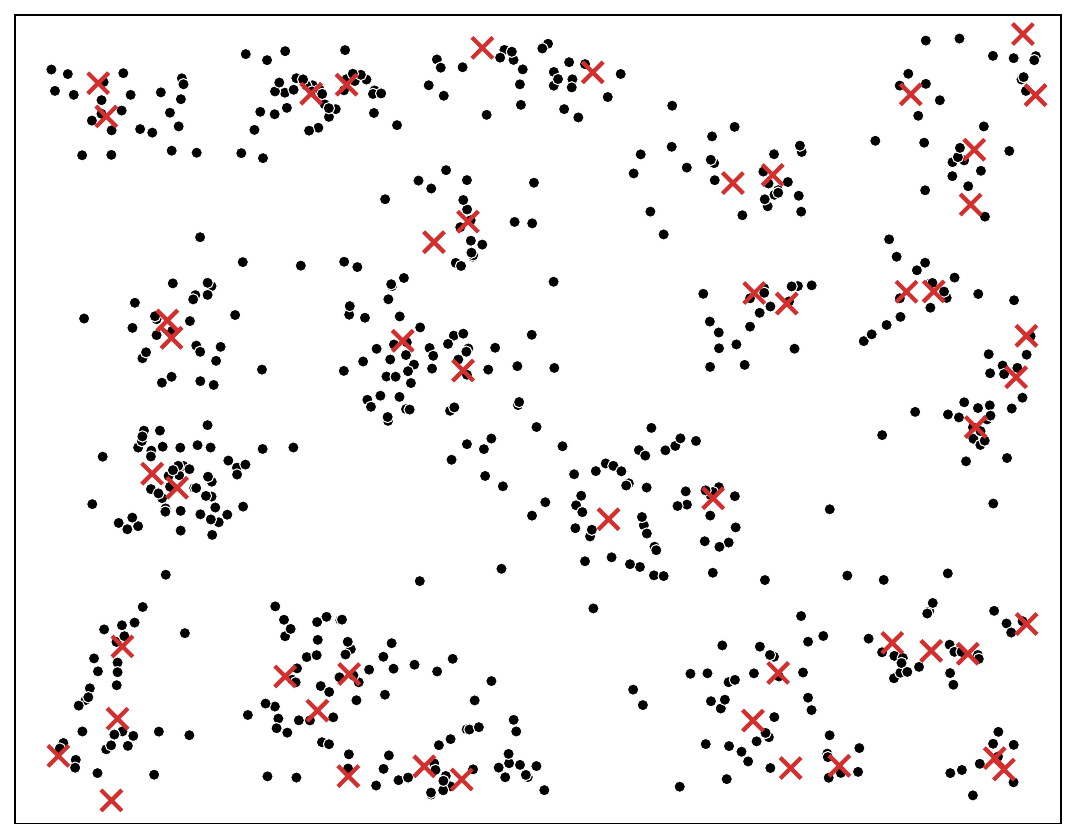} &
        \includegraphics[width=0.23\textwidth]{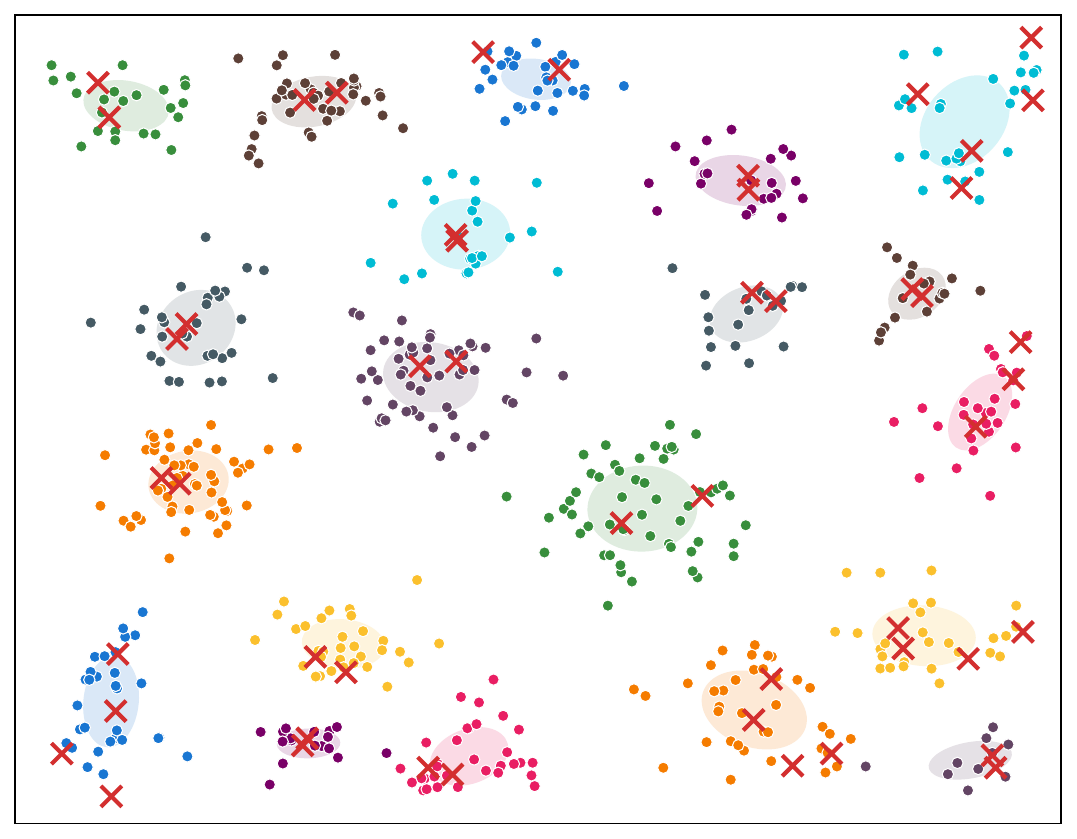} &
        \includegraphics[width=0.23\textwidth]{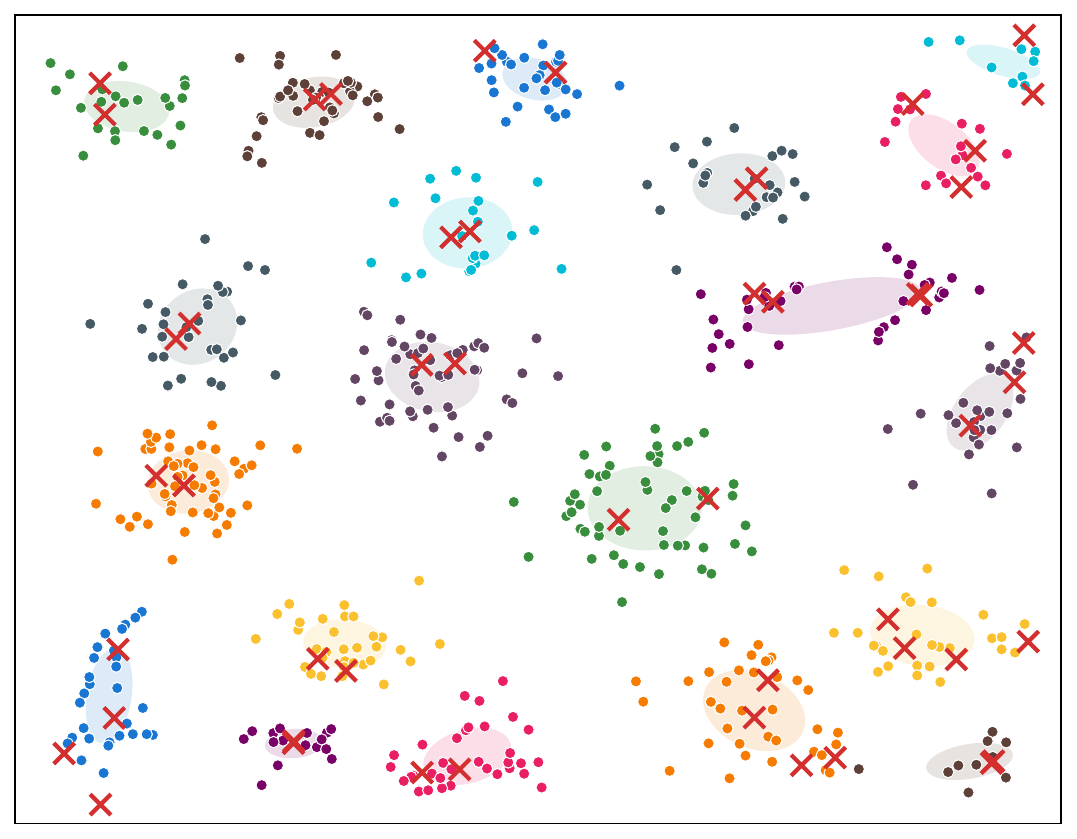} &
        \includegraphics[width=0.23\textwidth]{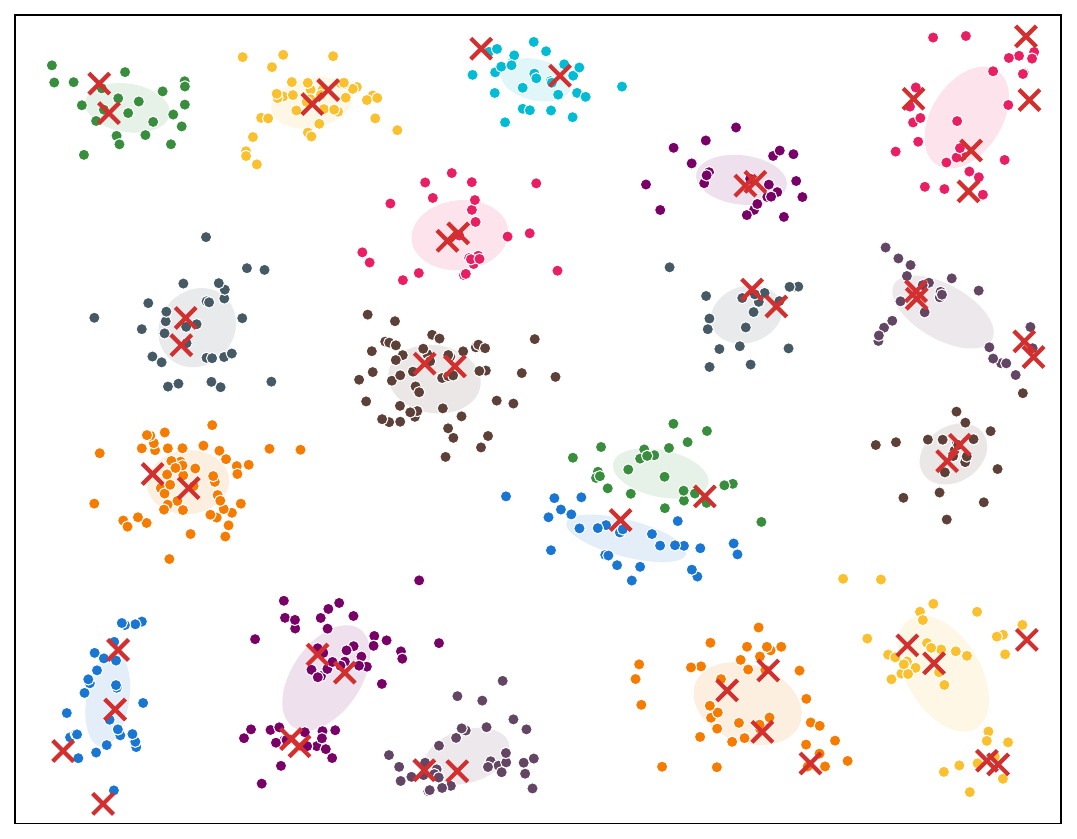}\\[1mm]
    \end{tabular}
    \begin{tabular}{c@{\hspace{0.1cm}}c@{\hspace{0.1cm}}c@{\hspace{0.1cm}}c}
        % Header row
        
        % First row with images
        \includegraphics[width=0.23\textwidth]{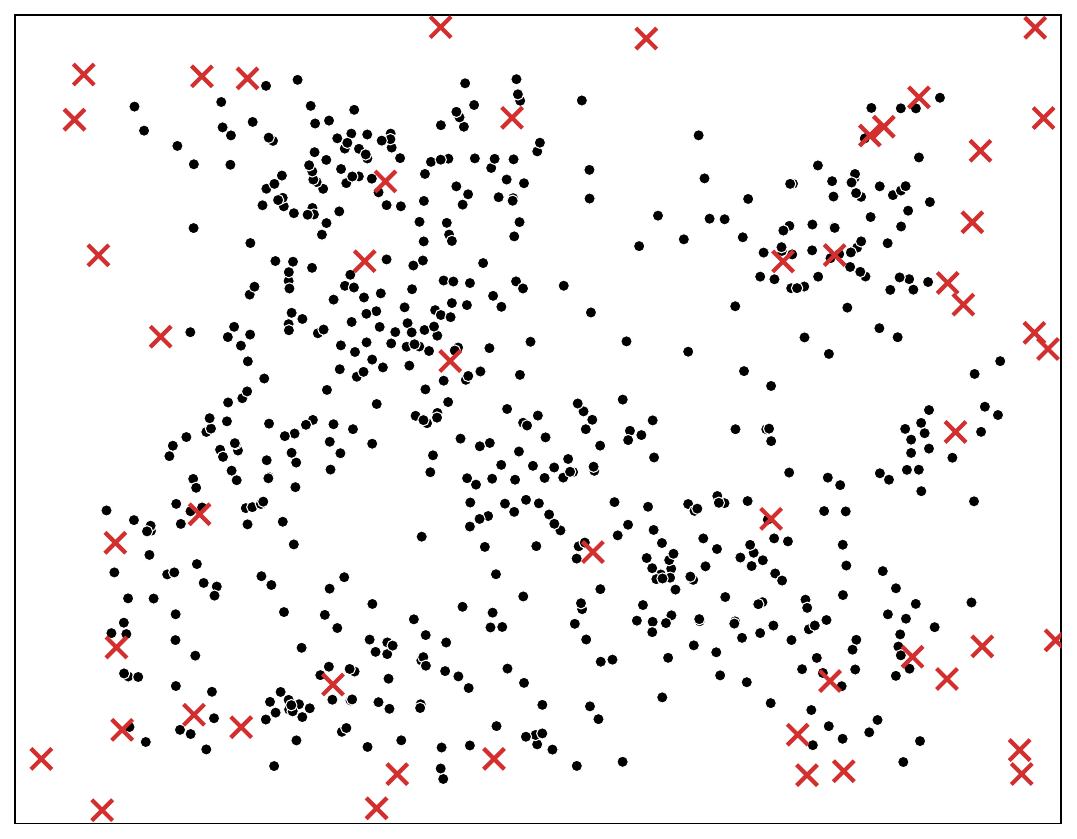} &
        \includegraphics[width=0.23\textwidth]{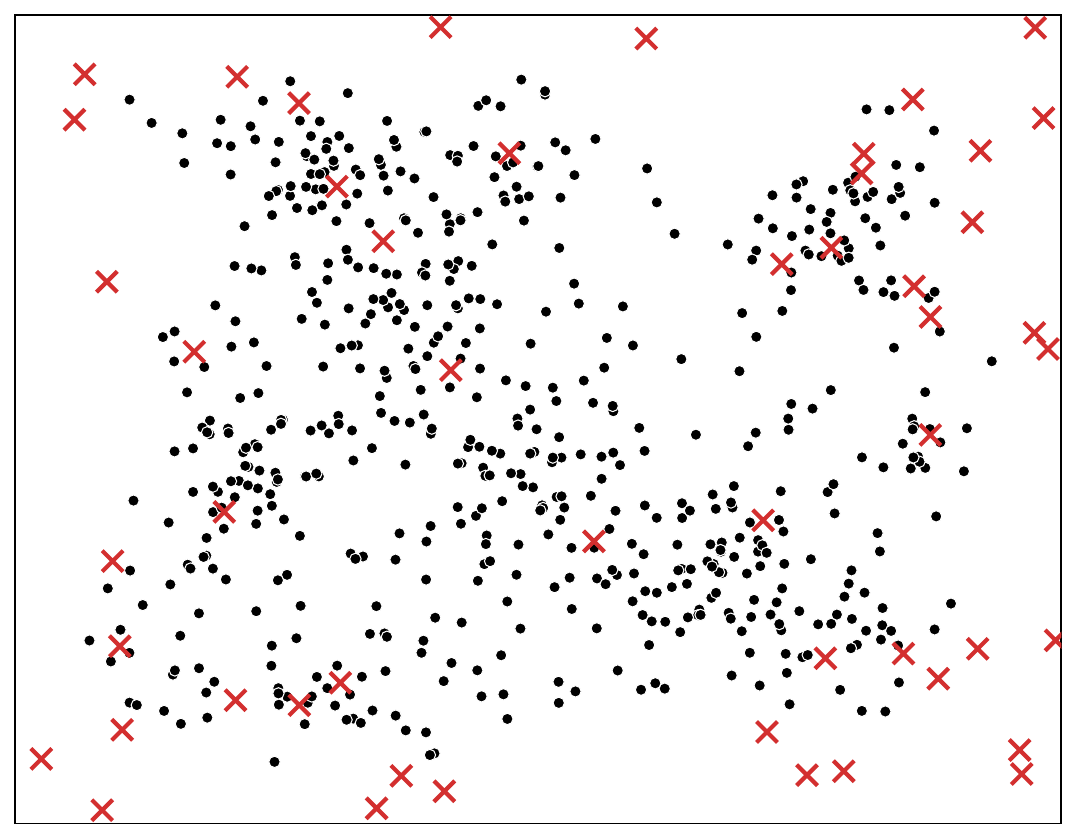} &
        \includegraphics[width=0.23\textwidth]{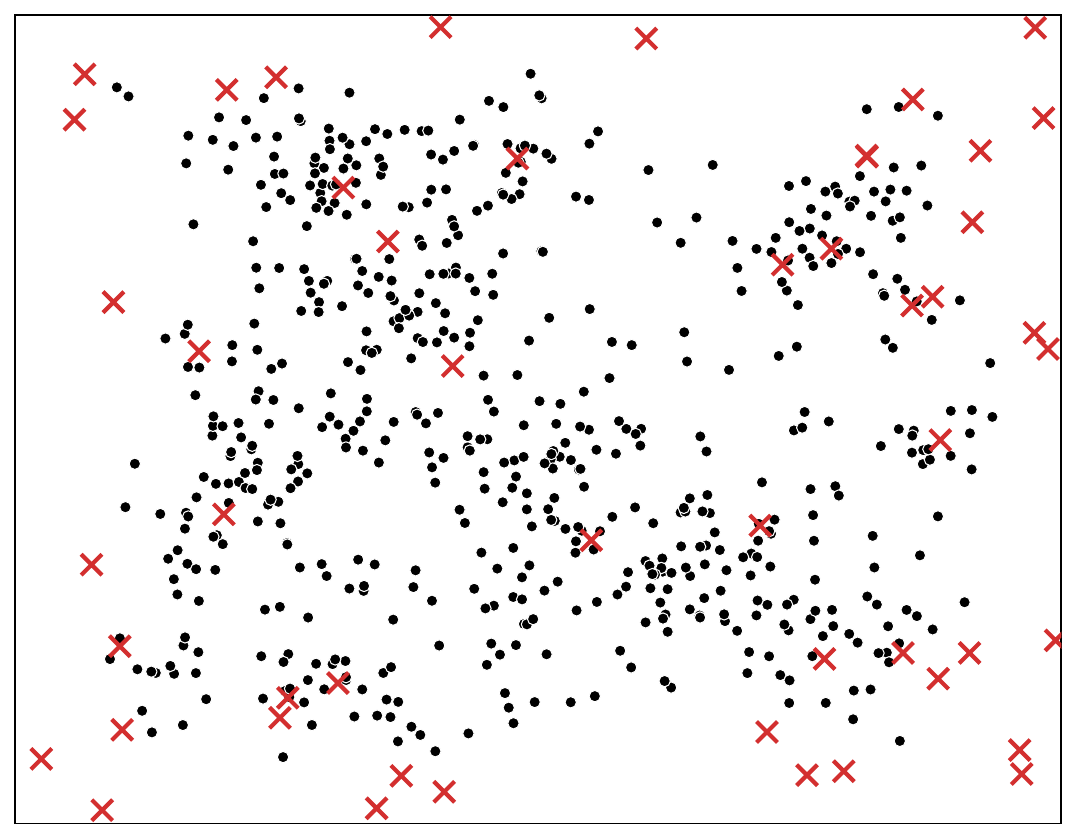} &
        \includegraphics[width=0.23\textwidth]{app_plot_k2_d3hor50_clus_T5_.pdf}\\[1mm]
    \end{tabular}
    
    \caption{Snapshots of the opinion environment with $k=2$ simulated with a localized region $d=0$, first row,  $d=3$, second row. \textcolor{red}{$\times$} denotes the creator, $\bullet$ the users respectively.}
    \label{fig:dynamics_k2}
\end{figure*}

\begin{figure*}[htbp]
    \centering
    \setlength{\tabcolsep}{2pt} % Minimal spacing between columns
    
    \begin{tabular}{c@{\hspace{0.1cm}}c@{\hspace{0.1cm}}c@{\hspace{0.1cm}}c}
        % Header row
        $t = 5$ & $t = 20$ & $t = 50$ & $t=500$\\[1mm]
        
        % First row with images
        \includegraphics[width=0.23\textwidth]{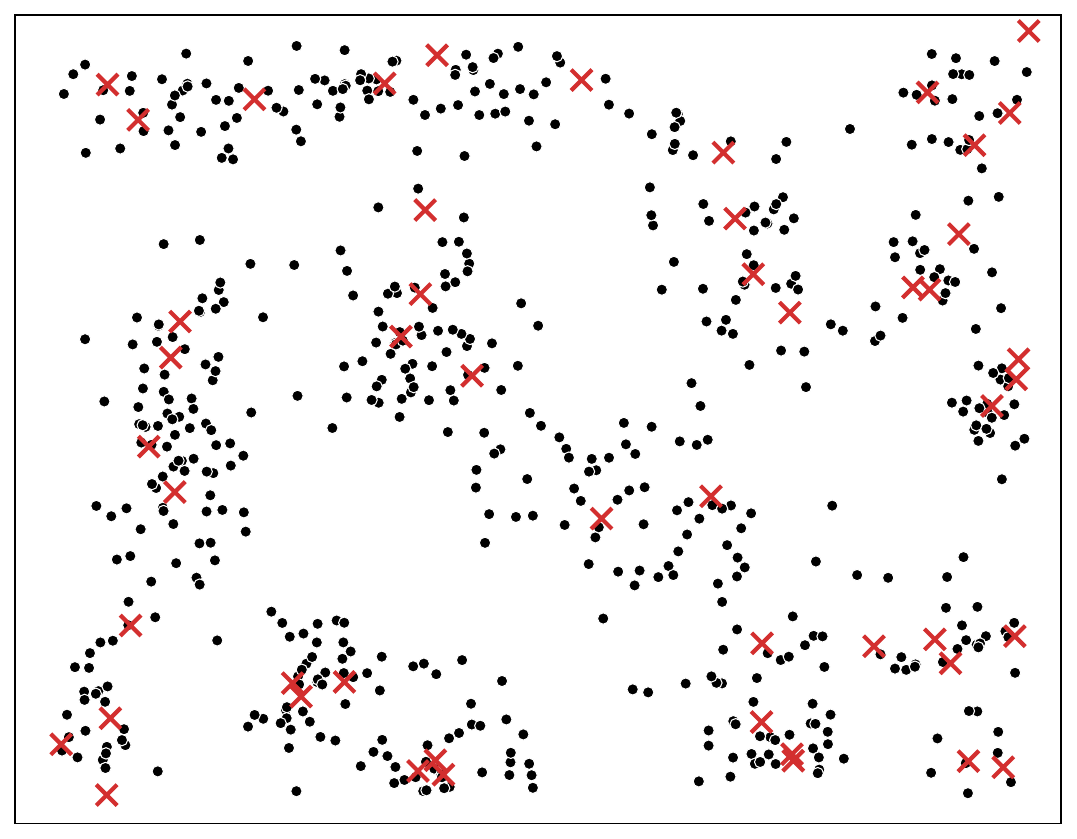} &
        \includegraphics[width=0.23\textwidth]{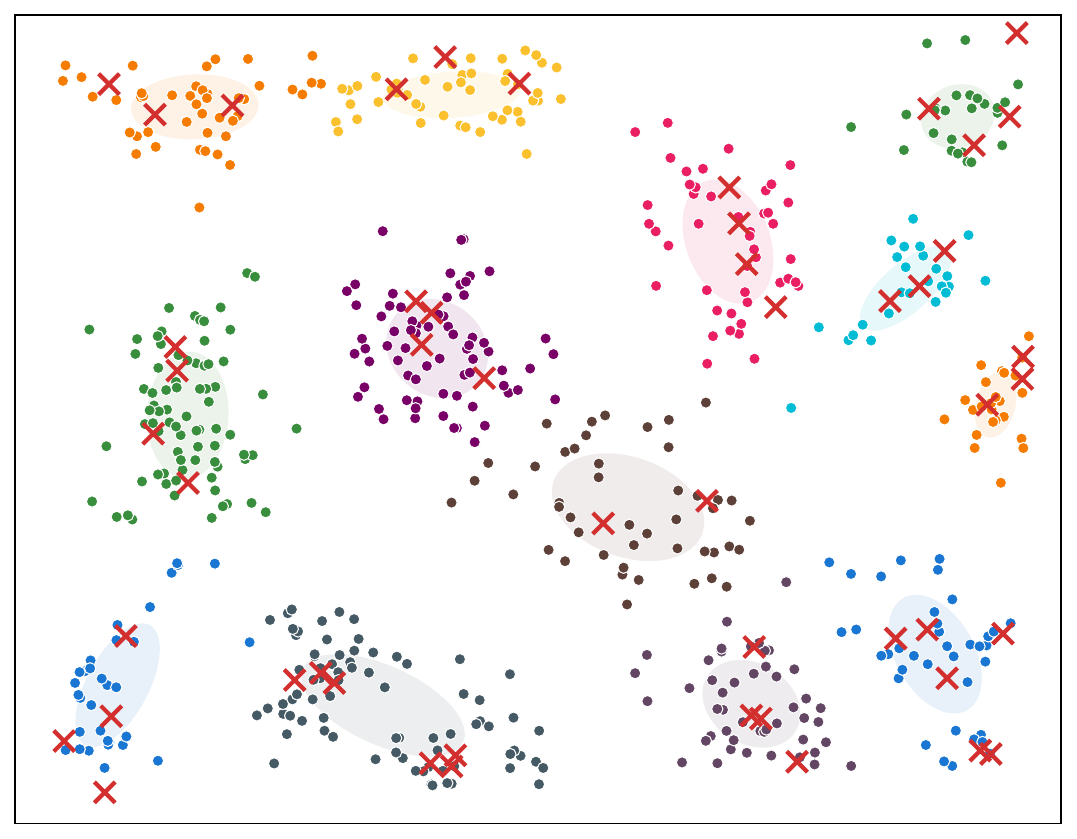} &
        \includegraphics[width=0.23\textwidth]{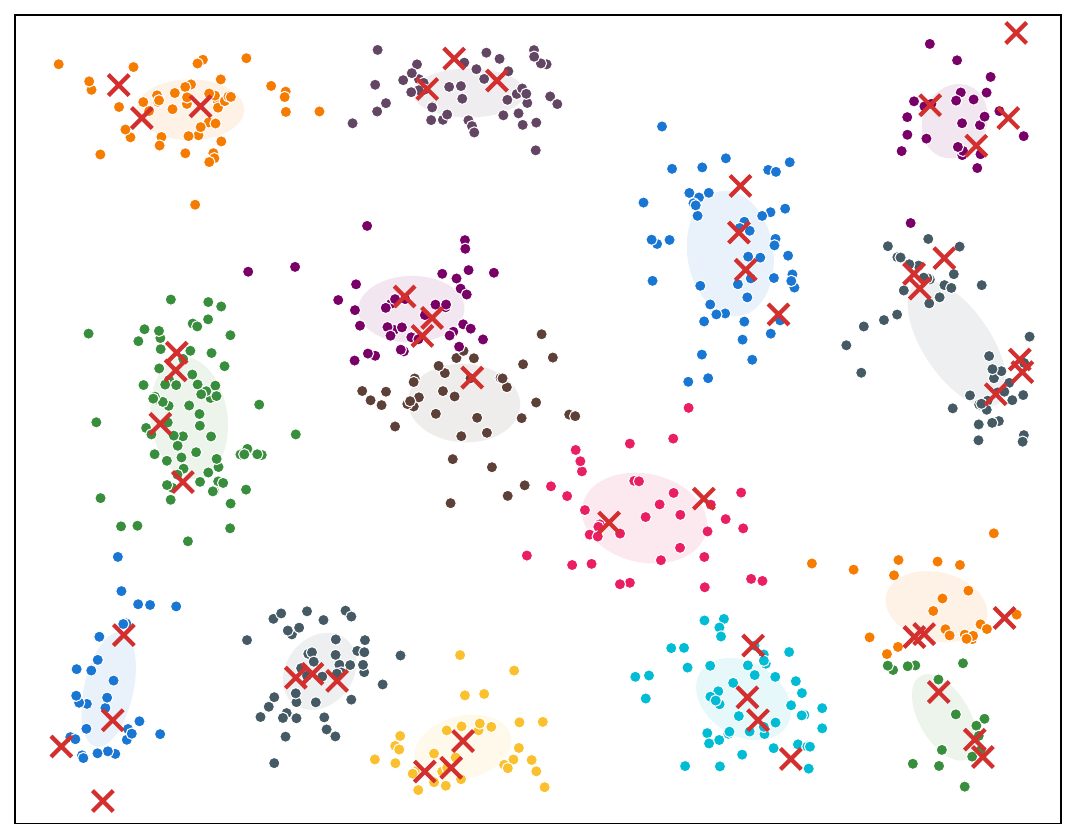} &
        \includegraphics[width=0.23\textwidth]{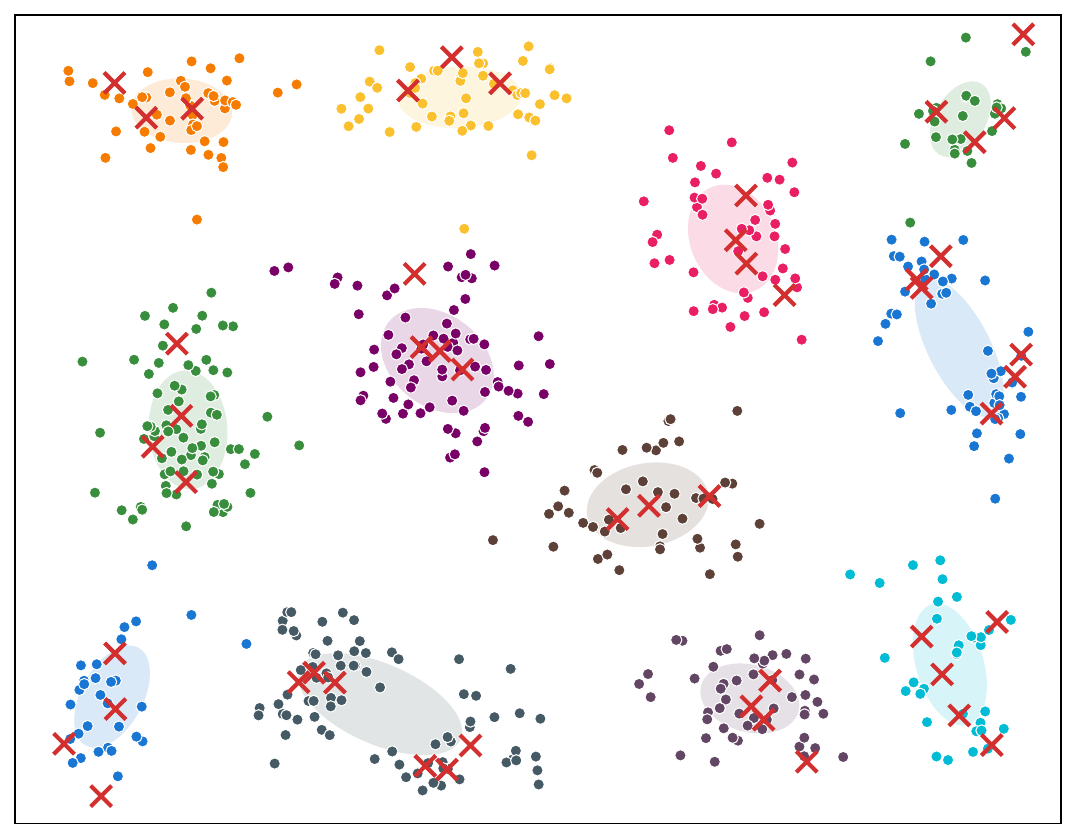} \\[1mm]
    \end{tabular}
    \begin{tabular}{c@{\hspace{0.1cm}}c@{\hspace{0.1cm}}c@{\hspace{0.1cm}}c}
        % Header row
        
        % First row with images
        \includegraphics[width=0.23\textwidth]{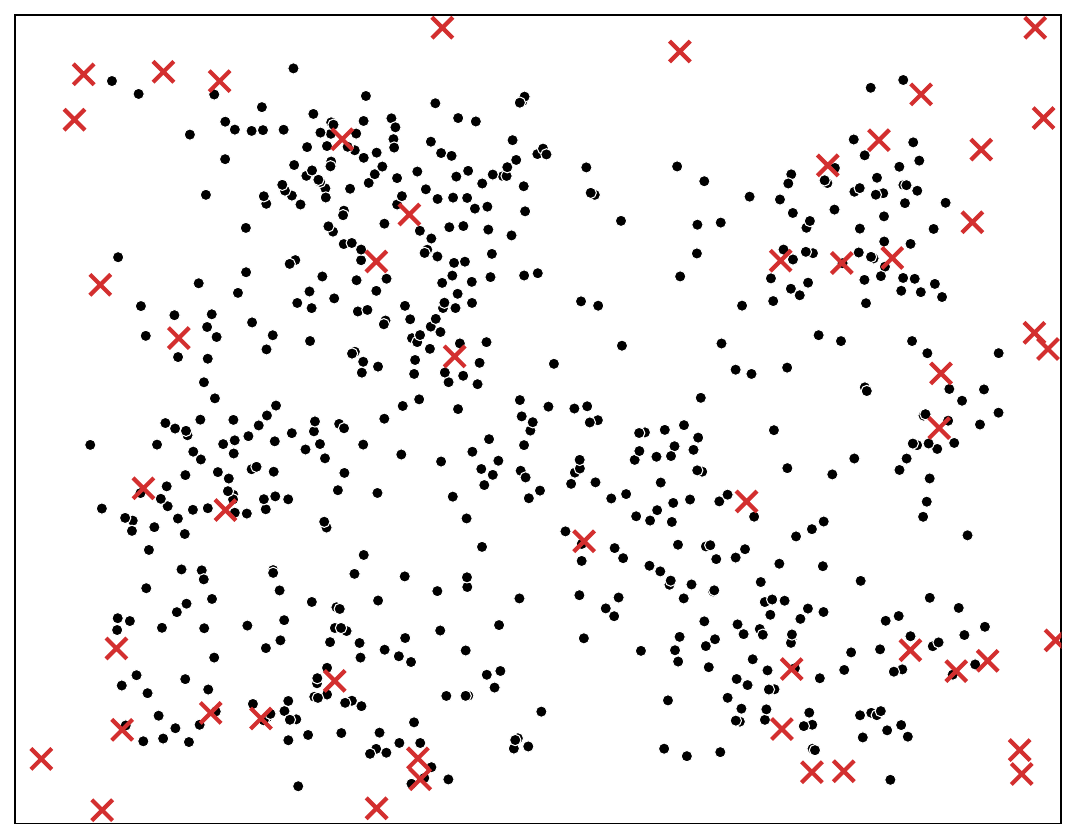} &
        \includegraphics[width=0.23\textwidth]{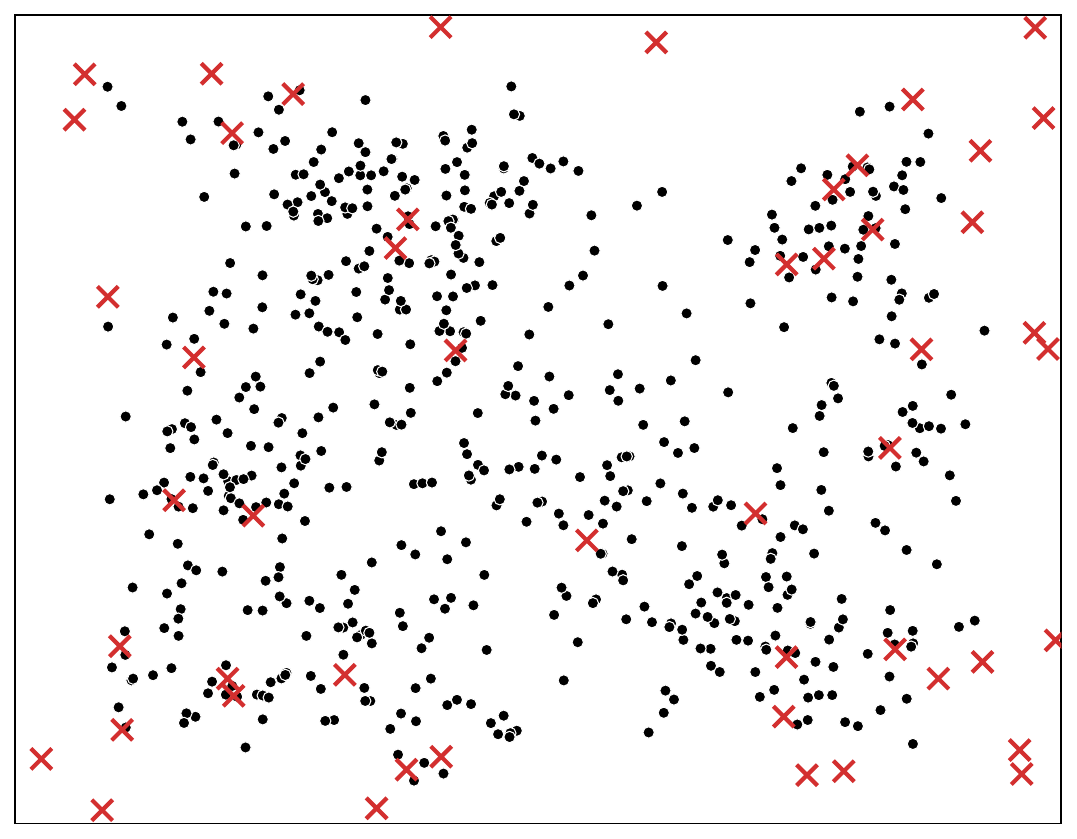} &
        \includegraphics[width=0.23\textwidth]{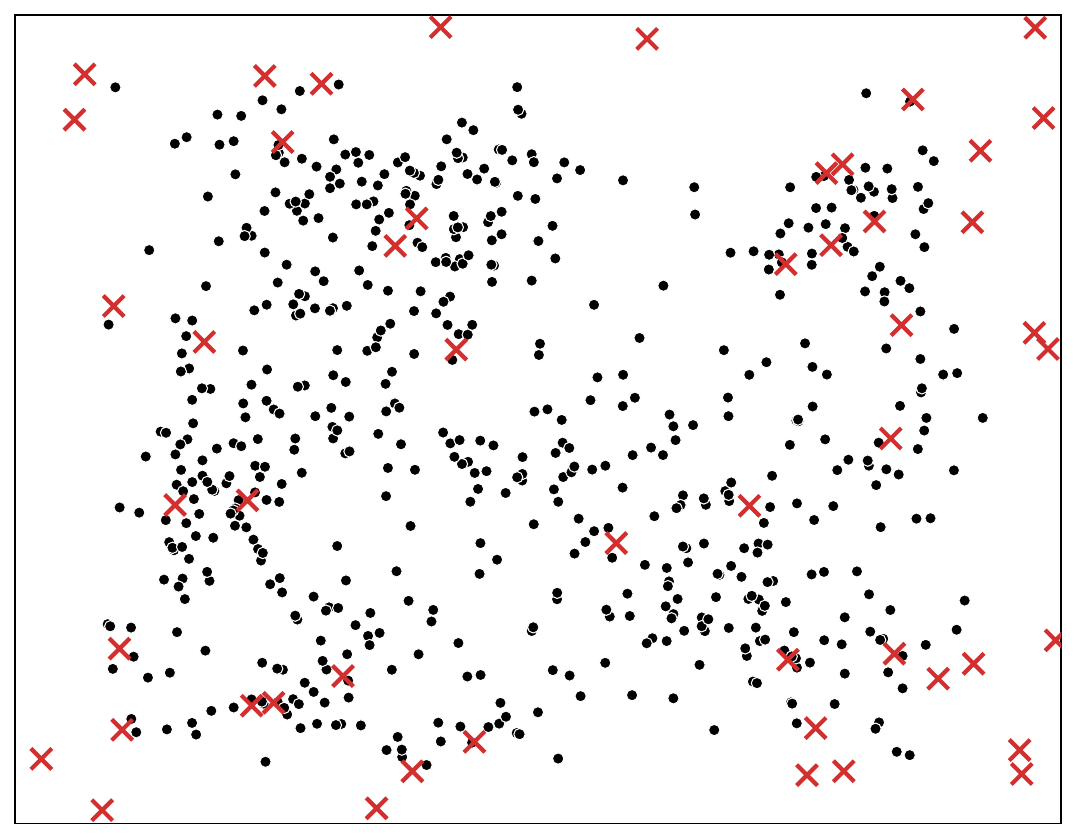} &
        \includegraphics[width=0.23\textwidth]{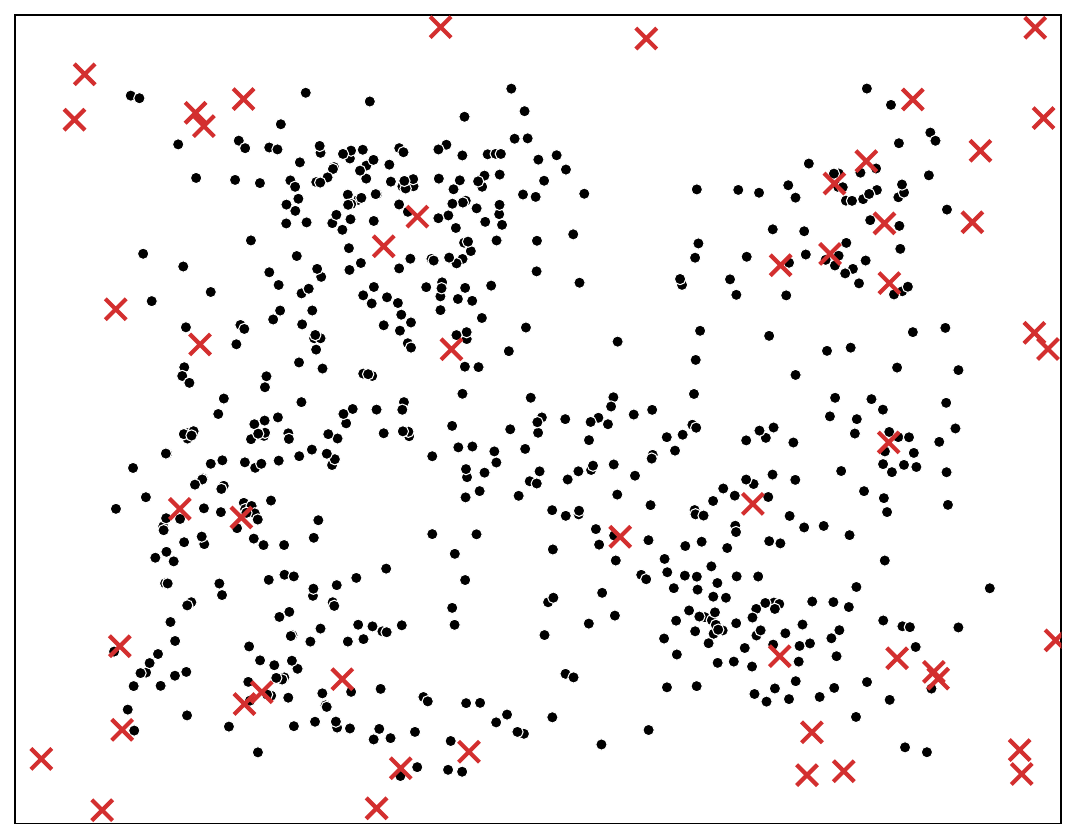} \\[1mm]
    \end{tabular}
    
    \caption{Snapshots of the opinion environment with $k=3$ simulated with a localized region $d=0$, first row,  $d=3$, second row. \textcolor{red}{$\times$} denotes the creator, $\bullet$ the users respectively.}
    \label{fig:dynamics_k3}
\end{figure*}

\begin{figure*}[htbp]
    \centering
    \setlength{\tabcolsep}{2pt} % Minimal spacing between columns
    
    \begin{tabular}{c@{\hspace{0.1cm}}c@{\hspace{0.1cm}}c@{\hspace{0.1cm}}c}
        % Header row
        $t = 5$ & $t = 20$ & $t = 50$ & $t=500$\\[1mm]
        
        % First row with images
        \includegraphics[width=0.23\textwidth]{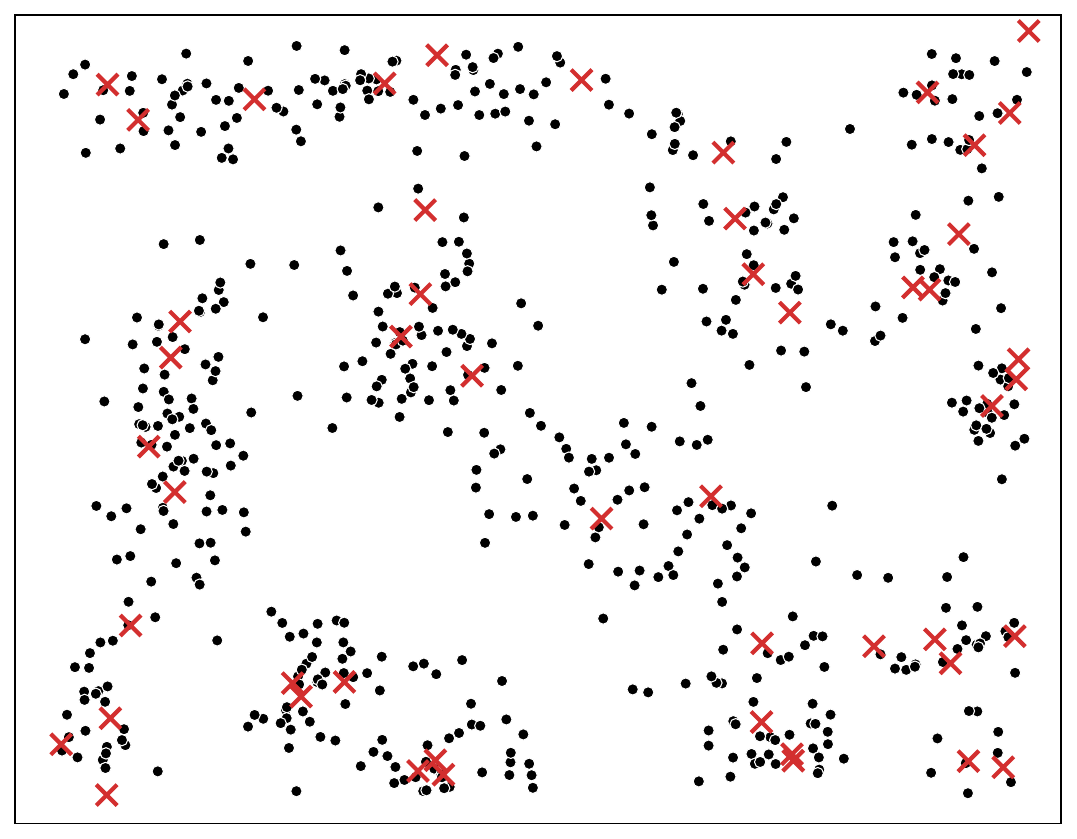} &
        \includegraphics[width=0.23\textwidth]{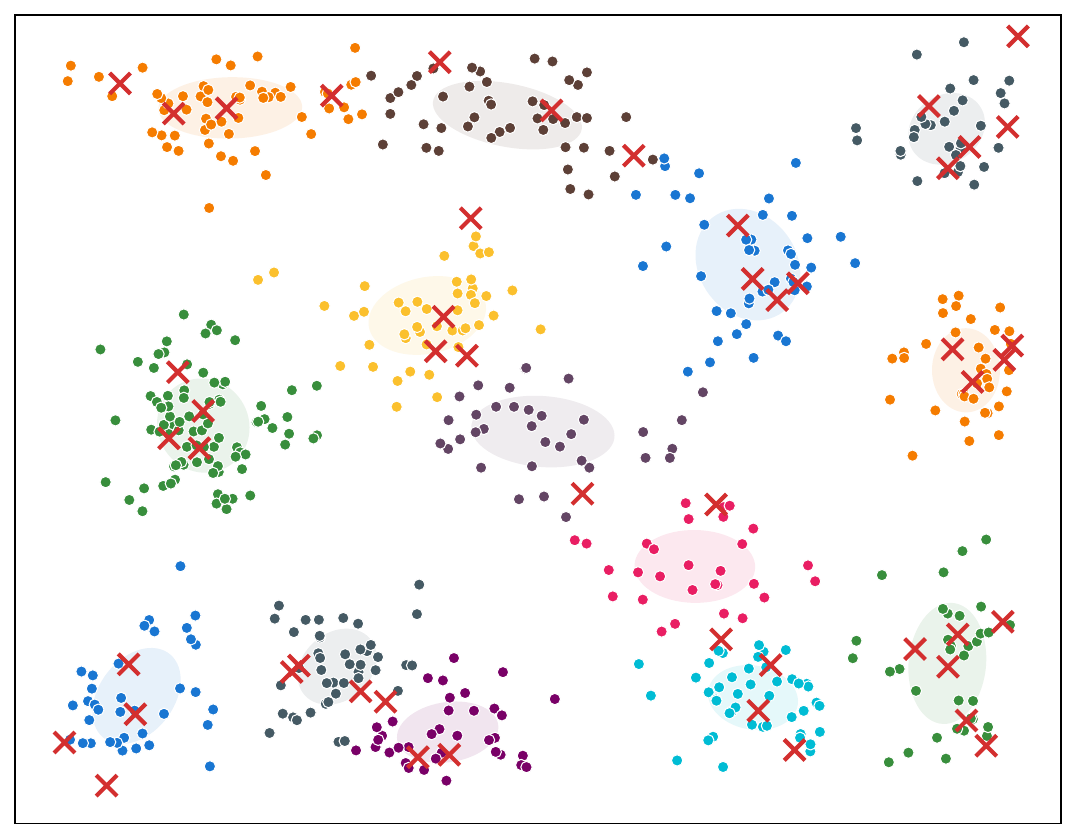} &
        \includegraphics[width=0.23\textwidth]{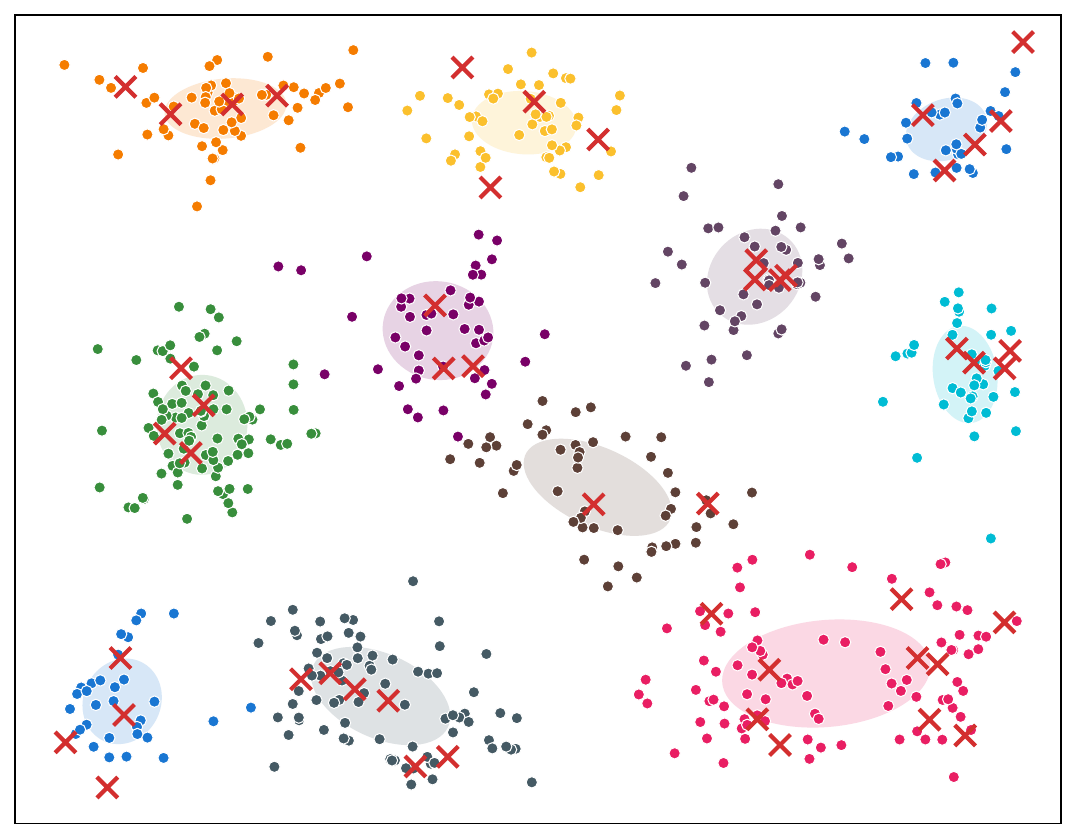} &
        \includegraphics[width=0.23\textwidth]{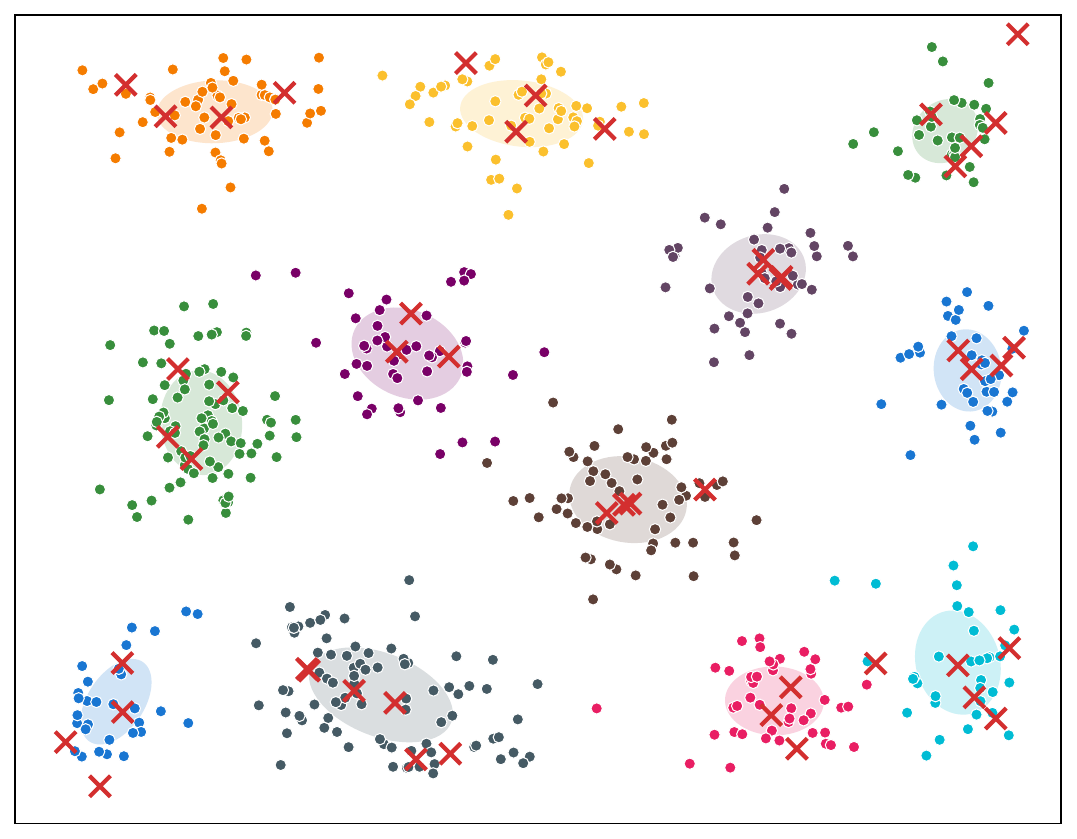} \\[1mm]
    \end{tabular}
    \begin{tabular}{c@{\hspace{0.1cm}}c@{\hspace{0.1cm}}c@{\hspace{0.1cm}}c}
        % Header row
        
        % First row with images
        \includegraphics[width=0.23\textwidth]{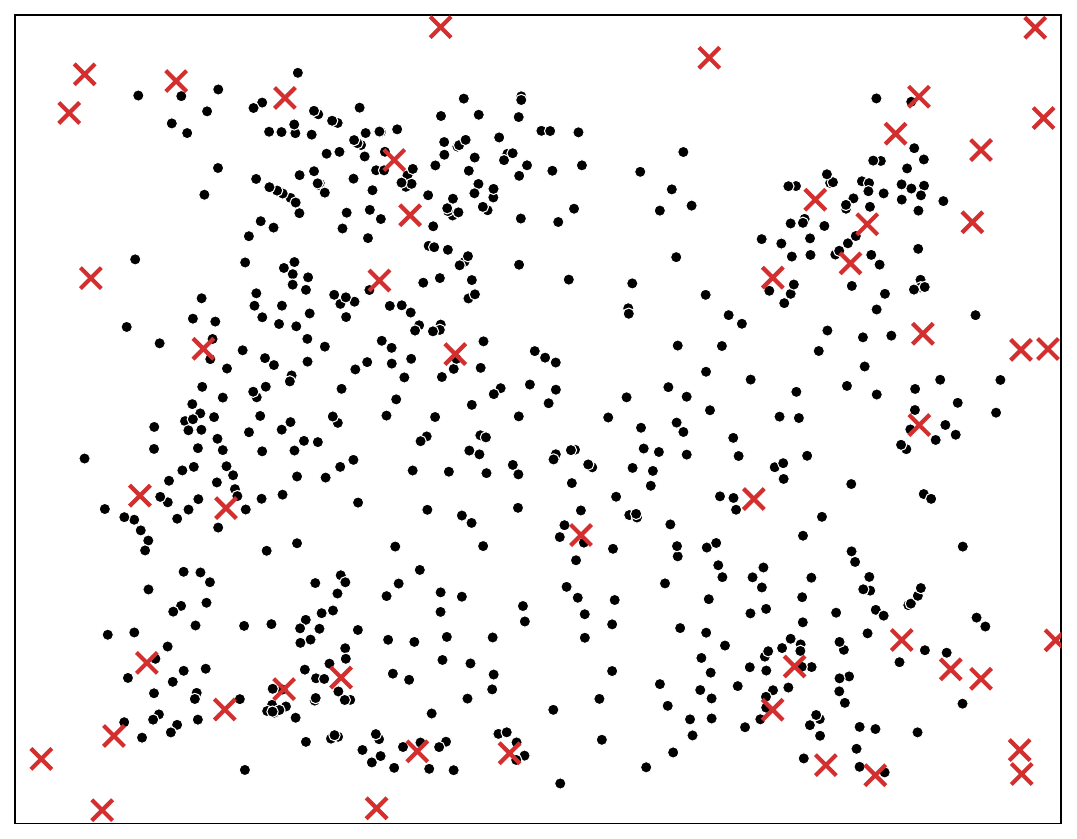} &
        \includegraphics[width=0.23\textwidth]{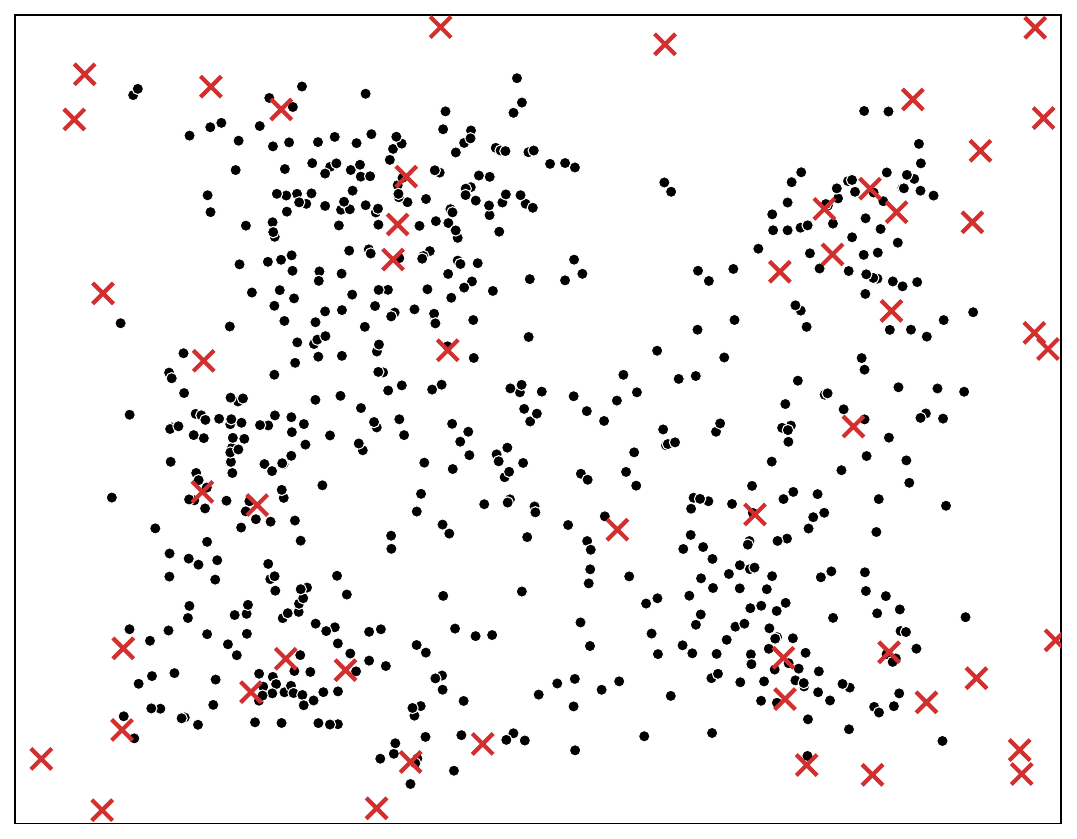} &
        \includegraphics[width=0.23\textwidth]{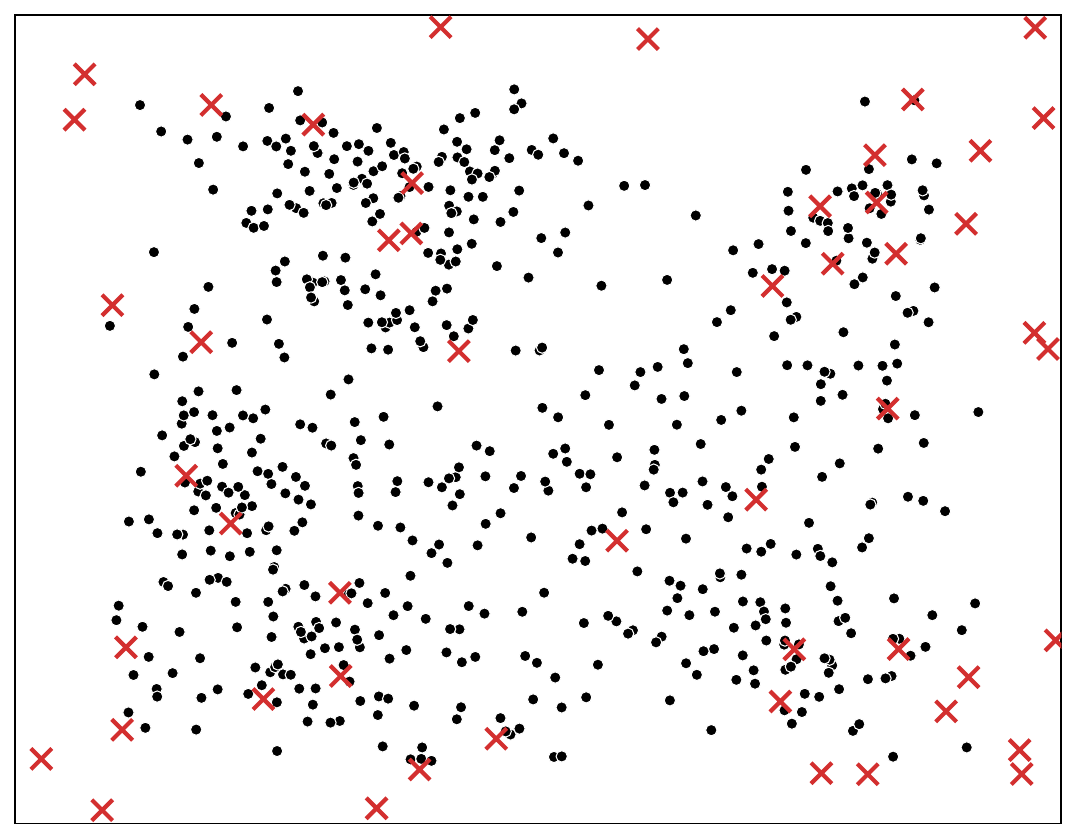} &
        \includegraphics[width=0.23\textwidth]{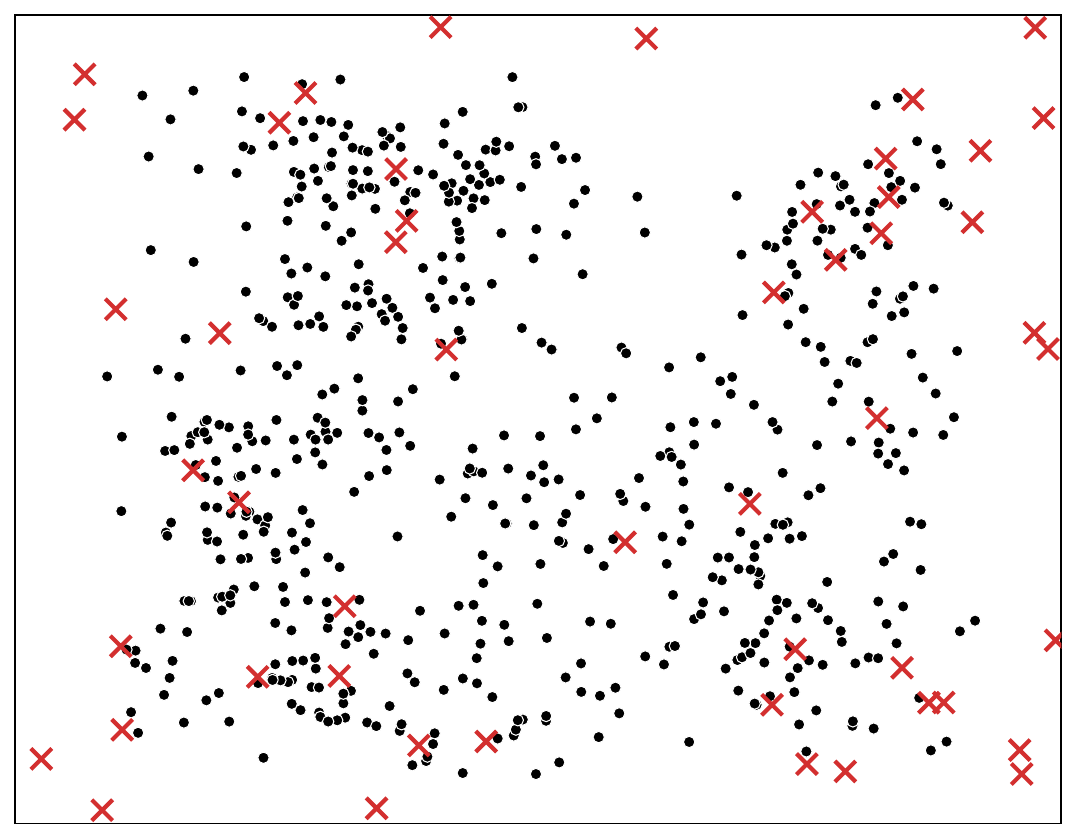} \\[1mm]
    \end{tabular}
    
    \caption{Snapshots of the opinion environment with $k=4$ simulated with a localized region $d=0$, first row,  $d=3$, second row. \textcolor{red}{$\times$} denotes the creator, $\bullet$ the users respectively.}
    \label{fig:dynamics_k4}
\end{figure*}

\section{Environment Variables for Real Dataset}
\subsection{Parameters for User-Creator Dynamics}\label{app:real_data_params}
The parameters governing the dynamics in \eqref{eqn:dynamics} are sampled independently from uniform distributions with bounds given in Table~\ref{tab:simulation_parameters2}.
\begin{table}[htbp]
\centering
\caption{Simulation Parameters for Uniform Distribution Sampling}
\label{tab:simulation_parameters2}
\begin{tabular}{lcc}
\toprule
\textbf{Parameter} & \textbf{Lower Bound} & \textbf{Upper Bound} \\
\midrule
\multicolumn{3}{l}{\textit{User Parameters}} \\
User Stubbornness $\Lambda_i$ & 0.0 & 0.5 \\
User Self-Influence $A_{ii}$& 0.5 & 0.8 \\
Recommender Influence $B_{ij}$ & 0.2 & 0.8 \\
Neighbor Influence $\sum _{j=0}^{N-1}A_{ij}$ & 0.25 & 0.5 \\
\midrule
\multicolumn{3}{l}{\textit{Creator Parameters}} \\
Creator Stubbornness  $\Gamma_j$ & 0.0 & 0.5 \\
Creator Self-Influence $E_j$& 0.5 & 0.8 \\
User-Creator Influence $C_{j}$& 0.2 & 0.8 \\
\bottomrule
\end{tabular}
\end{table}

The user-creator influence is evenly distributed among the audience set of creator $j$; specifically, for creator $j$ with audience set $\mathcal{F}_j$, each user $i \in \mathcal{F}_j$ exerts influence $C_{ji} = C_j / |\mathcal{F}_j|$. The overall social influence on user $i$ is determined by summing the influences from all neighbors. Each user is influenced by exactly one creator. Thus, referring to the FJ model in \cref{app:multi_topic}, we obtain the stochastic constraints: $A_{ii} + B_{ij} + \sum_{j =0}^{N-1} A_{ij} = 1$ for users and $C_j + E_j = 1$ for creators. 

\subsection{Ego-Facebook Dataset}\label{app:Ego_facebook}
\textcolor{black}{
The network comprises 4039 anonymous users and their social connections. The resulting graph has an average degree of $45$ with a number of neighbors reaching from $2$ to $1046$. We identify 34 community centers $\{C_1, C_2,...,C_{34}\}$ that are randomly dispersed in $[0,1]^3$ and apply spectral clustering to assign each user to one of the specified communities. After assignment, any user $i$, assigned to community center $j$, is initialized with $u_i^0 = C_j + \epsilon_i$, with $\epsilon_i \sim \mathcal{N}(0, 0.15)$.}

% \section{Results for variational environment on Real Dataset}\label{app:variational_k_real_dataset}

\newpage
\section{Compute Resources}
All simulations and experiments  were conducted on a MacBook Air equipped with an Apple M2 chip and 8 GB of unified memory, running macOS 15.6.1. 

\section{Theorem and Lemma Proofs}
 
\subsection{Proof of Theorem 1}\label{app:proof_th1}
\begin{proof}
    The proof follows by the extended Friedkin-Johnsen dynamics with
    \begin{equation}
        \begin{bmatrix}
        u^{t+1}\\
        c^{t+1}
        \end{bmatrix}
        = \begin{bmatrix}
        I-\Lambda & 0 \\
        0 & I-\Gamma
        \end{bmatrix}\underbrace{
        \begin{bmatrix}
        A & B \\
        C & E
        \end{bmatrix}}_{\Pi}\begin{bmatrix}
            u^t\\
            c^t
        \end{bmatrix} + \begin{bmatrix}
            \Lambda & 0 \\
            0 & \Gamma
        \end{bmatrix} \begin{bmatrix}
            u^0\\
            c^0
        \end{bmatrix}
    \end{equation}
    with $\Pi\in \mathbb{R}^{(N+M)\times(N+M)}$ row stochastic. 
        From \cite[Theorem 21]{opiniondyn_tutorial} we get that in the limit for $t\to \infty$ we have
        \begin{equation}
             \begin{bmatrix}
        u^{\infty}\\
        c^{\infty}
        \end{bmatrix}
         = \Bigg(\begin{bmatrix}
        I & 0 \\
        0 & I
        \end{bmatrix}-\underbrace{
        \begin{bmatrix}
        I-\Lambda & 0 \\
        0 & I-\Gamma
        \end{bmatrix}
        \begin{bmatrix}
        A & B \\
        C & E
        \end{bmatrix}}_{J}\Bigg)^{-1}
        \begin{bmatrix}
        \Lambda u^{0}\\
        \Gamma c^{0}
        \end{bmatrix}
        \end{equation}
        and from the series expansion of $(I-J)^{-1}\approx \sum_{k=0}^\infty J^k$, one gets that $u^*=\Lambda u(0)+(I-\Lambda)(Au(0)+ Bc(0))+ ((I-\Lambda)A)^2u(0)+ (I-\Lambda)B(I-\Gamma)Cu(0) + (I-\Lambda)A(I-\Lambda)Bc(0)+(I-\Lambda)B(I-\Gamma)Ec(0) + h.o.t.$. We put the focus on the first order terms $(I-\Lambda)(Au(0)+ Bc(0))$ and notice that $[A, B] \vec{1}_{N+M}=\vec{1}_N$, which in particular entails \textcolor{black}{the equality constraint: $B_i + \sum_{j}A_{i,j} = 1$. Thus, by strengthening the influence of of the social network for user $i$, characterized by $\sum_{j}A_{i,j}$, the influence of the recommended content, characterized by $B_i$ needs to go lower.}
        % with $A,B$ taking entries in $[0,1]$,
        % by strengthening the influence of the social network ($A$) the influence of the recommended content ($B$) needs to go lower. 
\end{proof}
 \subsection{Proof of Lemma 1}\label{app:proof_lemma1}
 \textcolor{black}{
 \begin{proof}
 We note that under a greedy RS for each user $i$, we have $r(u_i^t) = \text{min}_j(||c_j^t-u_i^t||_2)$, and thus $u_i^t \in \mathcal{F}_j^t$. Furthermore, given $\Gamma = I_M$, we have $c_j^0=c_j^t$, $\forall t$ and we simply write $c_j$.\newline
 \textbf{Induction Hypothesis} Assume for any user $i$, with opinion $u_i^t$, governed by the system dynamics in \eqref{eqn:dynamics}, that $u_i^t,u_i^{t-1} \in \mathcal{F}_j^t \times \mathcal{F}_j ^{t-1}$ with $(u_i^t - c_j) = \alpha^{t} (u_i^{t-1} - c_j)$, $\alpha^{t} \in [\eta,1]$, where $\eta = (1-\Lambda_i)A_{ii} \in [0,1]$. That is, $(u_i^t - c_j) $ and $ (u_i^{t-1} - c_j)$ are parallel, point in the same direction, and $||u_i^t - c_j|| \leq ||u_i^{t-1} - c_j||$. \newline
\textbf{Induction Step} Using \eqref{eqn:dynamics} we can write
 \begin{equation} 
 \begin{aligned}\label{eq:proof_1_1}
     &u_i^{t}  = (1-\Lambda_{i})(A_{ii} u_i^{t-1} + B_i c_j) +\Lambda_i u_i^0 \\
     &u_i^{t+1}  = (1-\Lambda_{i})(A_{ii} u_i^t + B_i c_j) +\Lambda_i u_i^0. \\
 \end{aligned}
 \end{equation}
 Where we used the fact that matrix $A$ is diagonal and that $u_i^t,u_i^{t-1} \in \mathcal{F}_j^t \times \mathcal{F}_j ^{t-1}$. For notational simplicity, we omit the Kronecker product in \eqref{eq:proof_1_1}. Using $\Lambda_i u_i^0= u_i^t - (1-\Lambda_{i})(A_{ii} u_i^{t-1} + B_i c_j)$, we get
 \begin{equation*}
 \begin{aligned}
     &u_i^{t+1} = -(1-\Lambda_i)A_{ii}(u_i^{t-1}-u_i^t) + u_i^t \Rightarrow\\
     & (u_i^{t+1} - c_j) = (u_i^t -c_j)-(1-\Lambda_i)A_{ii}(u_i^{t-1}- u_i^t).
\end{aligned}
 \end{equation*}
Intuitively, this states that the user opinions $u_i^{t-1}, u_i^{t}, u_i^{t+1}$, and the creators opinion $c_j$ lie on straight line. Now let $\eta = (1-\Lambda_i)A_{ii} \in [0,1]$, we can write
\begin{equation*}
\begin{aligned}
    u_i^{t+1} - c_j &= (u_i^t -c_j)-\eta(u_i^{t-1}-u_i^t) \\
    &= (u_i^t -c_j)-\eta((u_i^{t-1} -c_j)-(u_i^t-c_j)) \\
    &=(1+\eta)(u_i^t -c_j)-\eta(u_i^{t-1}-c_j) \\
    &= (1+\eta)(u_i^t -c_j)-\eta/\alpha^{t}(u_i^t - c_j)\\
    &= (1+\eta-\eta/\alpha^{t}) (u_i^{t}-c_j). 
\end{aligned}
\end{equation*}
The fourth equality follows from the induction hypothesis, namely: $(u_i^t - c_j) = \alpha^{t} (u_i^{t-1} - c_j)$. Now we observe that
\begin{equation*}
    \alpha^t \in [\eta,1] \Rightarrow (1+\eta-\eta/\alpha^{t}) \in [\eta, 1].
\end{equation*}
Now let $\alpha^{t+1} = (1+\eta-\eta/\alpha^{t})$, which leads to the desired property: $u_i^{t+1} - c_j = \alpha^{t+1} (u_i^t-c_j)$, with $\alpha^{t+1} \in [\eta,1]$, where $\eta = (1-\Lambda_i)A_{ii}$. We can use this property to further deduce that $||u_i^{t+1} - c_j ||_2 \leq ||u_i^t-c_j||_2$. Because all other creators are stubborn as well, this directly implies that under the greedy RS: $u_i^{t+1} \in F_j^{t+1}$. \newline
\textbf{Base Case} Induction now follows by:
\begin{equation*} 
 \begin{aligned}
    &u_i^0 \in  \mathcal{F}_j^0 \Rightarrow\\
     &u_i^{1}  = (1-\Lambda_{i})(A_{ii} u_i^{0} + B_i c_j) +\Lambda_i u_i^0 \\
     &u_i^{1}  = (1-\Lambda_{i})(A_{ii} u_i^{0} + (1-A_{ii}) c_j) +\Lambda_i u_i^0 \\
     &u_i^{1}  = ((1-\Lambda_{i})A_{ii} + \Lambda_i) u_i^{0} + (1-\Lambda_{i})(1-A_{ii}) c_j \Rightarrow\\
     &u_i^{1} - c_j  = ((1-\Lambda_{i})A_{ii} + \Lambda_i) (u_i^{0}-c_j).\\
 \end{aligned}
 \end{equation*}
The proof now follows by noting that: $\eta = (1-\Lambda_{i})A_{ii} \leq ((1-\Lambda_{i})A_{ii} + \Lambda_i) = \alpha^1 \leq 1$. We conclude $u_i^{1} - c_j  = \alpha^1(u_i^{0}-c_j)$, with $\alpha^1 \in [\eta,1]$, which given stubborn users, implies $u_i^0,u_i^{1} \in \mathcal{F}_j^0 \times \mathcal{F}_j ^{1}$. Thus the user partitions $\mathcal{F}_0,..,\mathcal{F}_M$ are in fact static under the greedy RS and for any user $i \in \mathcal{F}_j$, the distance to the creator $||u_i^t-c_j||_2$, decreases monotonically with $\alpha^t$.
\end{proof}}

\textcolor{black}{
\section{Related Work}\label{sec:related_work}
The better position our paper, the following table provides a schematic summary of the related work, by classifying user and creators as Static (S) or Dynamic (D) and wheather they are seen as embedded in a Network (N) or seen as Isolated (I). For the Recommender System we distinguish if it is Fixed (F), namely taken from the literature, or Explicitly Designed (ED). 
\begin{center}
\begin{tabular}{| c  | c | c | c |}
\hline
  & \textbf{Users} & \textbf{Creators} & \textbf{Recommender System} \\ 
 \hline
Us & D,N & D, I & ED\\ 
 \hline
\cite{dual_influence} & D,I & D,I&F\\   
 \hline
\cite{closed_loop_opinion} & D,I & N/A & ED \\
 \hline
\cite{micro_macro_op_effects} & D,I & N/A & ED \\
 \hline
\cite{dean2024usercreators} & D,I & D,I & ED\\
 \hline
\cite{srs_firstpaper} & S,N & N/A & F\\
 \hline
\cite{network_aware_rec_sys_via_feedback} & D,N & N/A & ED\\
 \hline
\cite{topic_diversification} & S,I & N/A & ED \\
 \hline
\cite{learning_to_recommend} & S, I& N/A & ED\\
 \hline
\cite{diversified_recommendations} &S,I &N/A & ED \\
 \hline
\cite{avoiding_monotony} & S,I & N/A & ED\\
\hline
\end{tabular}\label{tab:related}
\end{center}
}

\end{document}